\numberwithin{equation}{section}
\newtheorem{theorem}{Theorem}[section]
\newtheorem{lemma}[theorem]{Lemma}
\newtheorem{proposition}[theorem]{Proposition}
\newtheorem{remark}[theorem]{Remark}
\definecolor{light-gray}{gray}{0.75}
\title{On the Open Question of The Tracy-Widom Distribution of $\beta$-Ensemble With $\beta=6$}
\author{Li YuQi}
\begin{document}
\maketitle

{\large \bf Abstract:}
We determine completely the Tracy-Widom distribution for Dyson’s $\beta$-ensemble with $\beta=6$.
The problem of the Tracy-Widom distribution of $\beta$-ensemble for general $\beta>0$ 
has been reduced to find out a bounded solution of  the Bloemendal-Vir\'{a}g equation  with a specified boundary.
Rumanov proposed a  Lax pair approach to solve the Bloemendal-Vir\'{a}g equation for even integer $\beta$.
He  also specially studied the $\beta=6$ case with his approach and found a second order nonlinear ordinary
differential equation (ODE) for the logarithmic derivative of the Tracy–Widom distribution  for $\beta=6$.
Grava et al. continued to study $\beta=6$ 
and found  Rumanov's Lax pair is gauge equivalent to that of Painlev\'{e} II in this case.
They started with Rumanov's basic idea and came down to  two auxiliary functions $\alpha(t)$ and $q_2(t)$, 
which satisfy a coupled first-order ODE.
The open question by  Grava et al.  asks whether a global smooth solution of the  ODE with boundary condition 
$\alpha(\infty)=0$ and $q_2(\infty)=-1$ exists.
By studying the linear equation that is associated with $q_2$ and $\alpha$,
we  give a positive answer to the open question.
Moreover, we find that the solutions of the ODE with  $\alpha(\infty)=0$ and $q_2(\infty)=-1$
are parameterized by  $c_1$ and $c_2$. 
Not all $c_1$ and $c_2$ give global smooth solutions. 
But if $(c_1, c_2) \in R_{smooth}$,  where $R_{smooth}$ is  a large region containing $(0,0)$, they do give.
We prove the constructed solution is a bounded solution of 
the Bloemendal-Vir\'{a}g equation with the required boundary condition  if and only if  $(c_1, c_2)=(0,0)$.

\section{Introduction}

In the one dimensional case, the interaction energy of two point charges is
$$k_e \ln (\frac{r_0}{|x_A-x_B|}), $$
where $k_e$ is the electric force constant, $r_0$ is the  distance that the interaction energy is $0$,
$x_A$ and $x_B$ are the positions of the two point charges.
Dyson's Coulomb gas model is  $N$  particles with like charges, i.e., $k_e>0$,  in an external field $V=V(x)$. 
By the canonical ensemble, the probability 
that the first particle is in $[x_1,x_1+dx_1]$, $\cdots$,
and that $N$-th particle is in $[x_N, x_N+d x_N]$, is
\begin{eqnarray*}
p(x_1,x_2, \cdots, x_N) dx_1 dx_2\cdots dx_N
&=&\frac{1}{Z_N} e^{-\frac{1}{k_B T} \left( \sum \limits_{1 \le i<j\le N}  k_e \ln \left(\frac{r_0}{|x_i-x_j|} \right)+ 
\sum \limits_{j=1}^N  V(x_j) \right)} dx_1  dx_2 \cdots dx_N \\
&=&\frac{1}{Z_N} \left( \prod_{1 \le i <j \le N} \left| \frac{x_i-x_j}{r_0} \right|^{ \frac{k_e}{k_B T}} \right) 
e^{-\frac{1}{k_B T} \sum\limits_{j=1}^N V(x_j)}  
dx_1 dx_2 \cdots dx_N,
\end{eqnarray*}
where $k_B$ is the Boltzmann constant, $T$ is the temperature, and $Z_N$ is the normalization constant.
Here we assume $V$ is Gaussian, i.e., 
\begin{eqnarray}
V(x)=\frac{1}{2} \nu x^2. \label{HarmonicPotential}
\end{eqnarray}
Let $$\beta=\frac{k_e}{k_B T}, \quad \lambda=\sqrt{\frac{\nu}{k_e}} x.$$
Then the particle distribution becomes
\begin{eqnarray}
	\tilde p(\lambda_1,\lambda_2, \cdots, \lambda_N) d\lambda_1 d\lambda_2\cdots d\lambda_N
	&=&\frac{1}{\tilde Z_N} \left( \prod_{1 \le i <j \le N}|\lambda_i-\lambda_j|^\beta \right) 
	e^{-\frac{\beta}{2} \sum\limits_{j=1}^N \lambda_j^2}  
	d\lambda_1 d\lambda_2\cdots d\lambda_N. \label{beta-ensemble}
\end{eqnarray}
A system  of random variables $\lambda_1, \lambda_2, \cdots, \lambda_N$ with distribution (\ref{beta-ensemble})
is called the $\beta$-ensemble.
The $\beta$-ensemble, $\beta=1, 2, 4$, describes  the joint density of eigenvalues of 
the three classical matrix models, i.e., the Gaussian orthogonal ensemble(GOE), Gaussian unitary ensemble(GUE) 
and Gaussian symplectic ensemble(GSE), respectively.
For general $\beta>0$, the $\beta$-ensemble can be realised as the joint density of 
eigenvalues of the {\sl spiked $\beta$-Hermite} matrix ensemble \cite{BV}.
$\beta$-ensemble for general $\beta$ has also other physical applications,
for example, it can be mapped to a chiral Liouville theory with central charge \cite{DV}.
Also, in some sense, the harmonic potential (\ref{HarmonicPotential}) is not so serious a limitation
since the universality of the $\beta$-ensemble had been proved by Bourgade et al. \cite{BEY}.

The interesting case is the thermodynamic limit $N \rightarrow \infty$.
Almost all particles distribute in $[-\sqrt{2 N}, \sqrt{2 N}]$ obeying  the Wigner semicircle law
with an  approximate density $\sigma(\lambda)=\pi^{-1} \sqrt{2 N-\lambda^2} $ \cite{Mehta}, 
i.e.,  the particle  number in $[\lambda,\lambda+d \lambda]$ is about $\sigma(\lambda) d\lambda$.
But few particles may lie outside $[-\sqrt{2 N}, \sqrt{2 N}]$.
It is proved that near the edge a proper scaling limit is the soft edge probability distribution \cite{RRV}
$$F_\beta(t)=\lim_{N \rightarrow \infty} E_{\beta N}^\mathrm{Soft}
\left( 0; \left( \sqrt{2 N}+\frac{t}{\sqrt{2}N^{1/6}} ,\infty \right) \right), $$
where
$$ E_{\beta N}^\mathrm{Soft}
\left( 0; \left( t ,\infty \right) \right) =\int_{\lambda_N=-\infty}^t \cdots 
\int_{\lambda_1=-\infty}^t \tilde{p}(\lambda_1, \cdots, \lambda_N)
 d\lambda_1 \cdots d \lambda_N .$$
$F_\beta(t)$ is called the Tracy-Widom distribution.

The  explicit expressions for $F_\beta(t)$ for $\beta=1,2,4$ are classical \cite{TW1, TW2, RRV}
\begin{eqnarray}
F_\beta(t)=\left\{ \begin{array}{ll}
\exp\left(\frac{1}{2}\int_{\infty}^t (s-t) u(s)^2 ds \right) \exp\left(\frac{1}{2}\int_{\infty}^t u(s) ds \right),
& \beta=1, \\
\exp\left( \int_{\infty}^t (s-t) u(s)^2 ds \right),  & \beta=2, \\
\exp\left(\frac{1}{2}\int_{\infty}^{2^{2/3} t} (s-2^{2/3} t) u(s)^2 ds \right) 
\cosh \left(\frac{1}{2}\int_{\infty}^{2^{2/3} t}u(s)ds \right), &\beta=4.
\end{array}\right.\nonumber
\end{eqnarray}
Also $F_2(t)$  has a Fredholm determinant representation
$F_2(t)=\det (I-A_t)$, where $A_t$ is the Fredholm integral operator on $(t, \infty)$ with the Airy kernel
$\frac{\mathrm{Ai}(x) \mathrm{Ai}'(y)-\mathrm{Ai}'(x) \mathrm{Ai}(y)}{x-y}$.

The expansions of $F_\beta(t)$ at $t=-\infty$ are of special interests.
In \cite{TW1} and \cite{TW2}, Tracy and Widom obtained and proved $F_\beta(t)$  for $\beta=1, 2, 4$ without the constant term.
They also conjectured the values of the constant term $c_0$ for  $\beta=1, 2, 4$.
By the Deift-Zhou nonlinear steepest descent method \cite{DZ},
Deift et al. \cite{DIK} proved the constant term for $\beta=2$ 
and Baik et al. \cite{BBD} proved the constant terms for $\beta=1,2 ,4$. 
Finally Borot et al. \cite{BEMN} derived an amazing asymptotic expression of $F_\beta(t)$  for general $\beta>0$ at $t=-\infty$ 
by the loop-equation technique.
Their asymptotic expression is
\begin{eqnarray}
F_\beta(t)=\exp \left(-\frac{\beta}{24}|t|^3+\frac{\sqrt{2}}{3}(\frac{\beta}{2}-1)|t|^\frac{3}{2}+
\frac{1}{8}(\frac{\beta}{2}+\frac{2}{\beta}-3) \ln|t|+c_\beta+O(|t|^{-\frac{3}{2}}) \right) ,
\label{BEMN-asymp}
\end{eqnarray}
where the constant term $c_\beta$ is
\begin{eqnarray}
c_\beta&=&\frac{\gamma_{E}}{6 \beta}+\left(\frac{17}{8}-\frac{25}{24}(\frac{\beta}{2}+\frac{2}{\beta}) \right) \ln 2
-\frac{1}{2} \ln(\frac{\beta}{2})-\frac{\ln (2 \pi)}{4}+\frac{\beta}{2}\left(\frac{1}{12}-\zeta'(-1)\right)\nonumber\\
&&+\int_0^\infty \frac{1}{t^2}\frac{1}{e^{\beta t/2}-1} \left(\frac{t}{e^t-1}-1+\frac{t}{2}-\frac{t^2}{12}  \right) dt.
\label{BEMN-c0}
\end{eqnarray}
Here $\gamma_E$ denotes the  Euler's constant  and $\zeta$ refers to the Riemann zeta function. 
Note  the prime ' will always be used to denote derivative.

But the asymptotics (\ref{BEMN-asymp}) is only valid  at $t=-\infty$ and can not be continued to finite $t$.
In fact,  even with infinite terms  (\ref{BEMN-asymp})  can not determine $F_\beta(t)$, see Theorem \ref{theorem-MultiSolution} below.
So we still need the explicit expression  for $F_\beta(t)$ beyond $\beta=1, 2, 4$.
Following the pioneering work of Dumitriu and Edelman \cite{DE},
Bloemendal and Vir\'{a}g \cite{BV} finally found out a representation of $F_\beta(t)$ in terms of the solution of 
a linear  partial differential equation(PDE).
They represent $F_\beta(t)$  by the limit of $F(\beta; x,t)$
\begin{eqnarray}
F_\beta(t)=\lim_{x \rightarrow \infty} F(\beta; x,t), \label{BV-Represent}
\end{eqnarray}
where $F(\beta; x,t)$ is a special solution of the linear PDE
\begin{eqnarray}
\frac{\partial F}{\partial t}+\frac{2}{\beta} \frac{\partial^2 F}{\partial x^2}+(t-x^2) \frac{\partial F}{\partial x}=0.
\label{BVeq}
\end{eqnarray}
More precisely, they proved the following theorem.
\begin{theorem} \cite{BV} \label{thm-BV} \quad
PDE (\ref{BVeq}) with boundary 
\begin{eqnarray} \left\{ \begin{array}{c} 
F(\beta; x,t)  \xlongrightarrow{x \rightarrow \infty, t \rightarrow \infty}1\\ 
F(\beta; x,t)  \xlongrightarrow{x \rightarrow -\infty, t\, \mathrm{fixed}}0
\end{array}\right. \label{BV-boundary}
\end{eqnarray}
has a unique bounded smooth solution. $F_\beta(t)$ is represented by  the solution through (\ref{BV-Represent}).
\end{theorem}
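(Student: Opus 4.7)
The plan is to construct $F(\beta;x,t)$ probabilistically as a non-explosion probability for a diffusion attached to the stochastic Airy operator, read off PDE (\ref{BVeq}) as its backward Kolmogorov equation, and then prove uniqueness by a maximum-principle/barrier argument on the unbounded $(x,t)$-plane. The representation (\ref{BV-Represent}) will then be a byproduct of the construction.

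First I would interpret (\ref{BVeq}) as the backward Kolmogorov equation for the It\^{o} diffusion $dX_s = (s-X_s^2)\,ds + (2/\sqrt{\beta})\,dW_s$; after identifying $s$ with a spatial coordinate and $X$ with $\psi'/\psi$, this is the Riccati substitution for the stochastic Airy operator $\mathcal{H}_\beta = -\partial_y^2 + y + (2/\sqrt{\beta})B'(y)$. Define
\[
F(\beta; x, t) := \mathbb{P}\bigl(X_\cdot \text{ never explodes to } -\infty \,\big|\, X_t = x\bigr).
\]
Boundedness by $1$ is automatic, and It\^{o}'s formula delivers (\ref{BVeq}) as the backward equation in $t$. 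For $x \to -\infty$ at fixed $t$ the drift is strongly negative and forces finite-time explosion, so $F \to 0$; for $x, t \to +\infty$ the diffusion is attracted to the deterministic equilibrium $x \sim \sqrt{t}$, and a coupling with an Ornstein--Uhlenbeck process centered there gives $F \to 1$. The limit $x \to \infty$ at fixed $t$ matches the Dirichlet boundary condition defining the spectrum of $\mathcal{H}_\beta$, so by the Ram\'{i}rez--Rider--Vir\'{a}g theorem $\lim_{x \to \infty} F(\beta; x, t) = F_\beta(t)$, i.e.\ (\ref{BV-Represent}); this identification can alternatively be obtained as the $N \to \infty$ limit of the spiked $\beta$-Hermite tridiagonal model of Dumitriu--Edelman.

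Smoothness in the interior is routine: the operator is uniformly elliptic in $x$ on compacts with smooth, at-most-polynomially-growing coefficients, so Schauder estimates deliver $C^\infty$ regularity. Uniqueness is the technical core. If $F_1, F_2$ are two bounded smooth solutions sharing the boundary data (\ref{BV-boundary}), their difference $G$ satisfies (\ref{BVeq}) with $G \to 0$ as $x \to \pm\infty$; however the $(x,t)$-plane is unbounded and the drift $t - x^2$ is unbounded, so the standard maximum principle does not apply directly. I would construct an Airy-type barrier $\Phi(x,t) = \exp\bigl(\varepsilon(x^3/3 - tx)\bigr)$ satisfying a one-sided differential inequality under the operator of (\ref{BVeq}); combined with a weighted maximum principle on rectangles $[-R,R]\times[T_1,T_2]$, this forces $|G| \leq \eta \Phi$ for every $\eta > 0$ after $R, T_2 \to \infty$, giving $G \equiv 0$. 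The hard part is establishing sharp enough tail decay of $F$ as $x \to -\infty$ and of $1-F$ as $(x,t) \to +\infty$ to close the barrier comparison, which is where the quadratic drift and the Gaussian tails of the driving noise must be exploited carefully.
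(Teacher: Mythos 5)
This theorem is a citation: the paper states it with \cite{BV} and gives no proof of its own, taking the Bloemendal--Vir\'{a}g result as an external input. So there is no ``paper's own proof'' to compare against; what you have written is a reconstruction of the original Bloemendal--Vir\'{a}g argument, not a competitor to anything in this paper.

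Your reconstruction of the \emph{construction} of $F(\beta;x,t)$ is in the right spirit: interpret (\ref{BVeq}) as the backward Kolmogorov equation for the Riccati diffusion $dX_s=(s-X_s^2)\,ds + (2/\sqrt{\beta})\,dW_s$ coming from the stochastic Airy operator, set $F$ equal to the probability of non-explosion, and read off (\ref{BV-Represent}) from the Ram\'{i}rez--Rider--Vir\'{a}g description of the soft edge together with the Dumitriu--Edelman tridiagonal model. The boundary behavior as $x\to-\infty$ (immediate explosion) and as $x,t\to+\infty$ (trapping near the stable branch $x\sim\sqrt{t}$) are also the right heuristics. This is essentially what Bloemendal and Vir\'{a}g do.

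The uniqueness sketch, however, has a genuine gap. Your proposed barrier $\Phi(x,t)=\exp\bigl(\varepsilon(x^3/3-tx)\bigr)$ gives $L\Phi = \varepsilon\Phi\bigl[x(\tfrac{4}{\beta}-1)+(x^2-t)^2(\tfrac{2\varepsilon}{\beta}-1)\bigr]$ for $L:=\partial_t+\tfrac{2}{\beta}\partial_x^2+(t-x^2)\partial_x$, and the sign of this is \emph{not} controlled: for $\beta>4$ it fails near the branch $x=-\sqrt{t}$ (and symmetrically for $\beta<4$ near $x=+\sqrt{t}$), exactly where the quartic term vanishes. More seriously, (\ref{BVeq}) is a \emph{terminal-value} (backward) problem, so any comparison on a rectangle $[-R,R]\times[T_1,T_2]$ must control the difference $G$ on the entire terminal edge $\{t=T_2\}$, not only on the diagonal $x,t\to\infty$. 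The hypotheses (\ref{BV-boundary}) say nothing about $G$ at fixed moderate $x$ and $t\to\infty$, and $\Phi(x,T_2)=1$ at $x=0$, so the barrier does not dominate a merely bounded $G$ there. This is precisely why the Bloemendal--Vir\'{a}g uniqueness proof is \emph{probabilistic} rather than PDE-theoretic: any bounded solution $F$ gives a bounded local martingale $s\mapsto F(X_s,s)$, and one shows (using the dichotomy that the diffusion either explodes to $-\infty$ in finite time or is eventually trapped near $x\sim\sqrt{s}$) that this martingale converges a.s.\ to the non-explosion indicator, forcing $F$ to equal the probabilistic solution by optional stopping. If you want a purely analytic proof you would need a sharper barrier and an a priori decay estimate on $G$ along $t\to\infty$ at bounded $x$, neither of which is supplied here.

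Finally, a small point of hygiene: since the paper does not prove this theorem, it would be misleading for your write-up to be inserted as if it filled a gap in the present paper; it should be attributed to \cite{BV} and framed as a sketch of that reference's argument.
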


So the remaining problem is to find a bounded solution of the Bloemendal-Vir\'{a}g equation (\ref{BVeq}) 
with the boundary condition (\ref{BV-boundary}).
In \cite{R1},  Rumanov proposed a Lax representation of (\ref{BVeq}) for even integer $\beta$.
Let  
$$\Psi_x=\hat L \Psi, \quad \Psi_t=\hat B \Psi $$ 
be Rumanov's Lax pair,
where $\hat L$ and $\hat B$ are $2 \times 2$ matrices.
$\Psi$  can denote  both a $2 \times 2$ non-singular matrix or a $2 \times 1$ column  vector.
Here we assume it is a column vector 
\begin{eqnarray}
\Psi(x,t)=\left( \begin{array}{c} \mathcal{F}(x,t)\\ \mathcal{G}(x,t) \end{array}\right). \label{RumanovPsi}
\end{eqnarray}
The key of  Rumanov's scheme  is to let $\mathcal{F}$ satisfy the rescaled Bloemendal-Vir\'{a}g equation
\begin{eqnarray}
\frac{\beta}{2} \frac{\partial \mathcal{F}}{\partial t}+\frac{\partial^2 \mathcal{F} }{\partial x^2}
+(t-x^2) \frac{\partial \mathcal{F}}{\partial x}=0. \label{RumanovScaling}
\end{eqnarray}
Combining some other considerations, Rumanov concluded
\begin{eqnarray}
F(\beta; x,t)=\mathcal{F}\left( \big( \frac{\beta}{2} \big)^\frac{1}{3} x,\big(\frac{\beta}{2} \big)^\frac{2}{3} t \right).
\label{Rumanov-Conj}
\end{eqnarray}
In \cite{R2}, Rumanov studied the $\beta=6$ case and expressed $F_\beta(t)$ by an auxiliary function $\eta(t)$ 
and the solution of Painlev\'{e} II 
\begin{eqnarray}
u''(t)=t u(t)+2 u(t)^3,\label{PII}
\end{eqnarray}
which he deduced as the Hastings-McLeod solution \cite{HM}.
The auxiliary function $\eta(t)$ satisfies a second order ODE that can be linearized.

Grava, Its, Kapaev and  Mezzadri \cite{GIKM} found  Rumanov's Lax pair for $\beta=6$ is gauge equivalent to the Lax pair of Painlev\'{e} II.
Their gauge transformation is of form
\begin{eqnarray}
\Psi(x,t)=e^{\frac{x^3}{6}-\frac{x t}{2} } \kappa(t) 
\left(  \begin{array}{cc}
\frac{1+q_2(t)}{2}x-\alpha(t)&-1\\
\frac{1-q_2^2(t)}{4}&0
\end{array}\right) \psi(t)^{\sigma_3} \psi_0(x,t),\label{GIKM-Gauge}
\end{eqnarray}
where $\kappa(t)$ and $\psi(t)$ are scalar functions,
and $\psi_0(x,t)$ is the $2 \times 1$ column vector of the wave function of Painlev\'{e} II,
and $\sigma_3$ is the Pauli matrix
$\sigma_3=\left( \begin{array}{cc}1&0\\0&-1 \end{array}\right) $.
They also suggested $\psi=- \frac{\mathrm{i}}{\sqrt{u}}$,  
where $u$ is the Hastings-McLeod solution of Painlev\'{e} II.
Then they showed  $q_2(t)$ and $\alpha(t)$ satisfy the ODE
\begin{eqnarray}
&&q_2'(t)=\frac{2}{3} \alpha q_2 +\frac{u'}{u} \frac{(1+q_2)(2-q_2)}{3}, \label{q2alpha-deqsA}\\
&&\alpha'(t)=\alpha \left( \frac{2}{3} \alpha+\frac{u'}{u} \frac{2-q_2}{3} \right)-\frac{t}{6}(1+q_2)
-\frac{u^2}{3} (3+q_2).
\label{q2alpha-deqs}
\end{eqnarray}
Moreover, they proved that
\begin{eqnarray}
q_2(t) \xlongrightarrow{t \rightarrow \infty}-1+o(1), \quad \alpha(t) \xlongrightarrow{t \rightarrow \infty} o(1).
\label{GIKM-pinf}
\end{eqnarray}
It is straightforward to verify that
\begin{eqnarray}
&&q_2(t) = \frac{1}{\sqrt{2}} (-t)^{-3/2} +\frac{21}{8} (-t)^{-3}
+\frac{1707}{64 \sqrt{2}} (-t)^{-9/2}
+\frac{49123}{256} (-t)^{-6} 
+\cdots, \label{NinfSolution-A}\\
&&\alpha(t)= \frac{1}{\sqrt{2}} (-t)^{1/2}-\frac{1}{8} (-t)^{-1}
-\frac{37}{64 \sqrt{2}} (-t)^{-5/2}
-\frac{373}{256} (-t)^{-4}
+\cdots\label{NinfSolution}
\end{eqnarray}
is an  asymptotic solution of Equation (\ref{q2alpha-deqsA})-(\ref{q2alpha-deqs}) at $t=-\infty$.

The open question in \cite{GIKM} contains two parts:
\begin{itemize}
\item[(1).] Prove the system (\ref{q2alpha-deqsA})-(\ref{q2alpha-deqs}) with (\ref{GIKM-pinf}) has a smooth solution 
on $(-\infty, \infty)$.
\item[(2).] Assume (1) succeeds.
Prove the solution in (1) has expansions (\ref{NinfSolution-A})-(\ref{NinfSolution}) at $t=-\infty$.
\end{itemize}

In this paper we will show that 
there are a $2$-parameter family of solutions of (\ref{q2alpha-deqsA})-(\ref{q2alpha-deqs}) that satisfy 
$q_2 \xlongrightarrow{t \rightarrow \infty} -1$ and
$\alpha \xlongrightarrow{t \rightarrow \infty}  0$. 
More precisely, at $t=\infty$ these solutions have   asymptotics  
\begin{eqnarray}
&&q_2(t)=-1+ \mathrm{Ai}(t) \left[  c_1 \, \mathrm{Bi}(\frac{t}{3^{2/3}})
+\ c_2 \,  \mathrm{Ai}(\frac{t}{3^{2/3}}) +c_1^2 \tilde M_2(t) \right]+o\left(e^{-\frac{4}{3} t^{3/2}} \right),  \label{PinfSolution-A}\\
&&\alpha(t)=-\frac{3^{1/3}}{2} \mathrm{Ai}(t)
\left[ c_1   \mathrm{Bi}'(\frac{t}{3^{2/3}})+ c_2 \mathrm{Ai}'(\frac{t}{3^{2/3}}) 
+c_1^2  \tilde N_2(t)  \right]+o\left(e^{-\frac{4}{3} t^{3/2}} \right), \label{PinfSolution}
\end{eqnarray}
where 
$\tilde M_2(t)=o\left( \mathrm{Ai}(\frac{t}{3^{2/3}})\right)$ and $\tilde N_2(t)=o\left( \mathrm{Ai}'(\frac{t}{3^{2/3}})\right)$.
The detailed expressions of $\tilde M_2(t)$ and $\tilde N_2(t)$ will be given in Section \ref{secPInf}.
If  $c_1 \neq 0$ or $c_2 \neq 0$,
the leading terms  in the asymptotics  (\ref{PinfSolution-A}) and (\ref{PinfSolution})  are obvious.
If $c_1=0$ and $c_2=0$, the leading terms  for the asymptotics  of $q_2+1$ and $\alpha$ are

\begin{eqnarray}
q_2+1&=&\frac{4 \pi }{3^{4/3}} \mathrm{Ai}(t) \left[ 
\mathrm{Bi}\left(\frac{t}{3^{2/3}}\right)  \int_{\infty}^t \mathrm{Ai}(s) \mathrm{Ai}\left(\frac{s}{3^{2/3}} \right) ds
-\mathrm{Ai}\left(\frac{t}{3^{2/3}}\right)  \int_{\infty}^t \mathrm{Ai}(s) \mathrm{Bi}\left(\frac{s}{3^{2/3}} \right) ds 
\right] \nonumber\\
&&+o\left(e^{-\frac{8}{3} t^{3/2}} \right), \label{q2Pinf00} \\
\alpha&=&\frac{2 \pi }{3} \mathrm{Ai}(t) \left[ 
\mathrm{Ai}'\left(\frac{t}{3^{2/3}}\right)  \int_{\infty}^t \mathrm{Ai}(s) \mathrm{Bi}\left(\frac{s}{3^{2/3}} \right) ds
-\mathrm{Bi}'\left(\frac{t}{3^{2/3}}\right)  \int_{\infty}^t \mathrm{Ai}(s) \mathrm{Ai}\left(\frac{s}{3^{2/3}} \right) ds 
\right] \nonumber\\
&&+o\left(\sqrt{t} e^{-\frac{8}{3} t^{3/2}} \right). \label{alphaPinf00}
\end{eqnarray}

Not all solutions with asymptotics of (\ref{PinfSolution-A}) and (\ref{PinfSolution}) at $t=\infty$ can be smoothly evolved to $t=-\infty$.
It may develop to singularity at $t=t_0$, which depends on $c_1$ and $c_2$.
Our first main result of this paper is the following.
\begin{theorem} \label{theorem-MultiSolution}
There is a region $R_{smooth}$ that is the neighbourhood of the positive $c_2$-axis including the origin $(0,0)$
in the $(c_1,c_2)$-plane, such that if $(c_1, c_2)$ is in the region  $R_{smooth}$  
then the solution defined by the asymptotics of (\ref{PinfSolution-A})-({\ref{PinfSolution}}) at $t=\infty$
is smooth for $t \in (-\infty, \infty)$ and has asymptotics of (\ref{NinfSolution-A})-(\ref{NinfSolution}) at $t=-\infty$.
\end{theorem}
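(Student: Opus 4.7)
The plan is to proceed in three stages: realize each $(c_1,c_2)$ as the parameter of a unique solution near $t=+\infty$; extend the solution leftward and carve out the set of parameters for which no blow-up occurs; and identify the $t \to -\infty$ asymptotics of the extended solution with the formal series (\ref{NinfSolution-A})-(\ref{NinfSolution}).

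For the first stage, I would cast (\ref{q2alpha-deqsA})-(\ref{q2alpha-deqs}) at large $t$ as a fixed-point equation for $(q_2+1,\alpha)$ in a weighted Banach space on $[T_0,\infty)$ whose weight encodes the Airy decay visible in (\ref{PinfSolution-A})-(\ref{PinfSolution}). Linearizing about the equilibrium $q_2=-1$, $\alpha=0$ yields an Airy-type operator whose Green's function produces integrals convergent at $+\infty$; a standard contraction on a small ball then gives, for each small $(c_1,c_2)$, a unique solution with the prescribed leading Airy modes. A single Picard iteration reproduces the $c_1^2 \tilde M_2$, $c_1^2 \tilde N_2$ corrections; the explicit formulas and sharp remainder estimates will be the content of Section \ref{secPInf}.

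Next, because the Hastings-McLeod solution $u(t)$ is smooth and strictly positive on $(-\infty,\infty)$, the right-hand side of (\ref{q2alpha-deqsA})-(\ref{q2alpha-deqs}) is smooth in $(q_2,\alpha,t)$, so the maximal interval of existence extends leftward until the solution blows up. The main task, and the step I expect to be hardest, is to prove that no blow-up occurs on an open neighbourhood of the positive $c_2$-axis. Here I would exploit the abstract's hint that $(q_2,\alpha)$ is controlled by an auxiliary linear equation: using the gauge picture (\ref{GIKM-Gauge}), I would isolate a second-order scalar linear ODE $\mathcal{L}W=0$ whose coefficients are smooth functions of $t$ and $u(t)$ and whose solutions express $q_2+1$ and $\alpha$ as rational functions of $(W,W')$ with a Wronskian-type denominator. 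At $t=+\infty$ the solution is modelled by $W \sim c_1\mathrm{Bi}(t/3^{2/3})+c_2\mathrm{Ai}(t/3^{2/3})$, so $c_1=0$, $c_2>0$ corresponds to the pure $\mathrm{Ai}$ mode, which has no zeros on $\mathbb{R}$; a Sturm-comparison or Pr\"ufer-angle argument, using strict positivity of $u^2$ and the sign pattern of the coefficients of $\mathcal{L}$, would upgrade this model-level positivity to the absence of zeros of both $W$ and the relevant denominator along all of $(-\infty,\infty)$. Openness of the zero-free condition in the $C^0$ topology of coefficients then produces an open region $R_{smooth}$ containing both the positive $c_2$-axis and a disc around the origin.

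Finally, to extract the $t \to -\infty$ asymptotics, I would linearize (\ref{q2alpha-deqsA})-(\ref{q2alpha-deqs}) about the formal series (\ref{NinfSolution-A})-(\ref{NinfSolution}) and verify that both independent modes of the linearization are subdominant with respect to the formal background, so (\ref{NinfSolution-A})-(\ref{NinfSolution}) is the unique formal solution consistent with the global sheet structure. Any globally smooth solution coming from $(c_1,c_2) \in R_{smooth}$ must then agree with the formal series to all polynomial orders, and a residual-error estimate using the subdominance bounds promotes this formal agreement to the genuine asymptotic expansion (\ref{NinfSolution-A})-(\ref{NinfSolution}). The zero-free control of $W$ on the positive $c_2$-axis is the only substantial analytic obstacle; the matching at either end reduces to standard asymptotic analysis once the linear equation $\mathcal{L}W=0$ is in hand.
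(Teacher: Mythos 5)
Your overall architecture — linearize $(q_2+1,\alpha)$ as a quotient involving an auxiliary linear problem, realize each $(c_1,c_2)$ as a solution near $t=+\infty$ via Airy-weighted integral equations, and then show the denominator is zero-free on the whole line — matches the paper. But the central technical step is misidentified. The linearization used in the paper, $\tilde q_2=\phi_1 u/\phi_0$, $\alpha=\phi_2 u/\phi_0$ with $(\phi_1,\phi_2,\phi_0)$ solving (\ref{phis-1})--(\ref{phis-0}), is a genuinely three-dimensional first-order system (one can eliminate $\phi_2=-\tfrac{3}{2}\phi_1'$, but $\phi_1$ and $\phi_0$ remain coupled), so there is no second-order scalar ODE $\mathcal{L}W=0$ with a Wronskian denominator, and the Sturm-comparison/Pr\"ufer-angle toolkit for scalar second-order equations is not available. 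The paper's actual no-blow-up argument (Lemma \ref{LEM-PPPP} and Propositions \ref{Prop-Positive}, \ref{Prop-Positive-2}) is an invariant-cone argument for the $3\times 3$ system: once $\phi_1,\phi_2,\phi_0$ and the auxiliary combination $\tfrac{2}{3}\phi_0-\tfrac{k_0-2}{6}u\phi_1$ are positive at a single $t_0$, they stay positive and monotonically decreasing as $t$ decreases, and this hinges on the nontrivial numerical bound $k_0<\tfrac{10}{3}$ (Appendix \ref{Append-k0}). Your sketch has no analogue of this cone condition, and without it the "openness of zero-freeness" you invoke is not established.

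Your Stage 3 also leaves a gap. Showing that the linearization about the formal series has only subdominant modes (which is the content of Section \ref{secNInf}) proves stability of the asymptotics, but does not by itself show that a globally smooth solution lands on that branch rather than on one of the other two asymptotic families (Sets B, C in the proof of Lemma \ref{LEM-PPPP}). In the paper this selection comes for free from the same positivity argument: positivity of $\phi_1,\phi_2,\phi_0$ forces the L'Hospital limits $x_1,x_2$ to take the "Set A" values, which then produce (\ref{NinfSolution-A})--(\ref{NinfSolution}). You would need a substitute for that selection mechanism.
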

In fact, the region $R_{smooth}$ is very large.
The numerical results for $R_{smooth}$ are shown in Figure 1.

\begin{center}
\begin{tikzpicture}[scale=1.0]
\def\xsl{0.06}
\def\ysl{0.15}
\def\kA{-0.15541289336570088408}
\def\kB{0.26918302746067802855}
\def\FigureAdata{(-82*\xsl,11.871736836678694551*\ysl)(-80*\xsl,10.804369079113448152*\ysl)(-78*\xsl,9.757263044204257262*\ysl)
	(-76*\xsl,8.7308340368661254861*\ysl)(-74.*\xsl,7.7255187944265246885*\ysl ) (-72.*\xsl,6.7417772305324282795*\ysl )
	(-70.*\xsl,5.7800943743569925183*\ysl )(-68.*\xsl,4.8409825332135549801*\ysl )(-66.*\xsl,3.924983711671050424*\ysl )
	(-64.*\xsl,3.0326723263116310396*\ysl )(-62.*\xsl,2.1646582626430298043*\ysl )(-60.*\xsl,1.321590329716352225*\ysl )
	(-58.*\xsl,0.50416017914765579964*\ysl )(-56.*\xsl,-0.28689323092144166585*\ysl )(-54.*\xsl,-1.0507785290983803741*\ysl )
	(-52.*\xsl,-1.7866460616630091204*\ysl )(-50.*\xsl,-2.4935811139578790093*\ysl )(-48.*\xsl,-3.1705960397853563446*\ysl )
	(-46.*\xsl,-3.8166210708598681614*\ysl )(-44.*\xsl,-4.4304935191577298655*\ysl )(-42.*\xsl,-5.0109450071859176933*\ysl )
	(-40.*\xsl,-5.556586257809025008*\ysl )(-38.*\xsl,-6.0658888363556192731*\ysl )(-36.*\xsl,-6.5371630487425954791*\ysl )
	(-34.*\xsl,-6.9685309388252799784*\ysl )(-32.*\xsl,-7.3578929638174185241*\ysl )(-30.*\xsl,-7.7028864090724406226*\ysl )
	(-28.*\xsl,-8.0008328559353228076*\ysl )(-26.*\xsl,-8.2486709168361969543*\ysl )(-24.*\xsl,-8.4428688030279054487*\ysl )
	(-22.*\xsl,-8.579308767041221961*\ysl )(-20.*\xsl,-8.6531315184834107728*\ysl )(-18.*\xsl,-8.6585224240855683604*\ysl )
	(-16.*\xsl,-8.5884111169331749531*\ysl )(-14.*\xsl,-8.4340395883317048771*\ysl )(-12.*\xsl,-8.184327885655102345*\ysl )
	(-11.*\xsl,-8.0193987097547176432*\ysl )(-10.*\xsl,-7.8249327357754552876*\ysl )(-9.*\xsl,-7.5984042020401527556*\ysl )
	(-8.*\xsl,-7.3368937564847637271*\ysl )(-7.*\xsl,-7.0370477404388446585*\ysl )(-6.*\xsl,-6.6950992287131021809*\ysl )
	(-5.*\xsl,-6.3070803742360884765*\ysl )(-4.*\xsl,-5.8695641693142019998*\ysl )(-3.5*\xsl,-5.6317769346673879945*\ysl )
	(-3.*\xsl,-5.3816949892451604968*\ysl )(-2.5*\xsl,-5.1203029248073909793*\ysl )(-2.*\xsl,-4.8492881225781356491*\ysl )
	(-1.5*\xsl,-4.5709321959563785152*\ysl )(-1.*\xsl,-4.2876849405313592513*\ysl )(-0.5*\xsl,-4.0016144778234764478*\ysl )
	(0*\xsl,-3.7141079331281263929*\ysl )(0.5*\xsl,-3.4259345704941775874*\ysl )(1.*\xsl,-3.1374679841879923513*\ysl )
	(2.*\xsl,-2.560231977605626288*\ysl )(2.5*\xsl,-2.2715666689318457117*\ysl )(2.75*\xsl,-2.1272306337917092481*\ysl )
	(3.*\xsl,-1.9828935921925428472*\ysl )(3.1*\xsl,-1.9251586234874645675*\ysl )(3.2*\xsl,-1.8674236111241669537*\ysl )}

\fill[fill=green!16] (-82*\xsl, 18*\ysl)-- plot[smooth] coordinates {\FigureAdata} --(30*\xsl,13.605563603024801057*\ysl)--(30*\xsl,18*\ysl) --cycle;
\fill[fill=yellow!16] (-82*\xsl, -14*\ysl)-- plot[smooth] coordinates {\FigureAdata} --(30*\xsl,13.605563603024801057*\ysl)--(30*\xsl,-14*\ysl) --cycle;
\draw[color=red] plot[smooth] coordinates {\FigureAdata};
\draw[scale=1,domain=(-0.5/\kA)*\xsl:30*\xsl,smooth,variable=\x,red] plot ({\x},{\ysl*(-1-\x/\xsl*\kA)/\kB});
\draw[->] (-87*\xsl,0)--(35*\xsl,0) node[right] {$c_1$};
\draw[->] (0,-14*\ysl)--(0,18*\ysl) node[right] {$c_2$};
\foreach \x in {-80*\xsl, -60*\xsl, -40*\xsl, -20*\xsl, 20*\xsl} \draw (\x,2 pt)--(\x,-2pt);
\foreach \y in {-10*\ysl, -5*\ysl, 5*\ysl, 10*\ysl, 15*\ysl} \draw (2 pt, \y)--(-2pt, \y);
\fill (3.21723628697558*\xsl,-1.8574722363319859394*\ysl) circle (1 pt) node[right] {$P_c$};
\node at(-0.4,-10*\ysl){$-10$}; \node at(-0.4,-5*\ysl){$-5$};  \node at(-0.3,5*\ysl){$5$};
\node at(-0.3,10*\ysl){$10$}; \node at(-0.3,15*\ysl){$15$};
\node at(20*\xsl,-0.3){$20$}; \node at(-20*\xsl-0.1,-0.3){$-20$}; \node at(-40*\xsl-0.1,-0.3){$-40$};
\node at(-60*\xsl-0.1,-0.3){$-60$}; \node at(-80*\xsl-0.1,-0.3){$-80$};
\node at(-0.2,0.2){O};
\node at(-1.8,1.5){$R_{smooth}$};
\node at(-1.7,-1.8){$R_{singular}$};
\end{tikzpicture}
\end{center}
\begin{center}
\begin{minipage}{12.5cm}
	Figure 1. $R_{smooth}$ and $R_{singular}$. $R_{smooth}$ is the light green region. If $(c_1, c_2)$ belongs to $R_{smooth}$, the solution defined at $t=\infty$ 
	by this $(c_1, c_2)$ is smooth on $(-\infty, \infty)$. 
	Else if $(c_1, c_2)$ belongs to $R_{singular}$(the light yellow region), the corresponding solution must have singularity 
	at some finite $t=t_0$. The red curve is the boundary between $R_{smooth}$ and $R_{singular}$.
	$P_c$ is a special point on the boundary curve: the boundary curve becomes straight on the right of $P_c$ .
\end{minipage}
\end{center}

Theorem \ref{theorem-MultiSolution} gives positive answers to the open questions (1) and (2) of \cite{GIKM}.
But the non-uniqueness of $q_2=q_2(c_1,c_2;t)$ causes the non-uniqueness of $F_6(t)$. In fact,
by formula \cite{GIKM}
\begin{eqnarray}
F_6 \left(\frac{t}{3^{2/3}} \right)=\frac{q_2-1}{2 q_2} \exp \left( -\frac{1}{3} \int_{\infty}^t \omega(s) ds 
	+\frac{2}{3}\int_{\infty}^t \frac{u'(s)}{u(s)} \frac{1+q_2(s)}{q_2(s)} ds \right),
\label{GIKM-F6}
\end{eqnarray}
where $\omega(s)=u(s)^4+s u(s)^2-u'(s)^2$,
we can verify that  $F_6(t)$ is indeed dependent on $c_1$ and $c_2$.
So we have to determine the  values of $c_1$ and $c_2$ to guarantee there is only a unique $F_6$(t).
To determine $c_1$ and $c_2$, we rely on Theorem \ref{thm-BV}.
Grava et al. \cite{GIKM} have formulated   $F(\beta=6; x, t)$ from $q_2$ and $\alpha$ as
\begin{eqnarray}
&&F\left(\beta=6;\frac{x}{3^{1/3}},\frac{t}{3^{2/3}}\right)=\kappa u^\frac{1}{2} 
\left[u^{-1} \left(\frac{1+q_2}{2}x-\alpha \right)Y_{12}^{(6)}(x,t)+Y_{22}^{(6)}(x,t) \right], \quad x \ge 0,
\label{GIKM-F-Y6} \\
&&F\left(\beta=6;\frac{x}{3^{1/3}},\frac{t}{3^{2/3}}\right)=-\kappa u^\frac{1}{2} e^{\frac{x^3}{3}-x t}
\left[u^{-1} \left(\frac{1+q_2}{2}x-\alpha \right)Y_{11}^{(3)}(x,t)+Y_{21}^{(3)}(x,t) \right],\quad x \le 0,
\label{GIKM-F-Y3}
\end{eqnarray}
where $Y^{(3)}$ and $Y^{(6)}$ are $2 \times 2$  matrices of the wave function of the Painlev\'{e} II.
By (\ref{GIKM-F-Y6})-(\ref{GIKM-F-Y3}),  $F(\beta=6;x,t)$ contains parameters $c_1$ and $c_2$.
By  (\ref{PinfSolution-A})-(\ref{PinfSolution}) and (\ref{GIKM-F-Y6})-(\ref{GIKM-F-Y3}), 
we can prove $c_1=0$  is enough to guarantee the boundary condition (\ref{BV-boundary}).
But if $c_2 \neq 0$, $F\left(\beta=6;\frac{x}{3^{1/3}},\frac{t}{3^{2/3}}\right)$ will 
grow exponentially near the line $x=-\sqrt{t}$ for $t \rightarrow \infty$.

The second main result of this paper is the following:
\begin{theorem} \label{theorem-UniqueSolution}
If and only if  $c_1=c_2=0$, the resulting $F(\beta=6; x,t)$ is bounded at the boundary $x^2+t^2=\infty$.
\end{theorem}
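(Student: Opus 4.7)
The plan is to exploit the explicit representations (\ref{GIKM-F-Y6})--(\ref{GIKM-F-Y3}) for $F(\beta=6;x,t)$ and analyse boundedness region by region in the quarter $x^2+t^2=\infty$. The key ingredients are the $t\to+\infty$ asymptotics (\ref{PinfSolution-A})--(\ref{PinfSolution}) carrying the parameters $c_1,c_2$; the refinement (\ref{q2Pinf00})--(\ref{alphaPinf00}) in the distinguished case $c_1=c_2=0$; the $t\to-\infty$ asymptotics (\ref{NinfSolution-A})--(\ref{NinfSolution}) supplied by Theorem~\ref{theorem-MultiSolution} for $(c_1,c_2)\in R_{smooth}$; the classical Stokes asymptotics of the Painlev\'{e} II wave functions $Y^{(3)}$ and $Y^{(6)}$; and the Hastings--McLeod behaviour of $u$.

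\textbf{Sufficiency ($c_1=c_2=0\Rightarrow F$ bounded).} With $c_1=c_2=0$, the sharpened asymptotics (\ref{q2Pinf00})--(\ref{alphaPinf00}) show that $q_2+1$ and $\alpha$ decay at rate $e^{-(8/3)t^{3/2}}$ or faster as $t\to+\infty$. On $x\ge 0$, the Riemann--Hilbert normalization gives $Y^{(6)}(x,t)=I+O(1/x)$, so the bracket in (\ref{GIKM-F-Y6}) is uniformly bounded; on $x\le 0$, the prefactor $e^{x^3/3-xt}$ is precisely what converts the Stokes normalization of $Y^{(3)}$ into bounded entries in its dominant column, and Deift--Zhou-type estimates then control the bracket in (\ref{GIKM-F-Y3}) as $x\to-\infty$. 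Boundedness as $t\to-\infty$ follows from (\ref{NinfSolution-A})--(\ref{NinfSolution}) together with $u(t)\sim\sqrt{-t/2}$, applied to (\ref{GIKM-F6}) or directly to (\ref{GIKM-F-Y6})--(\ref{GIKM-F-Y3}).

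\textbf{Necessity.} I would first force $c_1=0$. Along the turning curve $x=-\sqrt{t}$, $t\to+\infty$, one computes $e^{x^3/3-xt}=e^{(2/3)t^{3/2}}$. The $c_1$-correction enters $q_2+1$ and $\alpha$ through $\mathrm{Ai}(t)\mathrm{Bi}(t/3^{2/3})$ and $\mathrm{Ai}(t)\mathrm{Bi}'(t/3^{2/3})$, of size $O(t^{-1/2}e^{-(4/9)t^{3/2}})$. Tracking the dominant Stokes component of $Y^{(3)}_{11}(x,t)$ along the turning line (which grows exponentially on the subdominant side), the product with $e^{x^3/3-xt}$ in (\ref{GIKM-F-Y3}) produces a residual exponential growth of $F$ unless $c_1=0$. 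Assuming $c_1=0$, the surviving corrections are driven by $\mathrm{Ai}(t)\mathrm{Ai}(t/3^{2/3})$ and $\mathrm{Ai}(t)\mathrm{Ai}'(t/3^{2/3})$; the factor $x=-\sqrt{t}$ inside $\frac{1+q_2}{2}x$ amplifies the $q_2$-contribution, and the same Stokes component of $Y^{(3)}$ promotes it to exponential growth unless $c_2=0$ as well. The two arguments together force $c_1=c_2=0$.

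\textbf{Main obstacle.} The technical heart is the uniform two-variable asymptotic analysis of the combination $u^{-1}\bigl(\tfrac{1+q_2}{2}x-\alpha\bigr)Y^{(3)}_{11}(x,t)+Y^{(3)}_{21}(x,t)$ in a neighbourhood of the turning line $x\approx-\sqrt{t}$ as $t\to+\infty$, where every ingredient undergoes an Airy-type exponential/oscillatory transition. This requires upgrading the pointwise-in-$t$ asymptotics (\ref{PinfSolution-A})--(\ref{PinfSolution}) and (\ref{q2Pinf00})--(\ref{alphaPinf00}) into two-variable estimates with remainders controlled uniformly in $(x,t)$, and balancing the competing exponentials between the $(q_2,\alpha)$-corrections and the PII Stokes components with precision sharper than the $o(e^{-4t^{3/2}/3})$ remainders already present in (\ref{PinfSolution-A})--(\ref{PinfSolution}). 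Once these uniform bounds are established, the exponential arithmetic along $x=-\sqrt{t}$ yields the sharp criterion $c_1=c_2=0$.
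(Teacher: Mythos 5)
The overall architecture of your proposal matches the paper: decompose the boundary circle $x^2+t^2=\infty$ into arcs (Figure 10) and verify boundedness arc by arc, with the critical test occurring for $x\to-\infty$, $t\to+\infty$ near the turning line $x\approx-\sqrt{t}$. However, the necessity argument as written misidentifies the source of the exponential pressure. You assert that ``the dominant Stokes component of $Y^{(3)}_{11}(x,t)$ along the turning line...grows exponentially on the subdominant side,'' but this is false: by the Riemann--Hilbert normalisation, $Y^{(3)}_{11}(x,t)\to 1$ uniformly in this region (Lemma~\ref{Y-RHP}(E), used directly in Section 8.1). The growth that forces $c_1=c_2=0$ comes entirely from the prefactor product $e^{x^3/3-xt}\,u(t)^{-1}\sim e^{\frac{4}{3}t^{3/2}}$ along $x=-\sqrt{t}$, applied to $\frac{1+q_2}{2}x-\alpha$, together with the fact that the $c_1$ and $c_2$ terms in (\ref{PinfSolution-A})--(\ref{PinfSolution}) decay only like $e^{-\frac{4}{9}t^{3/2}}$ and $e^{-\frac{8}{9}t^{3/2}}$, both slower than the $e^{-\frac{4}{3}t^{3/2}}$ threshold. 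The $Y^{(3)}_{21}$ contribution is shown separately to reduce to the bounded constant $-\Upsilon_4(k)$. If your mechanism were correct, one would also need to control cancellations between a growing $Y^{(3)}_{11}$ and the other terms — the actual situation is simpler and sharper.

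The sufficiency half is also materially underdeveloped. Establishing that $\mathcal{F}\to 0$ on the arcs with $t\to-\infty$ (arcs $\wideparen{EF}$, $\wideparen{FG}$, $\wideparen{GH}$) is nontrivial because $u(t)\sim\sqrt{-t/2}$, $\alpha(t)\sim\sqrt{-t/2}$, $e^{x^3/3-xt}$, and $\kappa(t)\sim C_\kappa e^{-(-t)^3/36+\cdots}$ all grow or decay at competing super-exponential rates; the paper tracks $Y^{(3)}$, $Y^{(6)}$ along curves $x=\pm\sqrt{2}\sqrt{\Lambda+t}$ using the growth-rate bounds (\ref{LODE-f-NEQ-S1})--(\ref{LODE-f-NEQ-S2}) of Appendix~\ref{Append-estimation}. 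Your invocation of ``Deift--Zhou-type estimates'' is not a substitute for this quantitative balance. Finally, the ``uniform two-variable asymptotics'' obstacle you flag is sidestepped in the paper by reducing the PDE system for $Y^{(j)}$ to explicit scalar ODEs along each ray $x=k\sqrt{t}$ and solving for $\Upsilon_i(k)$, $C_i(t)$; this furnishes the pointwise-sharp control across the transition region without a full two-variable expansion.
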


Now all requirements of Theorem \ref{thm-BV} are satisfied: 
$F(\beta=6;x,t)$ given by (\ref{GIKM-F-Y6})-(\ref{GIKM-F-Y3}) satisfies the Bloemendal-Vir\'{a}g equation (\ref{BVeq}); 
$c_1=0$ guarantees it satisfies the boundary condition (\ref{BV-boundary});
Theorem \ref{theorem-UniqueSolution} guarantees it is a bounded solution.
So $F(\beta=6;x,t)$ is indeed given by (\ref{GIKM-F-Y6}) and (\ref{GIKM-F-Y3}) with $c_1=c_2=0$.
We also note that $F_6(t)$ is  given by (\ref{GIKM-F6}) with the hidden parameters $c_1=c_2=0$.

\section{Derivation of the ODEs of $q_2$, $\alpha$ and $\kappa$ \label{sec-q2alpha-ODE}}
The Flaschka-Newell Lax pair of Painlev\'{e} II is \cite{FN}
\begin{eqnarray}
\frac{d \psi_0}{dx}&=& \hat L_0 \psi_0, \label{psix} \\
\frac{d \psi_0}{dt}&=& \hat B_0 \psi_0 \label{psit} 
\end{eqnarray}
where 
\begin{eqnarray}
&&\psi_0= \left(\begin{array}{c} \mathcal{F}_0(x,t)\\ \mathcal{G}_0(x,t) \end{array} \right), \label{psi0-DEF}\\
&&\hat L_0=\frac{x^2}{2}\sigma_3+x 
\left( \begin{array}{cc}
0 &u(t)\\
u(t)&0
\end{array} \right)
+\left( \begin{array}{cc}
-\frac{t}{2}-u(t)^2 &-u'(t)\\
u'(t)&\frac{t}{2}+u(t)^2
\end{array} \right), \label{L0}\\
&&\hat B_0=-\frac{x}{2} \sigma_3-
\left( \begin{array}{cc}
0 &u(t)\\
u(t)&0
\end{array} \right) .\label{B0}
\end{eqnarray}
By (\ref{RumanovPsi}) and (\ref{GIKM-Gauge}), we get
\begin{eqnarray}
\mathcal{F}(x,t)=\frac{1}{2} e^{\frac{x^3}{6}-\frac{t x}{2} }  (q_2(t) x+x-2 \alpha(t)) \kappa(t) \psi(t) \mathcal{F}_0(x,t)
- e^{\frac{x^3}{6}-\frac{t x}{2} } \frac{\kappa(t)}{\psi(t)} \mathcal{G}_0(x,t) . \label{GIKM-F}
\end{eqnarray}
Grava et. al suggested 
\begin{eqnarray}
\psi=- \frac{\mathrm{i}}{\sqrt{u}} . \label{psi-u} 
\end{eqnarray}
Substituting (\ref{GIKM-F}) and (\ref{psi-u}) into  (\ref{RumanovScaling}) with $\beta=6$,
we immediately obtain (\ref{q2alpha-deqsA})-(\ref{q2alpha-deqs}) and
\begin{eqnarray}
\frac{\kappa'}{\kappa}=-\frac{2}{3} \alpha-\frac{1}{3} (t+u^2) u^2+\frac{u'}{6 u}(2 q_2-1)+\frac{1}{3}(u')^2 .\label{kappa-deq}
\end{eqnarray}

By requiring $\mathcal{F}(x,t)\xlongrightarrow{x \rightarrow \infty, t \rightarrow \infty}1$, 
Grava et. al \cite{GIKM} proved (\ref{GIKM-pinf}) and
\begin{eqnarray}
\kappa u^\frac{1}{2} \xlongrightarrow{t \rightarrow \infty}1 .\label{kappa-pinf}
\end{eqnarray}
Equation (\ref{kappa-deq}) with boundary condition (\ref{kappa-pinf})  determine $\kappa$ completely
if a smooth solution of $(\alpha, q_2)$ has been obtained 
under the boundary condition (\ref{GIKM-pinf}), more precisely, the asymptotics of (\ref {PinfSolution-A})-(\ref {PinfSolution}).

The expansion of $\kappa(t)$ at $t=-\infty$ can also be obtained.
By (\ref{kappa-deq}), (\ref{NinfSolution-A})-(\ref{NinfSolution}) and the  asymptotics of Hastings-McLeod solution $u$, 
the asymptotics of $\frac{\kappa'}{\kappa}$ at $t=-\infty$ is obtained as
$$\frac{\kappa'(t)}{\kappa(t)}=\frac{t^2}{12}-\frac{\sqrt{2}}{3}(-t)^{\frac{1}{2}}+\frac{5}{24} (-t)^{-1}+
\frac{7}{32 \sqrt{2}} (-t)^{-\frac{5}{2}}+\cdots .$$
Therefore
\begin{eqnarray}
\kappa(t)=C_\kappa \times 
 e^{-\frac{1}{36} (-t)^{3}+\frac{2 \sqrt{2}}{9} (-t)^{3/2}-\frac{5}{24} \ln(-t)
 	+\frac{7}{48 \sqrt{2}} (-t)^{-3/2} +\cdots }.
\label{kappa-Ninf}
\end{eqnarray}

Assuming  (\ref{BEMN-asymp}), we get
\begin{eqnarray}
\ln C_\kappa =c_{\beta=6} -\frac{\ln 3}{36}+\frac{5}{4} \ln 2 .\label{C-kappa-1}
\end{eqnarray}
For $\beta=6$, Borot et al. \cite{BEMN}  was able to simplify (\ref{BEMN-c0}) to
\begin{eqnarray}
c_{\beta=6}=-\frac{97}{72} \ln 2-\frac{7}{36} \ln 3-\frac{\ln (2 \pi)}{6}+\frac{\ln \Gamma(\frac{1}{3})}{3}+\frac{\zeta'(-1)}{3}.
\label{BEMN-c0-beta6}
\end{eqnarray}
So we have
\begin{eqnarray}
\ln C_\kappa= -\frac{7}{72} \ln 2-\frac{2}{9} \ln 3-\frac{\ln (2 \pi)}{6}+\frac{\ln \Gamma(\frac{1}{3})}{3}+\frac{\zeta'(-1)}{3}.
\label{C-kappa-2}
\end{eqnarray}

The  value of $\ln C_\kappa$ can be obtained by  numerical experiments similar to the ones 
in Section \ref{secFig1} with $c_1=c_2=0$.
Our numerical experiments give
\begin{eqnarray}
\ln C_\kappa=-0.3445050500286934815501994065702518\cdots. \label{C-kappa-V}
\end{eqnarray}
In fact,  $\ln C_\kappa$ from our numerical experiments coincides with (\ref{C-kappa-2}) for more than $100$ digits, 
which gives a numerical verification of  (\ref{BEMN-asymp}) and (\ref{BEMN-c0}) for $\beta=6$.

Altogether, the algorithm  is as following.
First give the ansatz for $\mathcal{F}(x,t)$ as (\ref{GIKM-F}); 
then by (\ref{RumanovScaling}) obtain the ODEs for the unkowns;
next by the boundary condition for $\mathcal{F}(x,t)$ get all boundary conditions for the unknowns,
which should determine all unknowns uniquely;
at last prove the obtained $\mathcal{F}(x,t)$ satisfies all the requirements for it.

\section{Asymptotics of $q_2$ and $\alpha$ at $t=-\infty$ \label{secNInf}}
In this section we will show by  linearization analysis 
that the asymptotics  (\ref{NinfSolution-A})-(\ref{NinfSolution}) are not the asymptotics of a  specific solution of (\ref{q2alpha-deqsA})-(\ref{q2alpha-deqs}),
but of a general solution of (\ref{q2alpha-deqsA})-(\ref{q2alpha-deqs}).
More detailed analysis of these asymptotics will be given in Section \ref{sec-phi012}.

Suppose $(q_{20}(t), \alpha_0(t)) $ is a smooth solution of  (\ref{q2alpha-deqsA})-(\ref{q2alpha-deqs})
with asymptotics of (\ref{NinfSolution-A})-(\ref{NinfSolution}).
Let $(q_{2}(t), \alpha(t)) $ be a solution of (\ref{q2alpha-deqsA})-(\ref{q2alpha-deqs}) near $(q_{20}(t), \alpha_0(t))$.
Then $(q_{2}(t), \alpha(t)) $ can be expressed as
\begin{eqnarray}
&&q_2(t)= q_{20}(t)+\epsilon \, \mathscr{Q}(t), \nonumber\\
&&\alpha(t)=\alpha_{0}(t)+\epsilon \, \mathscr{A}(t), \nonumber
\end{eqnarray} 
where $\epsilon \rightarrow 0$ is  infinitesimal.

So $(\mathscr{Q}, \mathscr{A})$ satisfies the ODE
\begin{eqnarray}
&&\mathscr{Q}'(t)= 
\left(\frac{2}{3}\alpha_0+ \frac{u'}{u} \frac{1-2 q_{20}}{3}  \right)\mathscr{Q}+\frac{2}{3} q_{20}  \mathscr{A}, \nonumber\\
&&\mathscr{A}'(t)= -\frac{1}{3}\left( \frac{u'}{u}\alpha_0+\frac{t}{2}+u^2\right) \mathscr{Q}+
\left( \frac{4}{3} \alpha_0+\frac{u'}{u}  \frac{2-q_{20}}{3}\right)  \mathscr{A}. \nonumber 
\end{eqnarray}
At $t=-\infty$, the expansions of $u$, $\frac{u'}{u}$, $\alpha_0$ and $q_{20}$ are known
$$u=\sqrt{\frac{-t}{2}}+\cdots,\quad \frac{u'}{u}=\frac{1}{2 t}+\cdots,
\quad  q_{20}=\frac{1}{\sqrt{2}} (-t)^{-\frac{3}{2}}+\cdots, \quad \alpha_0=\sqrt{\frac{-t}{2}}+\cdots.$$
Therefore, $\mathscr{Q}$ and $\mathscr{A}$ satisfy
\begin{eqnarray}
&&\mathscr{Q}'(t)= \left(\frac{\sqrt{-2t}}{3}+\cdots \right) \mathscr{Q}+ \left(\frac{\sqrt{2}}{3}(-t)^{-\frac{3}{2}}+\cdots \right)
\mathscr{A}, \nonumber\\
&& \mathscr{A}'(t)= \left(\frac{1}{6}(-2 t)^{-\frac{1}{2}} +\cdots \right) \mathscr{Q} +
\left(\frac{2 \sqrt{-2t}}{3}+\cdots \right) \mathscr{A}.  \nonumber
\end{eqnarray}
Now it is clear that the  solution $(q_2,\alpha)$ 
is exponentially close to the solution $(q_{20}, \alpha_0)$  in an order of $e^{-\frac{2 \sqrt{2}}{9}(-t)^{3/2}}$. 

So we reach the following result.
\begin{theorem}
	If a solution $ ( q_{20}(t), \alpha_0(t) )$ 
	is nonsingular on $(-\infty, t_0]$ and has asymptotics (\ref{NinfSolution-A})-(\ref{NinfSolution}), 
	then the general solutions that are close to $ ( q_{20}(t), \alpha_0(t) )$  
	are all  non-singular on $(-\infty, t_0 ]$ and also have asymptotics (\ref{NinfSolution-A})-(\ref{NinfSolution}) at $t=-\infty$.
\end{theorem}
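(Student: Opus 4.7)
The plan is to upgrade the linearization already sketched above from infinitesimals to genuinely small but finite perturbations via a variation-of-constants/bootstrap argument. Writing any nearby solution as $q_2 = q_{20} + y_1$, $\alpha = \alpha_0 + y_2$ and subtracting two copies of (\ref{q2alpha-deqsA})-(\ref{q2alpha-deqs}), the vector $Y = (y_1, y_2)^T$ satisfies
$$Y'(t) = A(t)\,Y(t) + N(t, Y(t)),$$
where $A(t)$ is exactly the coefficient matrix of the linearized system displayed above and $N(t, Y) = O(|Y|^2)$ with smooth coefficients on $(-\infty, t_0]$ (smoothness comes from the assumed non-singularity of $(q_{20},\alpha_0)$ and the Hastings--McLeod solution $u$ on this interval). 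The hypothesis ``close to $(q_{20},\alpha_0)$'' is encoded by $|Y(t_0)| \le \varepsilon$ for some small $\varepsilon$, and the goal is to show that $Y(t)$ stays defined and small for every $t \le t_0$ and in fact decays to $0$ as $t \to -\infty$.

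First I would analyze the linear equation $Y'=A(t)Y$ at $t=-\infty$. Using the asymptotic expansions of $u$, $u'/u$, $q_{20}$, $\alpha_0$ recorded in the excerpt, $A(t) = \mathrm{diag}\bigl(\tfrac{\sqrt{-2t}}{3},\,\tfrac{2\sqrt{-2t}}{3}\bigr) + R(t)$ where the remainder $R(t)$ is $O((-t)^{-1/2})$ off-diagonal and $O((-t)^{-1})$ diagonal. Since the eigenvalue gap is of order $\sqrt{-t}$ and dominates $R(t)$, Levinson's asymptotic-integration theorem produces a fundamental matrix satisfying
$$\Phi(t) \sim \mathrm{diag}\!\left(e^{-\frac{2\sqrt{2}}{9}(-t)^{3/2}},\;e^{-\frac{4\sqrt{2}}{9}(-t)^{3/2}}\right)$$
up to polynomial prefactors as $t\to-\infty$; in particular $\Phi(t)\Phi(s)^{-1}$ stays uniformly bounded for $-\infty<t\le s\le t_0$. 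Feeding this into the variation-of-constants formula
$$Y(t) = \Phi(t)\Phi(t_0)^{-1}Y(t_0) + \Phi(t)\int_{t_0}^t \Phi(s)^{-1} N(s,Y(s))\,ds$$
and bootstrapping on the weighted norm $|Y(t)|\,e^{\frac{2\sqrt{2}}{9}(-t)^{3/2}}$ yields, for $\varepsilon$ small enough, the a priori bound $|Y(t)|\le C\varepsilon\, e^{-\frac{2\sqrt{2}}{9}(-t)^{3/2}}$ for all $t\in(-\infty,t_0]$. This rules out any blow-up of $(q_2,\alpha)$ on $(-\infty,t_0]$ and forces the perturbation to be transcendentally small at $-\infty$, hence invisible in the algebraic expansion (\ref{NinfSolution-A})-(\ref{NinfSolution}), which is therefore inherited by $(q_2,\alpha)$.

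The main obstacle is the bookkeeping required to verify Levinson's hypotheses: after a gauge transformation $T(t)$ diagonalizing $A(t)$ to leading order, the induced perturbation $T^{-1}(AT-T')-\Lambda$ must be shown to lie in $L^1((-\infty,t_0])$ relative to the eigenvalue gap. This is routine but tedious and relies on the full asymptotic series of the Hastings--McLeod solution and of $(q_{20},\alpha_0)$ at $-\infty$. An alternative that avoids Levinson altogether is to work directly with the Lyapunov functional $|Y|^2$ and exploit the positive-definiteness of the symmetric part of $A(t)$ for $t$ sufficiently negative; this yields backward exponential decay by Gronwall's inequality, after which the quadratic nonlinearity $N$ is absorbed by the usual continuation/bootstrap step.
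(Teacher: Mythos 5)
Your proposal is correct, and its starting point — subtracting the equations, linearizing around $(q_{20},\alpha_0)$, and reading off the coefficient matrix whose leading diagonal is $\mathrm{diag}\bigl(\frac{\sqrt{-2t}}{3},\,\frac{2\sqrt{-2t}}{3}\bigr)$ — is exactly the one the paper takes in Section~\ref{secNInf}. The difference is in how much work is actually carried out: the paper stops at the infinitesimal level (perturbation $\propto\epsilon$), displays the linearized system, and concludes it is ``clear'' that nearby solutions collapse onto $(q_{20},\alpha_0)$ at the rate $e^{-\frac{2\sqrt{2}}{9}(-t)^{3/2}}$, whereas you promote this to a genuine finite-perturbation statement via variation of constants plus a quadratic-nonlinearity bootstrap. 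You also correctly identify the one technical point the paper elides: the off-diagonal remainder $R(t)=O((-t)^{-1/2})$ is not $L^1$ on $(-\infty,t_0]$, so Levinson's theorem cannot be quoted directly; one needs a Wasow-type shearing transformation (after which the residual is $O((-t)^{-2})$, hence integrable, since the eigenvector matrix is $I+O((-t)^{-1})$), or alternatively the Lyapunov/Gronwall route you sketch, which avoids the diagonalization entirely by using that the symmetric part of $A(t)$ is eventually positive definite. What your argument buys over the paper's is an actual proof that nearby solutions remain smooth on all of $(-\infty,t_0]$ and have the same power-series asymptotics; what the paper's broader development buys is an exact (not merely perturbative) linearization via the Cole--Hopf substitution (\ref{Cole-Holpf}) in Section~\ref{sec-phi012}, which ultimately replaces the heuristic of Section~\ref{secNInf} with a complete description of all three fundamental solutions at $t=-\infty$ (Proposition~\ref{Prop-Negative}) and thereby subsumes this theorem. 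Two small remarks on your write-up: the nonlinearity $N(t,Y)$ is quadratic in $Y$ but its coefficients are not bounded — they grow like $\sqrt{-t}$ or $|t|$ — so the bound should be stated as $|N(t,Y)|\le C(-t)\,|Y|^2$; this is harmless because $|Y|$ is exponentially small, but it should be tracked in the bootstrap. Also, the conclusion that the algebraic asymptotics (\ref{NinfSolution-A})--(\ref{NinfSolution}) are unchanged should be phrased as: the correction is $O(e^{-\frac{2\sqrt{2}}{9}(-t)^{3/2}})$, which lies beyond every power of $(-t)^{-1}$, so the asymptotic series coincides term by term.
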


\section{The linear variables and the integral equations}
At $t=\infty$, it is convenient to work with $\alpha$ and
$$\tilde q_2= 1+q_2 .$$

Also it is helpful to remember
\begin{eqnarray}
u(t) =\mathrm{Ai}(t)+ \left( \frac{1}{32 \pi^{3/2}} t^{-\frac{7}{4}} +\cdots \right) e^{-\frac{6}{3}t^{3/2}}+\cdots.
\label{u-Pexpansion}
\end{eqnarray}

The ODEs for $\tilde q_2$ and $\alpha$ are
\begin{eqnarray}
&&\tilde q_2'(t)=\frac{2}{3}  (\tilde q_2-1) \alpha+\frac{u'}{u} \frac{3-\tilde q_2}{3} \tilde q_2, \label{wq2alpha-deqsA}\\
&&\alpha'(t)=\left(\frac{2}{3} \alpha +\frac{u'}{u} \frac{3-\tilde q_2}{3}\right) \alpha-\frac{t}{6} \tilde q_2
-\frac{2+\tilde q_2}{3} u^2. \label{wq2alpha-deqs}
\end{eqnarray}
Equations (\ref{wq2alpha-deqsA})-(\ref{wq2alpha-deqs}) are linearized by
\begin{eqnarray}
\tilde q_2(t)=\frac{\phi_1(t)}{\phi_0(t)} u(t), \quad \alpha(t)=\frac{\phi_2(t)}{\phi_0(t)} u(t), \label{Cole-Holpf}
\end{eqnarray}
where $\phi_1$, $\phi_2$ and $\phi_0$ satisfy
\begin{eqnarray}
&&\phi_1'(t)=-\frac{2}{3} \phi_2(t), \label{phis-1} \\
&&\phi_2'(t)=-\frac{2}{3}u(t) \phi_0(t)-\frac{1}{6} (t+2 u(t)^2) \phi_1(t), \label{phis-2} \\
&&\phi_0'(t)=\frac{1}{3} u'(t) \phi_1(t)-\frac{2}{3} u(t) \phi_2(t). \label{phis-0}
\end{eqnarray}

To analyze (\ref{phis-1})-(\ref{phis-0}), we need the following estimations. 
\begin{proposition} \label{prop-ineq}
	There exists $t_0^P$, such that for $t \ge t_0^P$:
	\begin{itemize}
		\item{$|\mathrm{Bi}(t)|<e^{\frac{2}{3} t^{3/2}} $ and $|\mathrm{Bi}'(t)|<e^{\frac{2}{3} t^{3/2}} \sqrt{t} $;}
		\item{$|\mathrm{Ai}(t)|<e^{-\frac{2}{3} t^{3/2}} $ and $|\mathrm{Ai}'(t)|<e^{-\frac{2}{3} t^{3/2}} \sqrt{t} $;}
		\item{$|u(t)|<e^{-\frac{2}{3} t^{3/2}} $ and $|u'(t)|<e^{-\frac{2}{3} t^{3/2}} \sqrt{t} $.}
	\end{itemize}
\end{proposition}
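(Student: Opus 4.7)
The plan is to derive all six inequalities from the classical asymptotic expansions of the Airy functions at $+\infty$, together with the known asymptotic $u(t)\sim \mathrm{Ai}(t)$ for the Hastings--McLeod solution recalled in (\ref{u-Pexpansion}). Each asymptotic produces a leading-order prefactor whose size I compare against $1$ (for the bounds on the functions) or against $\sqrt{t}$ (for the bounds on the derivatives). The required inequality holds as soon as the prefactor times $(1+o(1))$ drops strictly below the declared majorant, which always happens for $t$ large.

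Concretely, for the Airy function I would use
$$\mathrm{Ai}(t)=\frac{1}{2\sqrt{\pi}}\,t^{-1/4}\,e^{-\frac{2}{3}t^{3/2}}\bigl(1+o(1)\bigr), \qquad t\to\infty.$$
Since $\frac{1}{2\sqrt{\pi}}\,t^{-1/4}(1+o(1))<1$ eventually, the bound $|\mathrm{Ai}(t)|<e^{-\frac{2}{3}t^{3/2}}$ is immediate. The same reasoning works for $\mathrm{Ai}'(t)$ whose leading term is $-\frac{1}{2\sqrt{\pi}}\,t^{1/4}\,e^{-\frac{2}{3}t^{3/2}}$; because $\frac{1}{2\sqrt{\pi}}\,t^{1/4}(1+o(1))<\sqrt{t}$ for all large $t$, we obtain $|\mathrm{Ai}'(t)|<e^{-\frac{2}{3}t^{3/2}}\sqrt{t}$. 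The companion asymptotics $\mathrm{Bi}(t)\sim \pi^{-1/2}t^{-1/4}e^{\frac{2}{3}t^{3/2}}$ and $\mathrm{Bi}'(t)\sim \pi^{-1/2}t^{1/4}e^{\frac{2}{3}t^{3/2}}$ yield the analogous bounds for $\mathrm{Bi}$ and $\mathrm{Bi}'$ by exactly the same threshold argument.

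For the Hastings--McLeod solution, I would invoke (\ref{u-Pexpansion}) which gives $u(t)=\mathrm{Ai}(t)+O\bigl(t^{-7/4}e^{-2t^{3/2}}\bigr)$, so the leading behavior of $u$ coincides with that of $\mathrm{Ai}$ up to a correction that decays faster than $e^{-\frac{2}{3}t^{3/2}}$ by a factor of $e^{-\frac{4}{3}t^{3/2}}$. The corresponding expansion $u'(t)=\mathrm{Ai}'(t)+O\bigl(t^{-5/4}e^{-2t^{3/2}}\bigr)$, which can be obtained either by differentiating (\ref{u-Pexpansion}) term by term or by applying the standard asymptotic analysis directly to Painlevé II (\ref{PII}), reduces the bounds on $u$ and $u'$ to those already established for $\mathrm{Ai}$ and $\mathrm{Ai}'$.

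Finally I define $t_0^P$ to be the maximum of the six thresholds produced in the paragraphs above, so that all six inequalities hold simultaneously for $t\ge t_0^P$. There is no real obstacle here: the only point requiring care is checking that every subleading correction in the asymptotic expansions is absorbed by the strict inequality, but since each remainder decays strictly faster than its leading term this is bookkeeping rather than analysis.
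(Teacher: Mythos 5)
The paper declares the proof of this proposition ``routine'' and omits it, and your argument via the classical Airy asymptotics $\mathrm{Ai}(t)\sim \tfrac{1}{2\sqrt{\pi}}t^{-1/4}e^{-\frac{2}{3}t^{3/2}}$, $\mathrm{Ai}'(t)\sim -\tfrac{1}{2\sqrt{\pi}}t^{1/4}e^{-\frac{2}{3}t^{3/2}}$, etc., together with $u(t)=\mathrm{Ai}(t)+O(t^{-7/4}e^{-2t^{3/2}})$ from (\ref{u-Pexpansion}), is exactly the routine argument the authors had in mind. Your proof is correct: each leading prefactor is eventually strictly below the declared majorant ($t^{-1/4}/(2\sqrt{\pi})<1$ and $t^{1/4}/(2\sqrt{\pi})<\sqrt{t}$, and similarly for $\mathrm{Bi},\mathrm{Bi}'$), the subleading corrections are $O(t^{-3/2})$ relative factors that get absorbed for $t$ large, and $t_0^P$ is taken as the maximum of the six individual thresholds.
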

The proof is routine. Thus we omit it.
\begin{remark}
	For example, we can take $t_0^P=1$.
	By numerical results, $t_0^P=1$ satisfies all requirements for $t_0^P$  in Proposition \ref{prop-ineq}. 
	We assume  $t_0^P$ is sufficiently large. Also,  the actual value of $t_0^P$ is never needed in the following proofs. 
\end{remark}

\subsection{The three independent solutions defined near $t=\infty$}

For convenience, denote
$$g_1(t)=\mathrm{Bi}(\frac{t}{3^{2/3}}), \quad  g_2(t)=\mathrm{Ai}(\frac{t}{3^{2/3}}), \quad  g_0(t)=0. $$

We will show, for $k=1, 2, 0$, the  following integral equations
\begin{eqnarray}
\Phi_{1k}(t)&=&g_k(t)+\frac{2 \pi}{3^{1/3}} \left[ 
\mathrm{Ai}(\frac{t}{3^{2/3}}) \int_\infty^t  \mathrm{Bi}(\frac{s}{3^{2/3}})
\left(-\frac{2}{3} u(s) \Phi_{0k}(s)-\frac{1}{3} u(s)^2  \Phi_{1k}(s) \right) ds  \right. \nonumber \\
&&\left.-\mathrm{Bi}(\frac{t}{3^{2/3}}) \int_\infty^t  \mathrm{Ai}(\frac{s}{3^{2/3}})\left(-\frac{2}{3} u(s) \Phi_{0k}(s)-\frac{1}{3} u(s)^2  \Phi_{1k}(s) \right) ds
\right] , \label{Phi1k-InteglDEF}  \\
\Phi_{2k}(t)&=&-\frac{3}{2}g_k'(t)-\pi\left[
\mathrm{Ai}'(\frac{t}{3^{2/3}}) \int_\infty^t  \mathrm{Bi}(\frac{s}{3^{2/3}}) \left(-\frac{2}{3} u(s) \Phi_{0k}(s)-\frac{1}{3} u(s)^2  \Phi_{1k}(s) \right) ds \right. \nonumber \\
&& \left. -\mathrm{Bi}'(\frac{t}{3^{2/3}}) \int_\infty^t  \mathrm{Ai}(\frac{s}{3^{2/3}})\left(-\frac{2}{3} u(s) \Phi_{0k}(s)-\frac{1}{3} u(s)^2  \Phi_{1k}(s) \right) ds
\right], \label{Phi2k-InteglDEF} \\
\Phi_{0k}(t)&=&\delta_k^0 +\int_\infty^t \left( \frac{1}{3} u'(s) \Phi_{1k}(s)-\frac{2}{3} u(s) \Phi_{2k}(s)  \right) ds.
\label{Phi0k-InteglDEF}
\end{eqnarray}
define three independent solutions for (\ref{phis-1})-(\ref{phis-0}),
i.e., $(\phi_1(t),\phi_2(t), \phi_0(t))=(\Phi_{1k}(t), \Phi_{2k}(t), \Phi_{0k}(t))$, $k=1, 2, 0$,  
solves (\ref{phis-1})-(\ref{phis-0}).

For $k=1, 2, 0$, define 
\begin{eqnarray}
&&\Phi_{1k}^{(0)}(t)=(1-\delta_k^0)  g_k(t),  \label{Phi1k-0}\\
&&\Phi_{2k}^{(0)}(t)=-\frac{3}{2}(1-\delta_k^0) g_k'(t) , \label{Phi2k-0}\\
&&\Phi_{0k}^{(0)}(t)=\delta_k^0 +\int_\infty^t \left( \frac{1}{3} u'(s) \Phi_{1k}^{(0)}(s)-\frac{2}{3} u(s) \Phi_{2k}^{(0)}(s) \right) ds, \label{Phi0k-0}
\end{eqnarray}
where $\delta_k^0=\left\{ \begin{array}{ll}1, & k=0\\0,& k \neq 0 \end{array} \right.$.

For $j>0$, define
\begin{eqnarray}
\Phi_{1k}^{(j+1)}(t)&=&g_k(t)+\frac{2 \pi}{3^{1/3}} 
\left[
\mathrm{Ai}(\frac{t}{3^{2/3}}) \int_\infty^t  \mathrm{Bi}(\frac{s}{3^{2/3}})
\left(-\frac{2}{3} u(s) \Phi_{0k}^{(j)}(s)-\frac{1}{3} u(s)^2  \Phi_{1k}^{(j)}(s) \right) ds \right. \nonumber\\
&&\left. -\mathrm{Bi}(\frac{t}{3^{2/3}}) \int_\infty^t  \mathrm{Ai}(\frac{s}{3^{2/3}}) 
\left(-\frac{2}{3} u(s) \Phi_{0k}^{(j)}(s)-\frac{1}{3} u(s)^2  \Phi_{1k}^{(j)}(s) \right) ds
\right] , \label{Phi1kj1}\\
\Phi_{2k}^{(j+1)}(t)&=&-\frac{3}{2}g_k'(t)
-\pi \left[
\mathrm{Ai}'(\frac{t}{3^{2/3}}) \int_\infty^t  \mathrm{Bi}(\frac{s}{3^{2/3}}) 
\left(-\frac{2}{3} u(s) \Phi_{0k}^{(j)}(s)-\frac{1}{3} u(s)^2  \Phi_{1k}^{(j)}(s) \right) ds \right. \nonumber\\
&&\left. -\mathrm{Bi}'(\frac{t}{3^{2/3}}) \int_\infty^t  \mathrm{Ai}(\frac{s}{3^{2/3}}) 
\left(-\frac{2}{3} u(s) \Phi_{0k}^{(j)}(s)-\frac{1}{3} u(s)^2  \Phi_{1k}^{(j)}(s) \right) ds
\right],\label{Phi2kj1}\\
\Phi_{0k}^{(j+1)}(t)&=&\delta_k^0+\int_\infty^t \left( \frac{1}{3} u'(s) \Phi_{1k}^{(j+1)}(s)-\frac{2}{3} u(s) \Phi_{2k}^{(j+1)}(s)  \right) ds.
\label{Phi0kj1}
\end{eqnarray}

\begin{proposition} 
	For $t \ge t_0^P$,
	\begin{eqnarray*}
		&&|\Phi_{1k}^{(j+1)}(t)-\Phi_{1k}^{(j)}(t)|< C_{1k}^{(j+1)} e^{-(\frac{4}{3}j+\frac{4}{9} k+\frac{2}{3})t^{3/2}}, \quad
		|\Phi_{2k}^{(j+1)}(t)-\Phi_{2k}^{(j)}(t)|<  C_{2k}^{(j+1)} t^{\frac{1}{2}}  e^{-(\frac{4}{3}j+\frac{4}{9} k+\frac{2}{3})t^{3/2}},\\
		&&|\Phi_{0k}^{(j+1)}(t)-\Phi_{0k}^{(j)}(t)|< C_{0k}^{(j+1)} e^{-(\frac{4}{3}j+\frac{4}{9} k+\frac{4}{3} )t^{3/2}},
	\end{eqnarray*}	
	where
	\begin{eqnarray*}
		&&C_{11}^{(j)}=2 C_{21}^{(j)}=\pi^{j} 3^{\frac{-j}{3}} \prod_{i=1}^j \frac{6 i-1}{(3 i-1)(3 i-2)},
		\quad C_{01}^{(j)}=\pi^{j} 3^{\frac{-j}{3}} \prod_{i=1}^j \frac{6 i-1}{(3 i-1) (3 i+1)},  \\
		&&C_{12}^{(j)}=2 C_{22}^{(j)}=\pi^{j} 3^{\frac{-j}{3}} \prod_{i=1}^j \frac{6 i+1}{3 i (3 i-1)},
		\quad C_{02}^{(j)}=\frac{1}{2}\pi^{j} 3^{\frac{-j}{3}} \prod_{i=1}^j \frac{6 i+1}{3 i (3 i+2) }, \\
		&&C_{10}^{(j)}=2 C_{20}^{(j)}=3 \pi^{j} 3^{\frac{-j}{3}}  \frac{1}{(j-1)!} \prod_{i=1}^{j} \frac{2 i-1}{3 i-2},
		\quad C_{00}^{(j)}=\pi^{j} 3^{\frac{-j}{3}} \frac{1}{j!}\prod_{i=1}^{j} \frac{2 i-1}{3 i-2} .
	\end{eqnarray*}	
\end{proposition}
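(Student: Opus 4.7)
Proof proposal. The plan is to establish all six bounds simultaneously by induction on $j \ge 0$, using the recurrences (\ref{Phi1kj1})--(\ref{Phi0kj1}) together with the exponential estimates in Proposition \ref{prop-ineq}. The whole argument is a careful Picard fixed-point estimate; the essence is that each iteration gains an extra factor of roughly $e^{-\frac{4}{3}t^{3/2}}$.

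First I would record an elementary integral lemma obtained via $w = s^{3/2}$:
\[ \int_t^\infty s^{1/2} e^{-a s^{3/2}}\,ds = \frac{2}{3a}e^{-a t^{3/2}}, \qquad \int_t^\infty e^{-a s^{3/2}}\,ds \le \frac{2}{3 a}\,t^{-1/2}e^{-at^{3/2}} \quad (a>0,\ t>0). \]
Combined with the pointwise bounds $|\mathrm{Ai}(t/3^{2/3})|,|\mathrm{Bi}(t/3^{2/3})| \le e^{\pm\frac{2}{9}t^{3/2}}$, $|u(s)^2|\le e^{-\frac{4}{3}s^{3/2}}$ and the corresponding ones for $\mathrm{Ai}'$, $\mathrm{Bi}'$, $u$, $u'$, this lemma handles every integral arising in (\ref{Phi1kj1})--(\ref{Phi0kj1}). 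The net gain of $e^{-\frac{4}{3}t^{3/2}}$ per iteration comes from the $u^2$ kernel contributing $e^{-\frac{4}{3}s^{3/2}}$, while the product of Airy prefactors $\mathrm{Ai}(t/3^{2/3})\mathrm{Bi}(s/3^{2/3})$ (or the other cross-term) contributes at worst $e^{\frac{2}{9}(s^{3/2}-t^{3/2})}$; after integrating one is left with exactly the advertised exponent $\frac{4}{3}(j+1)+\frac{4}{9}k+\frac{2}{3}$.

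For the base case $j=0$ I would treat the three seeds separately. When $k=0$ the seed $g_0 = 0$, so $\Phi_{10}^{(0)} = \Phi_{20}^{(0)} = 0$ and $\Phi_{00}^{(0)}\equiv 1$, and the first correction is driven entirely by $u(s)\cdot 1$; when $k=1,2$ the seeds $g_k = \mathrm{Bi}(t/3^{2/3})$ or $\mathrm{Ai}(t/3^{2/3})$ contribute an additional factor $e^{\pm\frac{2}{9}s^{3/2}}$ inside the integrals, which accounts precisely for the $\frac{4}{9}k$ shift in the target exponent. In each case a direct application of the Airy bounds and the integral lemma yields the stated $C_{mk}^{(1)}$, with the rational factors matching $(6 \cdot 1 \pm 1)/((3 \cdot 1 \pm a)(3 \cdot 1 \pm b))$ for the appropriate $a,b$.

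For the inductive step, subtract (\ref{Phi1kj1})--(\ref{Phi0kj1}) at consecutive levels; this kills all seed terms and leaves only integrals of the differences $\Phi_{mk}^{(j)}-\Phi_{mk}^{(j-1)}$. Apply the inductive hypothesis together with the Airy bounds inside those integrals, pull out the scalar prefactors $\frac{2\pi}{3^{1/3}}$ or $\pi$, and use the integral lemma. A direct computation yields a recursion of the shape $C_{mk}^{(j+1)} = \frac{\pi}{3^{1/3}} \cdot \frac{6(j+1)\pm 1}{(3(j+1)\pm a)(3(j+1)\pm b)}\, C_{mk}^{(j)}$, whose telescoping product reproduces the proposition exactly. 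The main obstacle is just the bookkeeping of these explicit constants: one must not attempt to exploit cancellation between $\mathrm{Ai}(t/3^{2/3})\int\mathrm{Bi}(s/3^{2/3})(\ldots)\,ds$ and $\mathrm{Bi}(t/3^{2/3})\int \mathrm{Ai}(s/3^{2/3})(\ldots)\,ds$ (the triangle inequality suffices, because $u^2$ dominates the net Airy growth), and one must carefully absorb the polynomial $t^{1/2}$ appearing in the $\Phi_{2k}$ bound --- inherited from $\mathrm{Ai}'$, $\mathrm{Bi}'$, $u'$ --- using the first form of the integral lemma rather than the second.
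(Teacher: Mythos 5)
The paper itself states this proposition without proof (and the subsequent convergence theorem as well), so there is no ``paper's own proof'' to compare against; your proposal is filling a genuine gap. That said, your strategy is correct and is the natural one: a Picard-iteration estimate by induction on $j$, powered by the elementary $s^{1/2}\,e^{-as^{3/2}}$ integral lemma, the Airy bounds of Proposition~\ref{prop-ineq}, and the observation that subtracting consecutive iterates kills the seed $g_k$ (and the $\delta_k^0$ in $\Phi_{0k}$) so that only differences from the previous level survive inside the integrals. I spot-checked the bookkeeping through $j=2$, $k=1$: starting from $|\Phi_{01}^{(0)}| \le e^{-\frac{4}{9}s^{3/2}}$ one finds $|\Delta_1^{(0)}| \le e^{-\frac{10}{9}s^{3/2}}$, whence $C_{11}^{(1)} = \frac{2\pi}{3^{1/3}}\left(\frac{3}{4}+\frac{1}{2}\right) = \frac{5\pi}{2\cdot 3^{1/3}}$, then $C_{01}^{(1)} = \frac{1}{4}C_{11}^{(1)} = \frac{5\pi}{8\cdot 3^{1/3}}$, and feeding these back gives $C_{11}^{(2)} = \frac{2\pi}{3^{1/3}}\left(\frac{2}{3}C_{01}^{(1)}+\frac{1}{3}C_{11}^{(1)}\right)\left(\frac{3}{10}+\frac{1}{4}\right) = \frac{11\pi^2}{8\cdot 3^{2/3}}$, exactly as the closed product gives. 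The one inaccuracy is cosmetic: your claimed recursion template $C_{mk}^{(j+1)} = \frac{\pi}{3^{1/3}}\frac{6(j+1)\pm1}{(3(j+1)\pm a)(3(j+1)\pm b)}C_{mk}^{(j)}$ fits $k=1,2$ but not $k=0$, where the extra $1/(j-1)!$ and $1/j!$ factors make the ratio $\frac{\pi}{3^{1/3}}\cdot\frac{6j+3}{3j(3j+1)}$ --- same flavor, slightly different form. Also, strictly speaking the recursion is a coupled two-stage one (the $\Phi_{0k}$ bound at level $j+1$ is derived from the $\Phi_{1k},\Phi_{2k}$ bounds at level $j+1$, which in turn use the $\Phi_{0k},\Phi_{1k}$ bounds at level $j$); it would be worth making this ordering explicit in a write-up, since getting it backwards would break the induction.
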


\begin{theorem}
$( \Phi_{1k}^{(j)}, \Phi_{2k}^{(j)}, \Phi_{0k}^{(j)})$, $k=1, 2, 0$,  defined by (\ref{Phi1k-0})-(\ref{Phi0kj1})
converge to the solutions of (\ref{phis-1})-(\ref{phis-0})  for $t \ge t_0^P$.
\end{theorem}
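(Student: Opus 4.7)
The plan is to establish uniform convergence of the Picard-type iteration (\ref{Phi1k-0})--(\ref{Phi0kj1}) from the telescoping bounds in the preceding proposition, pass to the limit inside the integrals to obtain the integral equations (\ref{Phi1k-InteglDEF})--(\ref{Phi0k-InteglDEF}), and then differentiate to recover the ODE system (\ref{phis-1})--(\ref{phis-0}).

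For convergence, I would form the partial sums $\Phi_{ik}^{(J)}=\Phi_{ik}^{(0)}+\sum_{j=0}^{J-1}(\Phi_{ik}^{(j+1)}-\Phi_{ik}^{(j)})$ and apply the proposition term by term. A ratio test on the explicit products defining $C_{ik}^{(j)}$ shows $C_{ik}^{(j+1)}/C_{ik}^{(j)}=O(1/j)$ (a linear numerator of size $\sim 6j$ against a quadratic denominator of size $\sim 9j^{2}$; for $k=0$ there is an additional explicit $1/j!$ factor), so $\sum_{j}C_{ik}^{(j)}$ converges. Since the weights $e^{-(4j/3)t^{3/2}}$ are bounded by $1$ on $[t_0^P,\infty)$ and the factor $\sqrt{t}$ in the bound for $\Phi_{2k}^{(j+1)}-\Phi_{2k}^{(j)}$ is absorbed on any compact subinterval, the telescoping series is uniformly dominated by a convergent numerical series, so the limits $\Phi_{ik}:=\lim_{j\to\infty}\Phi_{ik}^{(j)}$ exist and are continuous on $[t_0^P,\infty)$. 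Dominated convergence then applies inside (\ref{Phi1kj1})--(\ref{Phi0kj1}): the integrands contain either $u(s)$ or $u(s)^{2}$, controlled by Proposition \ref{prop-ineq} by $e^{-\frac{2}{3}s^{3/2}}$, times $\mathrm{Bi}(s/3^{2/3})$ which grows only as $e^{\frac{2}{3\sqrt{3}}s^{3/2}}$, so the product is absolutely integrable on $[t_0^P,\infty)$; passing $j\to\infty$ in each recursion yields (\ref{Phi1k-InteglDEF})--(\ref{Phi0k-InteglDEF}) for the limits.

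To pass from the integral to the differential form, I differentiate (\ref{Phi1k-InteglDEF}) in $t$: the two boundary contributions at the upper limit cancel because $\mathrm{Ai}(t/3^{2/3})\mathrm{Bi}(t/3^{2/3})-\mathrm{Bi}(t/3^{2/3})\mathrm{Ai}(t/3^{2/3})=0$, and the chain-rule factor $1/3^{2/3}$ combined with the coefficient $2\pi/3^{1/3}$ produces precisely $-\tfrac{2}{3}\Phi_{2k}$ after comparison with (\ref{Phi2k-InteglDEF}), giving (\ref{phis-1}). Differentiating (\ref{Phi2k-InteglDEF}), the analogous boundary terms now survive and combine through the Airy Wronskian $\mathrm{Ai}\,\mathrm{Bi}'-\mathrm{Ai}'\mathrm{Bi}=1/\pi$ to reproduce exactly $-\tfrac{2}{3}u\Phi_{0k}-\tfrac{1}{3}u^{2}\Phi_{1k}$, while the second-derivative pieces simplify via $\mathrm{Ai}''(x)=x\mathrm{Ai}(x)$ and $\mathrm{Bi}''(x)=x\mathrm{Bi}(x)$ to reassemble into $-\tfrac{t}{6}\Phi_{1k}$; together these give (\ref{phis-2}), and (\ref{phis-0}) follows directly from (\ref{Phi0k-InteglDEF}). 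Linear independence of the three solutions is visible at $t\to\infty$ from the three linearly independent leading triples $(g_{1},-\tfrac{3}{2}g_{1}',0)$, $(g_{2},-\tfrac{3}{2}g_{2}',0)$, $(0,0,1)$. The main obstacle is the bookkeeping in this final differentiation step: one must track the scaling factors $3^{1/3},3^{2/3}$ from the chain rule on $\mathrm{Ai}(t/3^{2/3})$ and get the sign of the Wronskian-induced boundary term right so that the inhomogeneous piece of (\ref{phis-2}) is reproduced with the correct coefficient, a routine but error-prone computation.
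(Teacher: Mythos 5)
Your proposal is correct and it is exactly the Picard-iteration argument that Proposition 4.2 is engineered to support: bound the telescoping differences by the explicit constants $C_{ik}^{(j)}$, observe that the ratio $C_{ik}^{(j+1)}/C_{ik}^{(j)}=O(1/j)$ forces $\sum_j C_{ik}^{(j)}<\infty$, conclude uniform convergence, pass to the limit inside the iteration to obtain the integral equations (\ref{Phi1k-InteglDEF})--(\ref{Phi0k-InteglDEF}), and then differentiate using the Airy Wronskian and $\mathrm{Ai}''(z)=z\,\mathrm{Ai}(z)$, $\mathrm{Bi}''(z)=z\,\mathrm{Bi}(z)$ to recover (\ref{phis-1})--(\ref{phis-0}). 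I verified the bookkeeping in the final step: differentiating (\ref{Phi1k-InteglDEF}) kills the boundary terms and the chain-rule factor $3^{-2/3}$ turns $2\pi/3^{1/3}$ into $2\pi/3$, matching $-\tfrac{2}{3}\Phi_{2k}$; differentiating (\ref{Phi2k-InteglDEF}) produces a surviving Wronskian boundary term equal to $-\tfrac{2}{3}u\Phi_{0k}-\tfrac{1}{3}u^2\Phi_{1k}$ and an $\mathrm{Ai}''/\mathrm{Bi}''$ piece which, combined with $-\tfrac{3}{2}g_k''=-\tfrac{t}{6}g_k$, reassembles exactly into $-\tfrac{t}{6}\Phi_{1k}$.

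Two small remarks. First, the growth rate of $\mathrm{Bi}(s/3^{2/3})$ is $e^{\frac{2}{9}s^{3/2}}$, not $e^{\frac{2}{3\sqrt3}s^{3/2}}$: since $(s/3^{2/3})^{3/2}=s^{3/2}/3$, the exponent is $\tfrac{2}{3}\cdot\tfrac{1}{3}=\tfrac{2}{9}$. Both constants are less than $\tfrac{2}{3}$, so your integrability conclusion stands, but the stated coefficient is off. Second, the restriction to compact subintervals for absorbing the $\sqrt{t}$ factor in the $\Phi_{2k}$ bound is unnecessarily cautious: $\sqrt{t}\,e^{-(\frac{4}{3}j+\frac{4}{9}k+\frac{2}{3})t^{3/2}}$ is uniformly bounded on all of $[t_0^P,\infty)$ by its global maximum, so the convergence is in fact uniform on the whole half-line, which is the form you need to obtain solutions on $[t_0^P,\infty)$ rather than just continuous limits on compacts.
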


\section{Asymptotics of $q_2$ and $\alpha$ at $t=\infty$ \label{secPInf}}
At $t=\infty$, it is straightforward to verify
\begin{eqnarray}
&&\Phi_{11}(t)=\mathrm{Bi}(\frac{t}{3^{2/3}})+O(e^{-\frac{10}{9} t^{3/2}}),  \label{Phi11P} \\
&&\Phi_{21}(t)= -\frac{3^{1/3}}{2}\mathrm{Bi}'(\frac{t}{3^{2/3}})+O(\sqrt{t}e^{-\frac{10}{9} t^{3/2}}),  \label{Phi21P} \\
&&\Phi_{01}(t)=\int_\infty^t \left( \frac{1}{3} \mathrm{Ai}'(s) \mathrm{Bi}(\frac{s}{3^{2/3}})
+\frac{1}{3^{2/3}} \mathrm{Ai}(s) \mathrm{Bi}'(\frac{s}{3^{2/3}}) \right) ds +O(e^{-\frac{16}{9} t^{3/2}}), \label{Phi01P} \\
&&\Phi_{12}(t)=\mathrm{Ai}(\frac{t}{3^{2/3}})+O(e^{-\frac{14}{9} t^{3/2}}), \label{Phi12P} \\
&&\Phi_{22}(t)= -\frac{3^{1/3}}{2}\mathrm{Ai}'(\frac{t}{3^{2/3}}) +O(\sqrt{t}e^{-\frac{14}{9} t^{3/2}}), \label{Phi22P} \\
&&\Phi_{02}(t)=\int_\infty^t \left( \frac{1}{3} \mathrm{Ai}'(s) \mathrm{Ai}(\frac{s}{3^{2/3}})
+\frac{1}{3^{2/3}} \mathrm{Ai}(s) \mathrm{Ai}'(\frac{s}{3^{2/3}}) \right) ds +O(e^{-\frac{20}{9} t^{3/2}}), \label{Phi02P}\\
&&\Phi_{10}(t)=-\frac{4 \pi}{3^{4/3}}\left(\mathrm{Ai}(\frac{t}{3^{2/3}}) \int_\infty^t  \mathrm{Bi}(\frac{s}{3^{2/3}}) \mathrm{Ai}(s) ds -\mathrm{Bi}(\frac{t}{3^{2/3}}) \int_\infty^t  \mathrm{Ai}(\frac{s}{3^{2/3}}) \mathrm{Ai}(s) ds  \right)+O(e^{-2 t^{3/2}}) , \label{Phi10P} \\
&&\Phi_{20}(t)=\frac{2 \pi}{3}\left(\mathrm{Ai}'(\frac{t}{3^{2/3}}) \int_\infty^t  \mathrm{Bi}(\frac{s}{3^{2/3}}) \mathrm{Ai}(s) ds -\mathrm{Bi}'(\frac{t}{3^{2/3}}) \int_\infty^t  \mathrm{Ai}(\frac{s}{3^{2/3}}) \mathrm{Ai}(s) ds  \right) +O(\sqrt{t}e^{-2 t^{3/2}}), \label{Phi20P} \\
&&\Phi_{00}(t)=1+\int_\infty^t \left( \frac{1}{3} \mathrm{Ai}'(s) \tilde \Phi_{10}^{(1)}(s)
-\frac{2}{3} \mathrm{Ai}(s) \tilde \Phi_{20}^{(1)}(s)  \right) ds+ O(e^{-\frac{8}{3} t^{3/2}}), \label{Phi00P}
\end{eqnarray}
where $\tilde \Phi_{10}^{(1)}$ and $\tilde \Phi_{20}^{(1)}$  
are defined by dropping the error terms of (\ref{Phi10P}) and (\ref{Phi20P}) respectively.
\begin{lemma} \label{LEM-Classification}
The asymptotics of a solution of (\ref{wq2alpha-deqsA})-(\ref{wq2alpha-deqs}) 	at $t=\infty$ must belong to one of 
the following  three classes.
	
	\begin{itemize}
		\item[] Class A: \quad  $\tilde q_2(t) \xlongrightarrow {t \rightarrow \infty} 0$ 
		and $\alpha(t) \xlongrightarrow {t \rightarrow \infty} 0$. 
		\begin{eqnarray}
		&&\tilde{q_2}(t) = \left( c_1 \mathrm{Bi}(\frac{t}{3^{2/3}})
		+c_2 \mathrm{Ai}(\frac{t}{3^{2/3}}) +c_1^2 \tilde M_2(t)  e^{-\frac{8}{9} t^{3/2}}\right) \mathrm{Ai}(t)+o\left(e^{-\frac{4}{3} t^{3/2}} \right), \label{wq2-Pinf}\\
		&&\alpha(t) =
		-\frac{3^{1/3}}{2} \left( c_1 \mathrm{Bi}'(\frac{t}{3^{2/3}})+c_2 \mathrm{Ai}'(\frac{t}{3^{2/3}})
		+c_1^2 \tilde N_2(t)  e^{-\frac{8}{9} t^{3/2}} \right)
		 \mathrm{Ai}(t)+o\left(e^{-\frac{4}{3} t^{3/2}} \right), \label{alpha-Pinf}
		\end{eqnarray}
		where
		\begin{eqnarray}
		&&\tilde M_2(t)= -\mathrm{Bi}(\frac{t}{3^{2/3}}) \int_\infty^t \left( \frac{1}{3} \mathrm{Ai}'(s) \mathrm{Bi}(\frac{s}{3^{2/3}})
		+\frac{1}{3^{2/3}} \mathrm{Ai}(s) \mathrm{Bi}'(\frac{s}{3^{2/3}}) \right) ds   , \nonumber\\
		&&\tilde N_2(t)=-\mathrm{Bi}'(\frac{t}{3^{2/3}}) \int_\infty^t \left( \frac{1}{3} \mathrm{Ai}'(s) \mathrm{Bi}(\frac{s}{3^{2/3}})
		+\frac{1}{3^{2/3}} \mathrm{Ai}(s) \mathrm{Bi}'(\frac{s}{3^{2/3}}) \right) ds . \nonumber
		\end{eqnarray}
		\item[] Class B: \quad $\tilde q_2(t) \xlongrightarrow {t \rightarrow \infty} \infty$ 
		and $\alpha(t) \xlongrightarrow {t \rightarrow \infty} -\infty$.
		\begin{eqnarray}
		&&\tilde{q}(t)=\left[ \mathrm{Bi}(\frac{t}{3^{2/3}}) 
		+c_2 \left( \mathrm{Ai}(\frac{t}{3^{2/3}}) 
		-\mathrm{Bi}(\frac{t}{3^{2/3}})  \frac{ \Phi_{02}^{(0)}(t)}{\Phi_{01}^{(0)}(t)} \right) \right] \frac{\mathrm{Ai}(t)}{\Phi_{01}^{(0)}(t)}+o\left(e^{-\frac{8}{9} t^{3/2}} \right), \nonumber\\
		&&\alpha(t) = 
		-\frac{3^{1/3}}{2}\ \left[\mathrm{Bi}'(\frac{t}{3^{2/3}})
		+c_2 \left(\mathrm{Ai}'(\frac{t}{3^{2/3}})
	-\mathrm{Bi}'(\frac{t}{3^{2/3}})\frac{ \Phi_{02}^{(0)}(t)}{\Phi_{01}^{(0)}(t)} \right) \right] \frac{\mathrm{Ai}(t)}{\Phi_{01}^{(0)}(t)}+o\left(e^{-\frac{8}{9} t^{3/2}} \right).\nonumber 
		\end{eqnarray}
		\item[] Class C: \quad $\tilde{q}(t) \xlongrightarrow {t \rightarrow \infty} 2$ and 
		$\alpha(t) \xlongrightarrow {t \rightarrow \infty} \sqrt{t}$.
	\end{itemize}
	 
\end{lemma}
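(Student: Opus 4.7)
The plan is to use the linearization (\ref{Cole-Holpf}) to lift any smooth solution of (\ref{wq2alpha-deqsA})-(\ref{wq2alpha-deqs}) near $t=\infty$ to a triple $(\phi_0, \phi_1, \phi_2)$ satisfying the linear system (\ref{phis-1})-(\ref{phis-0}). Given $(\tilde q_2, \alpha)$, one sets $\phi_0(t) := \exp\int (\tfrac{u'\tilde q_2}{3u} - \tfrac{2\alpha}{3})\,dt$, which is well-defined and nonvanishing for large $t$, and then $\phi_1 = \tilde q_2\phi_0/u$, $\phi_2 = \alpha\phi_0/u$. Since only the ratios $\phi_k/\phi_0$ are needed to recover $(\tilde q_2, \alpha)$, solutions of the nonlinear system correspond bijectively to projective triples $[C_0:C_1:C_2]$ in the expansion $\phi_i = C_0\Phi_{i0} + C_1\Phi_{i1} + C_2\Phi_{i2}$ relative to the basis constructed in the previous section.

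The lemma is then a trichotomy based on which of $C_0, C_1$ vanish, giving three mutually exclusive and exhaustive cases: (A) $C_0\neq 0$, normalize $C_0=1$ with $(c_1, c_2):=(C_1, C_2)$ free; (B) $C_0=0$ and $C_1\neq 0$, normalize $C_1=1$ with $c_2:=C_2$ free; (C) $C_0=C_1=0$. I would treat each case by substituting the asymptotics (\ref{Phi11P})-(\ref{Phi00P}) into $\tilde q_2 = \phi_1 u/\phi_0$ and $\alpha = \phi_2 u/\phi_0$. For Case A, where $\phi_0\to 1$, I expand $\phi_0^{-1}$ as a geometric series in the exponentially small deviation $(\Phi_{00}-1) + c_1\Phi_{01} + c_2\Phi_{02}$; the linear-in-$c_k$ terms yield the $\mathrm{Bi}$ and $\mathrm{Ai}$ contributions of (\ref{wq2-Pinf})-(\ref{alpha-Pinf}), while the $c_1^2$ corrections arise as the product of the $c_1\mathrm{Bi}(t/3^{2/3})$ leading term of $\phi_1$ with the $-c_1\Phi_{01}$ leading correction of $1/\phi_0$; matching against (\ref{Phi01P}) identifies $\tilde M_2$ and $\tilde N_2$ explicitly. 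For Case B, the denominator $\phi_0 \approx \Phi_{01}^{(0)} + c_2\Phi_{02}^{(0)}$ is itself exponentially small, and the Class B formulas follow by direct division using the leading asymptotics of $\Phi_{11}, \Phi_{12}, \Phi_{21}, \Phi_{22}$. For Case C, both numerator and denominator are $\mathrm{Ai}$-dominated; a Laplace-type evaluation of $\Phi_{02}(t) = -\int_t^\infty(\cdots)\,ds$, combined with cancellation of exponential rates against $u\sim\mathrm{Ai}(t)$ in the ratios, produces the finite limit $\tilde q_2\to 2$ and the algebraic growth $\alpha\sim\sqrt{t}$.

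The principal obstacle is the careful tracking of exponential orders, especially in Case A: the subleading $c_1^2$ correction has magnitude $e^{-\frac{8}{9}t^{3/2}}$, intermediate between the leading $c_1$-term of size $e^{-\frac{4}{9}t^{3/2}}$ and the claimed remainder $o(e^{-\frac{4}{3}t^{3/2}})$, so it must be isolated by applying the iteration bounds of the previous proposition twice rather than once. Case C in turn requires an explicit Laplace-method computation of the leading constant in $\Phi_{02}(t)$ in order to pin down the limit value $2$. Finally I would remark that only Class A satisfies $\alpha\to 0$ of (\ref{GIKM-pinf}); Classes B and C represent alternative generic directions transverse to this boundary, and this trichotomy is what underlies the two-parameter family of boundary-respecting solutions indexed by $(c_1, c_2)$.
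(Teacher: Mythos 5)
Your proposal is correct and follows essentially the same route as the paper: write the general solution of the linear system as $(\phi_1,\phi_2,\phi_0)=\sum_k c_k(\Phi_{1k},\Phi_{2k},\Phi_{0k})$, trichotomize by whether $c_0\neq 0$, $c_0=0\neq c_1$, or $c_0=c_1=0$, normalize the leading nonzero coefficient to~$1$, and then substitute the asymptotics (\ref{Phi11P})--(\ref{Phi00P}) into $\tilde q_2=\phi_1 u/\phi_0$ and $\alpha=\phi_2 u/\phi_0$, expanding $1/\phi_0$ geometrically to pick up the $c_1^2$ cross terms that generate $\tilde M_2,\tilde N_2$. The one genuine addition in your write-up, absent from the paper, is making the reverse direction of the Cole--Hopf correspondence explicit: you construct $\phi_0=\exp\int(\tfrac{u'\tilde q_2}{3u}-\tfrac{2\alpha}{3})\,dt$ directly from a given smooth nonlinear solution, which is what justifies that \emph{every} such solution lands in the $(c_0:c_1:c_2)$-parametrized family and hence that the trichotomy is exhaustive; the paper leaves this as implicit. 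This is a useful extra sentence of rigor, not a change of method.
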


\begin{proof}
The general solution of (\ref{phis-1})-(\ref{phis-0}) is
$(\phi_1, \phi_2,\phi_0)\allowbreak  =(c_1 \Phi_{11}+c_2 \Phi_{12}+c_0 \Phi_{10}, 
\allowbreak c_1 \Phi_{21}+c_2 \Phi_{22}+c_0 \Phi_{20},\allowbreak 
c_1 \Phi_{01}+c_2 \Phi_{02}+c_0 \Phi_{00} )$.
So $\tilde q_2=\frac{c_1 \Phi_{11}+c_2 \Phi_{12}+c_0 \Phi_{10}}{c_1 \Phi_{01}+c_2 \Phi_{02}+c_0 \Phi_{00} }u $ and 
$\alpha= \frac{c_1 \Phi_{21}+c_2 \Phi_{22}+c_0 \Phi_{20}}{c_1 \Phi_{01}+c_2 \Phi_{02}+c_0 \Phi_{00} } u$.
If $c_0 \neq 0$, $c_0$ can be taken as $1$. 
By (\ref{Phi11P})-(\ref{Phi00P}), we have $\tilde q_2=(c_1 \Phi_{11}+c_2 \Phi_{12}+o(e^{-\frac{2}{3}t^{3/2}}) )(1-c_1 \Phi_{01}+o(e^{-\frac{8}{9}t^{3/2}})) u =(c_1 \Phi_{11}+ c_2 \Phi_{12}-c_1^2  \Phi_{11} \Phi_{01} +o(e^{-\frac{2}{3}t^{3/2}})) u $.
Also considering $u(t)=\mathrm{Ai}(t)+o(e^{-2 t^{3/2}})$ and (\ref{Phi11P})-(\ref{Phi00P}), 
we finally get (\ref{wq2-Pinf}).
We can verify directly $\tilde M_2(t)=o(\mathrm{Ai}(\frac{t}{3^{2/3}}))$.
Similarly, (\ref{alpha-Pinf}) is obtained.
Therefore, the Class A describes the asymptotics of $\tilde q_2(t)$ and $\alpha(t)$ at $t=\infty$  for $c_0 \neq 0$.
If $c_0=0$ and $c_1 \neq 0$, $c_1$ can be taken as $1$. Then we can prove the asymptotics belong to Class B in this case.
At last, if $c_0=c_1=0$, $c_2$ can be taken as $1$. Class C describes the  asymptotics of this case.
\end{proof}

\begin{proposition} \label{Prop-Asymp-P}
	A solution of (\ref{q2alpha-deqsA})-(\ref{q2alpha-deqs}),  which has property (\ref{GIKM-pinf}), 
	must have asymptotics  (\ref{PinfSolution-A})-(\ref{alphaPinf00}).
\end{proposition}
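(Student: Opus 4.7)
The plan is to obtain Proposition \ref{Prop-Asymp-P} essentially as a corollary of Lemma \ref{LEM-Classification}, augmented by a direct calculation for the degenerate sub-case $c_1=c_2=0$. First I would invoke Lemma \ref{LEM-Classification}: every solution of (\ref{wq2alpha-deqsA})--(\ref{wq2alpha-deqs}), equivalently (\ref{q2alpha-deqsA})--(\ref{q2alpha-deqs}) after the substitution $\tilde q_2 = 1+q_2$, falls into exactly one of the three asymptotic classes A, B, C. The boundary condition (\ref{GIKM-pinf}) says $q_2(t)\to -1$ and $\alpha(t)\to 0$, i.e.\ $\tilde q_2(t)\to 0$ and $\alpha(t)\to 0$. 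Class B forces $\tilde q_2\to\infty$ and Class C forces $\tilde q_2\to 2$ and $\alpha\sim \sqrt t\to\infty$, so both are incompatible with (\ref{GIKM-pinf}). Hence any solution satisfying (\ref{GIKM-pinf}) must be of Class A.

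Next, translating the Class A expressions (\ref{wq2-Pinf})--(\ref{alpha-Pinf}) back through $q_2=\tilde q_2-1$ reproduces (\ref{PinfSolution-A})--(\ref{PinfSolution}) verbatim, with the same parameters $c_1,c_2$ and the same $\tilde M_2,\tilde N_2$ inherited from the lemma. So for every admissible solution there exist $c_1,c_2$ giving (\ref{PinfSolution-A})--(\ref{PinfSolution}). This takes care of the generic part of the proposition; all that remains is to explain what the leading term looks like when the obvious candidates vanish, i.e.\ when $c_1=c_2=0$.

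For that degenerate case I would return to the general linear representation from the proof of Lemma \ref{LEM-Classification}: with $(\phi_1,\phi_2,\phi_0)=c_1(\Phi_{11},\Phi_{21},\Phi_{01})+c_2(\Phi_{12},\Phi_{22},\Phi_{02})+c_0(\Phi_{10},\Phi_{20},\Phi_{00})$ and the normalization $c_0=1$, the Cole--Hopf formula (\ref{Cole-Holpf}) reduces, when $c_1=c_2=0$, to $\tilde q_2(t)=\Phi_{10}(t)\,u(t)/\Phi_{00}(t)$ and $\alpha(t)=\Phi_{20}(t)\,u(t)/\Phi_{00}(t)$. Plugging in the explicit asymptotics (\ref{Phi10P}), (\ref{Phi20P}), (\ref{Phi00P}) together with $u(t)=\mathrm{Ai}(t)+o(e^{-2t^{3/2}})$ from (\ref{u-Pexpansion}), and using $\Phi_{00}(t)=1+O(e^{-\tfrac{4}{3}t^{3/2}})$, one reads off precisely the formulas (\ref{q2Pinf00}) and (\ref{alphaPinf00}) after expanding $1/\Phi_{00}=1+o(1)$ and absorbing the next-order terms into the stated error.

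The main obstacle I anticipate is purely bookkeeping: tracking the exponential decay rates carefully enough that the leading term in each case sits above the error term. For generic $(c_1,c_2)$ the leading contribution in $\tilde q_2$ is of order $\mathrm{Ai}(t)\,\mathrm{Bi}(t/3^{2/3})\sim e^{-\tfrac{2}{3}t^{3/2}+\tfrac{2}{9}t^{3/2}}=e^{-\tfrac{4}{9}t^{3/2}}$, which dominates the asserted $o(e^{-\tfrac{4}{3}t^{3/2}})$ remainder; for the $c_1=c_2=0$ case the leading term in (\ref{q2Pinf00}) is of order $e^{-\tfrac{2}{3}t^{3/2}}\cdot e^{-\tfrac{2}{9}t^{3/2}}=e^{-\tfrac{8}{9}t^{3/2}}$ (from $\mathrm{Ai}(t)$ times an $\mathrm{Ai}(t/3^{2/3})$-weighted Airy-integral), which exceeds the $o(e^{-\tfrac{8}{3}t^{3/2}})$ remainder coming from (\ref{Phi10P}). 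Verifying these inequalities on decay rates, and that the contributions of $\tilde M_2, \tilde N_2$ in (\ref{wq2-Pinf})--(\ref{alpha-Pinf}) are correctly of order $c_1^2 e^{-\tfrac{8}{9}t^{3/2}}\mathrm{Ai}(t)$ and therefore below the $c_1$-linear terms when $c_1\neq 0$ but above the $o(e^{-\tfrac{4}{3}t^{3/2}})$ remainder, is the bulk of the work; it follows from Proposition \ref{prop-ineq} and the explicit definitions of $\tilde M_2, \tilde N_2$ in Lemma \ref{LEM-Classification}.
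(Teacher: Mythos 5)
Your proposal follows exactly the paper's own argument: invoke Lemma \ref{LEM-Classification} to see that the boundary condition (\ref{GIKM-pinf}) forces Class A, which gives (\ref{PinfSolution-A})--(\ref{PinfSolution}), and then read off the degenerate case $c_1=c_2=0$ directly from (\ref{Phi10P})--(\ref{Phi00P}). The only minor slip is in your decay-rate bookkeeping for the $c_1=c_2=0$ case: the leading term in (\ref{q2Pinf00}) decays like $e^{-\frac{4}{3}t^{3/2}}$ (you must also count the factor $\sim e^{-\frac{4}{9}t^{3/2}}$ from the integral $\int_\infty^t \mathrm{Ai}(s)\mathrm{Bi}(s/3^{2/3})\,ds$, consistent with (\ref{q2-pinf-expansion})), not $e^{-\frac{8}{9}t^{3/2}}$ as you wrote, but since $\tfrac{4}{3}<\tfrac{8}{3}$ this does not affect the conclusion that the leading term dominates the remainder.
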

\begin{proof}
	Let $(\phi_1, \phi_2,\phi_0)$ and $(c_1, c_2, c_0)$ be the ones defined in the proof of Lemma \ref{LEM-Classification}.
	By (\ref{GIKM-pinf}) and $\tilde q_2=q_2+1$, we know $\tilde q_2(t) \xlongrightarrow {t \rightarrow \infty} 0$ and $\alpha(t) \xlongrightarrow {t \rightarrow \infty} 0$. This is the Class A case.
	The only remaining problem is to verify  (\ref{q2Pinf00})-(\ref{alphaPinf00}) for $c_1=c_2=0$.
	It is straightforward to verify this by (\ref{Phi10P})-(\ref{Phi00P}).
\end{proof}

\begin{remark}
	The error terms of (\ref{q2Pinf00})-(\ref{alphaPinf00}) are not optimal. 
	In fact, the error terms can be shown to be 
	$O\left(t^{-\frac{5}{2}} e^{-\frac{8}{3} t^{3/2}} \right)$ and $O\left(t^{-2} e^{-\frac{8}{3} t^{3/2}} \right)$ respectively 
	by a tedious calculation from (\ref{Phi1k-0})-(\ref{Phi0kj1}).
\end{remark}

\section{Proof of Theorem \ref{theorem-MultiSolution} \label{sec-phi012}}

\begin{proposition}
	There exists a minimal $k_0>0$ so as to $t+k_0 u(t)^2 \ge 0$ for all $t$.
\end{proposition}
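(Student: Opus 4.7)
The plan is to reduce the existence of $k_0$ to a supremum-of-a-continuous-bounded-function problem. Since the Hastings--McLeod solution $u(t)$ is strictly positive on $\mathbb{R}$ (a standard property, see \cite{HM}), I will work with the quotient $f(t)=-t/u(t)^2$ on $(-\infty,0)$ and take $k_0$ to be its supremum.

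First I would split on the sign of $t$. For $t\ge 0$ the inequality $t+k_0u(t)^2\ge 0$ holds automatically for any positive $k_0$ (both summands are non-negative), so the constraint is only active on $(-\infty,0)$. On this half-line, the condition $t+k_0 u(t)^2\ge 0$ is equivalent to $k_0\ge f(t)$, hence the minimal admissible $k_0$ is exactly $\sup_{t<0} f(t)$ (provided this supremum is finite and positive).

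Next I would verify that $f$ is continuous and bounded on $(-\infty,0)$. Continuity is immediate since $u$ is smooth and strictly positive. For the boundary behavior, at $t\to 0^-$ we have $u(0)>0$, so $f(t)\to 0$. As $t\to-\infty$ the Hastings--McLeod asymptotics $u(t)=\sqrt{-t/2}\,(1+O(t^{-2}))$ give $u(t)^2=(-t/2)(1+O(t^{-2}))$, and therefore $f(t)\to 2$. Consequently $f$ extends continuously to $[-\infty,0]$ with endpoint values $2$ and $0$, so it is bounded on $(-\infty,0)$ and attains its supremum on any compact subinterval; taking the global supremum yields a finite value $k_0\ge 2>0$.

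Finally, $k_0:=\sup_{t<0} f(t)$ is, by construction, the minimal positive constant with the required property: any smaller choice $k<k_0$ fails at some $t_*<0$ where $f(t_*)>k$, while $k_0$ itself works because $t+k_0 u(t)^2 = u(t)^2(k_0-f(t))\ge 0$ on $(-\infty,0)$ (and trivially on $[0,\infty)$). The only mild point to check is the positivity $u>0$ on all of $\mathbb{R}$; everything else is routine asymptotic analysis already recorded in the paper's use of $u(t)\sim\sqrt{-t/2}$ at $-\infty$ and $u(t)\sim\mathrm{Ai}(t)$ at $+\infty$. There is no serious obstacle here, which is probably why the authors state the result as a standalone proposition to be invoked later rather than proved in detail.
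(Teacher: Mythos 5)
Your proof is correct and takes essentially the same approach as the paper's: both reduce the claim to the boundedness of a quotient built from $u$, using the positivity of the Hastings--McLeod solution and its asymptotics at $\pm\infty$. The paper works with $f(t)=t+2u(t)^2$ and shows it attains a strictly negative minimum at a finite $t$ (using $f(0)>0$, $f(-\infty)=0$, and $f<0$ somewhere), then adds $k\,u^2$ to compensate; you work directly with $-t/u(t)^2$ and take its supremum, which is a slightly cleaner reduction.

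One small point worth flagging: the paper's formulation immediately yields $k_0>2$ strictly (because $t+2u(t)^2<0$ at a finite $t$, not just at $t=-\infty$), and this strict inequality is invoked in the remark that follows. Your version only concludes $k_0\ge 2$, since you establish $-t/u(t)^2\to 2$ at $-\infty$ but do not observe that the approach is from above. The fix is one line: the expansion $u(t)^2=\frac{-t}{2}\bigl(1-\tfrac14(-t)^{-3}+\cdots\bigr)$ gives $-t/u(t)^2=2\bigl(1+\tfrac14(-t)^{-3}+\cdots\bigr)>2$ for $t$ large and negative, so $\sup_{t<0}\bigl(-t/u(t)^2\bigr)>2$. (Also a minor typo: the correction term in the Hastings--McLeod asymptotics is $O(t^{-3})$, not $O(t^{-2})$; it does not affect your argument.)
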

\begin{proof}
	It is obvious $k_0>0$.
	Let $f(t)=t+2 u(t)^2$. 
	$f(t)$ has  minimum since $f(0)>0$, $f(-\infty)=0$ and $f(t)<0$ for  large negative enough $t$.
	Since $u(t) \neq 0$ for $t<\infty$, there exists  $k$ such that $t+2 u(t)^2+ k u(t)^2 \ge 0$ for all $t$.
	Obviously, such $k$ has minimum.
\end{proof}
\begin{remark} \label{rem-k0}
	By the preceding proof, we know $k_0>2$.
	To calculate $k_0$ numerically, we use $k_0 =-\min \left( \frac{t}{u^2} \right)$, 
	by which $k_0$  is calculated up to more than $100$ digits. 
	Though $k_0$ is so accurately known, its ``closed form" is still unknown.
	$k_0 \approx 2.1228589561253469$ is achieved at $t \approx -1.188111911480737877$.
	In the following, $k_0<\frac{10}{3}$ is needed.
	The proof of $k_0<\frac{10}{3}$ is somewhat technical and digressed, so we put it in Appendix \ref{Append-k0}.
\end{remark}

A visualized estimation of $k_0$ is given by Figure 2. 
\begin{center}
\begin{tikzpicture}[scale=1.0]
\def\xsl{1.2}
\def\ysl{1.2}
\def\FigureAdata{(-5.0*\xsl,-2.0041851709995756754*\ysl)(-4.9*\xsl,-2.0044617332891570902*\ysl)
	(-4.8*\xsl,-2.0047641171446057621*\ysl)(-4.7*\xsl,-2.005095498922797911*\ysl)
	(-4.6*\xsl,-2.0054595310743104192*\ysl)(-4.5*\xsl,-2.0058604209500617718*\ysl)
	(-4.4*\xsl,-2.0063030223589067119*\ysl)(-4.3*\xsl,-2.0067929414979289221*\ysl)
	(-4.2*\xsl,-2.0073366588888518036*\ysl)(-4.1*\xsl,-2.0079416688736610262*\ysl)
	(-4.0*\xsl,-2.0086166380080195191*\ysl)(-3.9*\xsl,-2.0093715832871079161*\ysl)
	(-3.8*\xsl,-2.0102180704736404564*\ysl)(-3.7*\xsl,-2.0111694317803674353*\ysl)
	(-3.6*\xsl,-2.0122410006729168043*\ysl)(-3.5*\xsl,-2.0134503594562040656*\ysl)
	(-3.4*\xsl,-2.0148175924036862999*\ysl)(-3.3*\xsl,-2.016365533250859526*\ysl)
	(-3.2*\xsl,-2.0181199906102483134*\ysl)(-3.1*\xsl,-2.0201099279061117735*\ysl)
	(-3.0*\xsl,-2.0223675653058709032*\ysl)(-2.9*\xsl,-2.0249283592447756705*\ysl)
	(-2.8*\xsl,-2.0278307997304881243*\ysl)(-2.7*\xsl,-2.0311159456713589819*\ysl)
	(-2.6*\xsl,-2.0348265926739214595*\ysl)(-2.5*\xsl,-2.0390059343416114757*\ysl)
	(-2.4*\xsl,-2.0436955347142895445*\ysl)(-2.3*\xsl,-2.0489323729183351357*\ysl)
	(-2.2*\xsl,-2.0547446469966389729*\ysl)(-2.1*\xsl,-2.0611459263068555917*\ysl)
	(-2.0*\xsl,-2.0681271126598727607*\ysl)(-1.9*\xsl,-2.0756454983215746335*\ysl)
	(-1.8*\xsl,-2.0836099787285206392*\ysl)(-1.7*\xsl,-2.0918611680989467408*\ysl)
	(-1.6*\xsl,-2.1001447479038755953*\ysl)(-1.5*\xsl,-2.1080758112246781531*\ysl)
	(-1.4*\xsl,-2.1150911949199831747*\ysl)(-1.3*\xsl,-2.1203857396030499731*\ysl)
	(-1.2*\xsl,-2.1228269784754752649*\ysl)(-1.1*\xsl,-2.1208407825102792655*\ysl)
	(-1.0*\xsl,-2.1122577757161327855*\ysl)(-0.9*\xsl,-2.0941065928528510436*\ysl)
	(-0.8*\xsl,-2.0623348799916673567*\ysl)(-0.7*\xsl,-2.0114317686878775037*\ysl)
	(-0.6*\xsl,-1.9339155867909321688*\ysl)(-0.5*\xsl,-1.8196366693134160426*\ysl)
	(-0.4*\xsl,-1.6548256902551500032*\ysl)(-0.3*\xsl,-1.4207906540373294795*\ysl)
	(-0.2*\xsl,-1.0921272782193970499*\ysl)(-0.1*\xsl,-0.63425325378673627679*\ysl)
	(0,0)}

\draw[color=green, thick] plot[smooth] coordinates {\FigureAdata};
\draw[->] (-5.1*\xsl,0)--(1*\xsl,0) node[right] {$t$};
\draw[->] (0,-2.4*\ysl)--(0,0.5*\ysl) node[right] {$y$};
\foreach \x in {-5*\xsl, -4*\xsl, -3*\xsl, -2*\xsl, -1*\xsl} \draw (\x,2 pt)--(\x,-2pt);
\foreach \y in {-2*\ysl, -1*\ysl} \draw (2 pt, \y)--(-2pt, \y);
\node at(0.4,-2*\ysl){$-2$}; \node at(0.4,-1*\ysl){$-1$};
\node at(-5*\xsl-0.1, -0.3){$-5$}; \node at(-4*\xsl-0.1, -0.3){$-4$};  \node at(-3*\xsl-0.1, -0.3){$-3$}; 
\node at(-2*\xsl-0.1, -0.3){$-2$}; \node at(-1*\xsl-0.1, -0.3){$-1$}; 
\node at(0.2,-0.2){O};
\node at(-0.5*\xsl, -1.2*\ysl){$\frac{t}{u^2}$};
\draw[color=red] (-5*\xsl,-2.2*\ysl)--(-0*\xsl,-2.2*\ysl); 
\end{tikzpicture}
\end{center}
\begin{center}
\begin{minipage}{12.5cm}
	Figure 2. Estimate $k_0$ by the graph of $\frac{t}{u^2}$. The green curve is the plot of $\frac{t}{u^2}$ and
	the horizontal red line is $y=-2.2$. 
	By the graph, it is obvious $\min\left( \frac{t}{u^2} \right)>-2.2$, i.e., $k_0<2.2$.
\end{minipage}
\end{center}

\begin{lemma} \label{LEM-PPPP}
	If  a solution of (\ref{phis-1})-(\ref{phis-0}) has properties $\phi_1(t_0)>0$, $\phi_2(t_0)>0$ and $\phi_0(t_0)>0$
	and $\frac{2}{3} \phi_0(t_0) -\frac{k_0-2}{6} u(t_0) \phi_1(t_0)>0$,
	then
	\begin{itemize}
	\item[(1)] $\phi_1(t)$, $\phi_2(t)$,  $\phi_0(t)$ and $\frac{2}{3} \phi_0(t) -\frac{k_0-2}{6} u(t) \phi_1(t)$
	are all monotonic decreasing on $(-\infty, t_0]$;
	\item[(2)]	$\phi_1(t)  \xlongrightarrow {t \rightarrow -\infty} \infty$,
	$\phi_2(t)  \xlongrightarrow {t \rightarrow -\infty} \infty$
	and $\phi_0(t)  \xlongrightarrow {t \rightarrow -\infty} \infty  $.
	\item[(3)] 	\begin{eqnarray}
	\lim_{t \rightarrow -\infty} \frac{\phi_1(t)}{\phi_0(t)} u(t)=1, \quad
	\lim_{t \rightarrow -\infty} \frac{\phi_2(t)}{\phi_0(t)}=1. \label{NLIM-phi-phi}
	\end{eqnarray}
	Correspondingly, 	near $t=-\infty$, 
	\begin{eqnarray}
	q(t)=o(1), \quad
	\alpha(t)=\sqrt{\frac{-t}{2}}+o((-t)^{\frac{1}{2}}) . \label{NLIM-q-alpha}
	\end{eqnarray}
	\end{itemize}
\end{lemma}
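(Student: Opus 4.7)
The plan is to handle parts (1) and (2) together via the monotonicity of a carefully chosen compound quantity plus a continuation argument, then to reduce part (3) to the asymptotic analysis of the ratios $r:=\phi_1/\phi_0$ and $s:=\phi_2/\phi_0$ near $t=-\infty$. For (1)-(2), introduce $W(t) := \tfrac{2}{3}\phi_0 - \tfrac{k_0-2}{6}u\phi_1$ and use the identity $-\tfrac{1}{6}(t+2u^2) = -\tfrac{1}{6}(t+k_0 u^2) + \tfrac{k_0-2}{6}u^2$ to rewrite (\ref{phis-2}). A direct calculation from (\ref{phis-1})-(\ref{phis-0}) then yields
\begin{align*}
\phi_1' &= -\tfrac{2}{3}\phi_2, & \phi_0' &= \tfrac{1}{3}u'\phi_1 - \tfrac{2}{3}u\phi_2, \\
\phi_2' &= -uW - \tfrac{1}{6}(t+k_0 u^2)\phi_1, & W' &= \tfrac{10-3k_0}{18}\,u'\phi_1 + \tfrac{k_0-6}{9}\,u\phi_2.
\end{align*}
With $u>0$, $u'<0$ (standard for Hastings-McLeod), $t+k_0 u^2\ge 0$, and $k_0<10/3$ (Remark \ref{rem-k0}), every coefficient above is signed so as to force all four derivatives strictly negative whenever $\phi_1,\phi_2,\phi_0,W>0$. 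A maximal-interval argument shows the set of $t\le t_0$ on which all four stay positive is all of $(-\infty,t_0]$: moving leftward from $t_0$ each quantity can only grow beyond its positive value at $t_0$, so positivity is never lost. This yields (1); (2) then follows at once, because $\phi_2\ge\phi_2(t_0)$, $u\ge u(t_0)$, $W\ge W(t_0)$ on $(-\infty,t_0]$ give linear-in-$(t_0-t)$ lower bounds for each $\phi_i$ via the derivative formulas, forcing $\phi_1,\phi_2,\phi_0\to+\infty$.

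For (3), a short calculation rewrites (\ref{phis-1})-(\ref{phis-0}) in ratio form as
\begin{align*}
r' &= -\tfrac{2}{3}s + \tfrac{2}{3}urs - \tfrac{1}{3}u'r^2,\\
s' &= \tfrac{2}{3}u(s^2-1) - \tfrac{1}{6}(t+2u^2)r - \tfrac{1}{3}u'rs,
\end{align*}
and the goal $s\to 1$, $ur\to 1$ is equivalent to $\alpha/u\to 1$ and $\tilde q_2\to 1$. Using the Hastings-McLeod asymptotics $u^2 = -t/2 + O(t^{-2})$ and $u'/u = O(t^{-1})$, the coupling terms $(t+2u^2)r$ and $u'rs$ are of lower order than the principal terms (granted the crude bounds $r=O(1/u)$, $s=O(1)$). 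The leading $s$-equation $s'\approx\tfrac{2}{3}u(s^2-1)$ linearizes around $s=1$ to $(s-1)'\approx\tfrac{4u}{3}(s-1)$; since $\int_t^{t_0}u\to+\infty$ as $t\to-\infty$, this forces $(s-1)(t)=(s-1)(t_0)\exp(-\tfrac{4}{3}\int_t^{t_0}u)\to 0$. With $s\to 1$ in hand, the $r$-equation reduces to $r'\approx -\tfrac{2}{3}+\tfrac{2}{3}ur$, whose branch $r=1/u$ is similarly attracting as $t\to-\infty$ because its homogeneous mode $\exp(\tfrac{2}{3}\int_{t_0}^t u)$ contracts backward in time. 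Hence $r\to 1/u$ and $s\to 1$, giving $\tilde q_2=ru\to 1$ and $\alpha=su\sim u\sim\sqrt{-t/2}$, which is (\ref{NLIM-phi-phi})-(\ref{NLIM-q-alpha}).

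The main obstacle will be making the attractor argument rigorous in the fully coupled system rather than in the decoupled linearization. I plan to handle it via a bootstrap: first derive the crude a priori bounds $0<s\le M$ and $0<r\le C/u$ on $(-\infty,t_0]$ from the monotonicity and positivity of part (1); then use these to control the coupling terms by an explicit polynomial-in-$t^{-1}$ factor; finally, build sub- and super-solutions for $s$ and then for $r$ of the shape \emph{particular solution $\pm$ exponentially small term $+$ polynomially small tail}, the tail absorbing the perturbative couplings. Since Lemma \ref{LEM-PPPP} only asks for the leading-order limits $q=o(1)$ and $\alpha=\sqrt{-t/2}+o((-t)^{1/2})$, the polynomial floor is harmless and the bootstrap closes without needing the finer expansions (\ref{NinfSolution-A})-(\ref{NinfSolution}); those sharper asymptotics, if desired, would then be obtained by iterating the bootstrap.
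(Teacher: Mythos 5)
Your parts (1) and (2) track the paper's argument closely: the auxiliary quantity $W=\tfrac{2}{3}\phi_0-\tfrac{k_0-2}{6}u\phi_1$ and the rewriting of $\phi_2'$ via $t+2u^2=(t+k_0u^2)-(k_0-2)u^2$ are exactly what the paper does, and the continuation argument is the same. The paper observes the growth in (2) is actually super-exponential (from $\phi_1''\gtrsim |t|\phi_1$), but your linear lower bounds are already enough to conclude divergence, so that difference is cosmetic. Part (3) is where you genuinely depart from the paper. The paper defines $x_1=\lim\phi_1u/\phi_0$, $x_2=\lim\phi_2/\phi_0$, applies L'H\^opital to get a self-consistency system, solves it asymptotically, and discards two of the three solution branches by positivity; this pins down the only admissible candidate limits but does not, on its own, establish that the limits exist (that is implicitly supplied later by Proposition \ref{Prop-Negative}). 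Your ratio ODEs for $r=\phi_1/\phi_0$ and $s=\phi_2/\phi_0$ are computed correctly, and the backward-in-time attractor argument (to $s=1$ and $r\approx 1/u$) is a more dynamical route which, if carried out, would prove existence of the limits rather than assume it --- so in that sense it is more self-contained than the paper's argument.

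The one concrete gap in your sketch: you assert the a priori bound $0<s\le M$ comes ``from the monotonicity and positivity of part (1),'' but it does not. Positivity gives $s>0$, and $W>0$ gives $r<\tfrac{4}{(k_0-2)u}$, but nothing in part (1) bounds $\phi_2/\phi_0$ from above. You do need such a bound to start the bootstrap, and it requires a separate argument: for $t$ sufficiently negative one has $t+2u^2<0$, $u'<0$, $u>0$, so in the $s$-equation every term is $\ge 0$ when $s\ge 1$; hence $s'>0$ whenever $s>1$ there, and $s$ can have no leftward-exit through any level $M>\max(1,s(t_0))$. That no-threshold argument (applied for $t$ below the last sign change of $t+2u^2$) repairs the a priori bound, after which the perturbative sub/super-solution scheme should close as you describe.
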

\begin{proof}
	(1)
	$\phi_1'(t_0)=-\frac{2}{3} \phi_2(t_0)<0$.
	
	$\phi_2'(t_0)=-\frac{2}{3} u(t_0) \phi_0(t_0)-\frac{1}{6} (t_0+k_0 u(t_0)^2 ) \phi_1(t_0)
	+\frac{1}{6}(k_0-2) u(t_0)^2 \phi_1(t_0) <0$.
	
	$\phi_0'(t_0)=\frac{1}{3} u'(t_0) \phi_1(t_0)-\frac{2}{3} u(t_0) \phi_2(t_0)<0$.
	\begin{eqnarray*}
		&&\frac{d}{dt}\left( \frac{2}{3} \phi_0(t_0) -\frac{k_0-2}{6} u(t)\phi_1(t_0) \right)\\
		&=&\frac{2}{3}\left( \frac{1}{3} u'(t) \phi_1(t_0) -\frac{2}{3} u(t_0) \phi_2(t_0) \right)
		-\frac{k_0-2}{6} u'(t_0)\phi_1(t_0)-\frac{k_0-2}{6} u(t_0) \left(-\frac{2}{3} \phi_2(t_0) \right)\\
		&=&\frac{10-3 k_0}{18} u'(t_0) \phi_1(t_0)-\frac{6-k_0}{9} u(t_0) \phi_2(t_0).
	\end{eqnarray*}
	By Remark \ref{rem-k0}, $k_0<\frac{10}{3}$.
	Therefore, $\frac{d}{dt}\left( \frac{2}{3} \varphi_0(t_0) -\frac{k_0-2}{6} u(t_0)\varphi_1(t_0) \right)<0$.
	So we have  $\phi_1(t_0-\epsilon)>\phi_1(t_0)>0$, $\phi_2(t_0-\epsilon)>\phi_2(t_0)>0$, 
	$\phi_0(t_0-\epsilon)>\phi_3(t_0)>0$
	and $\frac{2}{3} \phi_0(t) -\frac{k_0-2}{6} u(t) \phi_1(t)|_{t=t_0-\epsilon}>\frac{2}{3} \phi_0(t) -\frac{k_0-2}{6} u(t) \phi_1(t)|_{t=t_0}>0$.
	This process can be repeated endlessly. So the first statement of the lemma is proved.
	
	(2) By the preceding proof, $\phi_1(t)>0$,  $\phi_2(t)>0$,	$\phi_1'(t)=-\frac{2}{3} \phi_2(t)<0$ and
	$\phi_2'(t)<-\frac{1}{6} (t+k_0 u(t)^2 ) \phi_1(t) \le 0$. 
	So both $\phi_1(t)$ and $\phi_2(t)$ grow exponentially to infinity as $t \rightarrow -\infty$.
	By 	$\phi_0'(t)=\frac{1}{3} u'(t) \phi_1(t)-\frac{2}{3} u(t) \phi_2(t)$, 
	$\phi_0(t)  \xlongrightarrow {t \rightarrow -\infty} \infty  $ is got.
	
	(3) Let $x_1=\lim \limits_{t \rightarrow -\infty} \frac{\phi_1(t)}{\phi_0(t)} u(t)$ and 
	$x_2=\lim \limits_{t \rightarrow -\infty} \frac{\phi_2(t)}{\phi_0(t)}$.
	We apply  L'Hospital's rule to obtain the values of $x_1$ and $x_2$.
	It is legal since 	$\phi_1(t) u(t) \xlongrightarrow {t \rightarrow -\infty} \infty$,
	$\phi_2(t)  \xlongrightarrow {t \rightarrow -\infty} \infty$,
	$\phi_0(t)  \xlongrightarrow {t \rightarrow -\infty} \infty$ 
	and $\phi_0'(t)<0$ for $t<t_0$.
	Then by L'Hospital's rule and (\ref{phis-1})-(\ref{phis-0}), we obtain the algebraic equations for $x_1$ and $x_2$
	\begin{eqnarray}
	&&x_1= \frac{-\frac{2}{3}x_2 u(t)+\frac{u'(t)}{u(t)} x_1}{\frac{1}{3}\frac{u'(t)}{u(t)}x_1-\frac{2}{3}u(t) x_2 },\label{x1x2-1}\\
	&&x_2= \frac{-\frac{1}{6}\frac{t+2 u(t)^2}{u(t)}x_1-\frac{2}{3} u(t)}{\frac{1}{3}\frac{u'(t)}{u(t)}x_1-\frac{2}{3}u(t) x_2 }.
	\label{x1x2-2}
	\end{eqnarray}
	Note $t$ in (\ref{x1x2-1})-(\ref{x1x2-2}) should be understood as $t \rightarrow -\infty$.
	The algebraic equations  (\ref{x1x2-1})-(\ref{x1x2-2}) for $x_1$ and $x_2$ have $3$ set of solutions.
	Considering 
	\begin{eqnarray}
	u(t)=\sqrt{\frac{-t}{2}} \left(1-\frac{1}{8}(-t)^{-3}-\frac{73}{128}(-t)^{-6}-\frac{10657}{1024} (-t)^{-9}+\cdots \right),
	\label{u-expansion}
	\end{eqnarray}
	we can write  out  explicitly the $3$ set of solutions  as following.
	\begin{eqnarray*}
	&&\text{Set A:} \quad  x_1=1+\frac{1}{\sqrt{2}} (-t)^{-\frac{3}{2}}+\cdots , 
	\quad x_2=1-\frac{1}{4 \sqrt{2}} (-t)^{-\frac{3}{2}}+\cdots  .\\
	&&\text{Set B:} \quad  x_1=-8 (-t)^3+\frac{57}{2}+\cdots , 
	\quad x_2=2 \sqrt{2}  (-t)^{\frac{3}{2}}-\frac{39}{4 \sqrt{2}} (-t)^{-\frac{3}{2}}+\cdots. \\
	&&\text{Set C:} \quad  x_1=1-\frac{1}{\sqrt{2}} (-t)^{-\frac{3}{2}}+\cdots , 
	\quad x_2=-1 -\frac{1}{\sqrt{2}} (-t)^{-\frac{3}{2}}+\cdots.
	\end{eqnarray*}
The solutions of Set 2 and Set 3 are contradictory with the fact that $x_1>0$ and $x_2>0$.
So we get (\ref{NLIM-phi-phi}).
Considering $q_2(t)=\tilde q_2(t)-1$ and $u(t) \sim \sqrt{\frac{-t}{2}}$,  (\ref{NLIM-q-alpha}) is immediately obtained.
\end{proof}
By Proposition \ref{Prop-Negative}, we will see (\ref{NLIM-phi-phi}) is the general case.

\begin{proposition} \label{Prop-Positive}
	For $i=1,2,0$ and $j=2,0$, 
	$\Phi_{ij}$ are all positive and monotonic decreasing.
	Furthermore, all of them approach to positive infinity as $t \rightarrow -\infty$.
\end{proposition}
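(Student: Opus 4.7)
My plan is to combine the asymptotic formulas (\ref{Phi11P})--(\ref{Phi00P}) with Lemma~\ref{LEM-PPPP}. Lemma~\ref{LEM-PPPP} propagates positivity, monotonic decrease, and blow-up at $-\infty$ from any single point $t_{0}$ at which its four inequalities hold, so the proposition reduces to (i) finding a suitable $T_{0}$ at which the hypothesis of Lemma~\ref{LEM-PPPP} is verified for each of $(\Phi_{12},\Phi_{22},\Phi_{02})$ and $(\Phi_{10},\Phi_{20},\Phi_{00})$, and (ii) handling the ray $[T_{0},\infty)$ by a direct sign analysis of (\ref{phis-1})--(\ref{phis-0}).

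First I would pick $T_{0}$ large enough that all error terms in (\ref{Phi12P})--(\ref{Phi02P}) and (\ref{Phi10P})--(\ref{Phi00P}) are dominated by the corresponding leading terms. Positivity at $T_{0}$ is then immediate: the signs $\mathrm{Ai}(t/3^{2/3})>0$ and $\mathrm{Ai}'(t/3^{2/3})<0$ take care of $\Phi_{12}$ and $\Phi_{22}$; for $\Phi_{02}$, $\Phi_{10}$, $\Phi_{20}$ the integrands in (\ref{Phi02P}), (\ref{Phi10P}), (\ref{Phi20P}) have a definite sign at large $s$, which fixes the sign of the $\int_{\infty}^{t}$; and $\Phi_{00}\to 1$. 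On the ray $[T_{0},\infty)$ monotonic decrease now follows directly from (\ref{phis-1})--(\ref{phis-0}) once positivity is in hand: $\Phi_{1j}'=-\tfrac{2}{3}\Phi_{2j}<0$; $\Phi_{2j}'=-\tfrac{2}{3}u\Phi_{0j}-\tfrac{1}{6}(t+2u^{2})\Phi_{1j}<0$ since $t+2u^{2}>0$ for $t\ge T_{0}$; and $\Phi_{0j}'=\tfrac{1}{3}u'\Phi_{1j}-\tfrac{2}{3}u\Phi_{2j}<0$ since $u'(t)<0$ for large $t$. Thus positivity on $[T_{0},\infty)$ is inherited from positivity at $T_{0}$.

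The delicate point is the fourth hypothesis of Lemma~\ref{LEM-PPPP}, $\tfrac{2}{3}\Phi_{0j}(T_{0})-\tfrac{k_{0}-2}{6}u(T_{0})\Phi_{1j}(T_{0})>0$. For $j=0$ it is trivial, since $\Phi_{00}(T_{0})\to 1$ while $u(T_{0})\Phi_{10}(T_{0})$ is exponentially small. For $j=2$ the two terms share both the exponential rate $e^{-(8/9)T_{0}^{3/2}}$ and the algebraic prefactor $T_{0}^{-1/2}$, so everything reduces to a comparison of leading coefficients. I would apply Laplace's method to (\ref{Phi02P}) together with the standard Airy asymptotics in (\ref{Phi12P}) and $u=\mathrm{Ai}+\cdots$ to extract
\[
\tfrac{2}{3}\Phi_{02}(T_{0})-\tfrac{k_{0}-2}{6}u(T_{0})\Phi_{12}(T_{0})\;\sim\;\frac{3^{1/6}(4-k_{0})}{24\pi\sqrt{T_{0}}}\,e^{-(8/9)T_{0}^{3/2}},
\]
which is positive precisely when $k_{0}<4$, and this is guaranteed by $k_{0}<\tfrac{10}{3}$ (Remark~\ref{rem-k0}, proved in Appendix~\ref{Append-k0}). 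Lemma~\ref{LEM-PPPP} then delivers positivity, monotonic decrease, and divergence to $+\infty$ as $t\to-\infty$ on $(-\infty,T_{0}]$; combining with step~(ii) completes the proposition.

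The main obstacle is precisely this coefficient match: because the two competing terms share their exponential rate and their leading algebraic factor, the whole argument is driven by the numerical value of $k_{0}$, and any bound weaker than $k_{0}<4$ would break the positivity chain at exactly the matching point $T_{0}$. This is why the refined Painlev\'e~II estimate of Appendix~\ref{Append-k0} has to be invoked and cannot be replaced by a soft argument.
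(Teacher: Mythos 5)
Your proof follows essentially the same route as the paper's: verify the four hypotheses of Lemma~\ref{LEM-PPPP} at a large $t_0$ from the asymptotics (\ref{Phi12P})--(\ref{Phi00P}), with the delicate step being the coefficient comparison in $\tfrac{2}{3}\Phi_{02}-\tfrac{k_0-2}{6}u\Phi_{12}$, and your leading coefficient $\tfrac{3^{1/6}(4-k_0)}{24\pi}$ agrees with the paper's $\tfrac{4-k_0}{8\cdot 3^{5/6}\pi}$. Two small remarks: your step~(ii) on $[T_0,\infty)$ is redundant, since Lemma~\ref{LEM-PPPP} applies at arbitrarily large $t_0$; and the claim that $k_0<4$ is the operative numerical threshold slightly misstates the situation, since Lemma~\ref{LEM-PPPP} already demands the stronger $k_0<\tfrac{10}{3}$ to keep $\tfrac{2}{3}\phi_0-\tfrac{k_0-2}{6}u\phi_1$ monotonic, so it is that bound, not $k_0<4$, that drives the appeal to Appendix~\ref{Append-k0}.
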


\begin{proof}
	(1) $j=2$ case.
	
	Both $\Phi_{12}(t)\xlongrightarrow {t \rightarrow \infty} 0_+ $ and 
	$\Phi_{22}(t)\xlongrightarrow {t \rightarrow \infty} 0_+ $ are obvious.
	By $\Phi_{02}(t)\xlongrightarrow {t \rightarrow \infty}  \frac{3^{1/6}}{8 \pi}  e^{-\frac{8}{9} t^{3/2}}t^{-\frac{1}{2}} $,
	we get $\Phi_{02}(t)\xlongrightarrow {t \rightarrow \infty} 0_+$.
	Further, $\frac{2}{3} \Phi_{02}(t)-\frac{k_0-2}{6} u(t) \Phi_{12}(t)=
	\frac{4-k_0}{8 \times 3^{5/6} \pi} e^{-\frac{8}{9} t^{3/2}} (t^{-\frac{1}{2}}+O(t^{-2})) $.
	Thus $\frac{2}{3} \Phi_{02}(t)-\frac{k_0-2}{6} u(t) \Phi_{12}(t) \xlongrightarrow {t \rightarrow \infty} 0_+$.
	By Lemma \ref{LEM-PPPP}, we proved the proposition for $j=2$.
	
	(2) $j=0$ case.
	
From
$$-\frac{4 \pi}{3^{4/3}}\left(\mathrm{Ai}(\frac{t}{3^{2/3}}) \int_\infty^t  \mathrm{Bi}(\frac{s}{3^{2/3}}) u(s) ds
-\mathrm{Bi}(\frac{t}{3^{2/3}}) \int_\infty^t  \mathrm{Ai}(\frac{s}{3^{2/3}}) u(s) ds  \right) 
=e^{-\frac{2}{3} t^{3/2}} \left( \frac{3}{4 \sqrt{\pi}} t^{-\frac{5}{4}}+O(t^{-\frac{11}{4}}) \right)$$
and
$$\frac{2 \pi}{3}\left(\mathrm{Ai}'(\frac{t}{3^{2/3}}) \int_\infty^t  \mathrm{Bi}(\frac{s}{3^{2/3}}) u(s) ds
-\mathrm{Bi}'(\frac{t}{3^{2/3}}) \int_\infty^t  \mathrm{Ai}(\frac{s}{3^{2/3}}) u(s) ds  \right)= 
e^{-\frac{2}{3} t^{3/2}} \left( \frac{3}{8 \sqrt{\pi}} t^{-\frac{3}{4}}+O(t^{-\frac{9}{4}}) \right),$$
$\Phi_{10} \xlongrightarrow {t \rightarrow \infty} 0_+ $ and $\Phi_{20} \xlongrightarrow {t \rightarrow \infty} 0_+ $ are obtained.
Obviously,  $\Phi_{00}(t) \xlongrightarrow {t \rightarrow \infty} 1 >0$.
Also, $\frac{2}{3} \Phi_{00}(t)-\frac{k_0-2}{6} u(t) \Phi_{10}(t) \xlongrightarrow {t \rightarrow \infty} \frac{2}{3} > 0$.
By Lemma \ref{LEM-PPPP}, the proposition is also true for $j=0$.
\end{proof}

\begin{proposition} \label{Prop-Positive-2} 
	For any fixed finite real $t_0$,
	if $c_2 \ge 0$, $c_0 >0$ and $c_1$ is sufficiently small, then
	$c_1 \Phi_{11}+c_2 \Phi_{12}+c_0 \Phi_{10}$, $c_1 \Phi_{21}+c_2 \Phi_{22}+c_0 \Phi_{20}$ and
	$c_1 \Phi_{01}+c_2 \Phi_{02}+c_0 \Phi_{00}$
	are all monotonic decreasing and positive on $(-\infty, t_0]$.
	Furthermore, $c_1 \Phi_{01}+c_2 \Phi_{02}+c_0 \Phi_{00} >0$ for $t \ge t_0$.
\end{proposition}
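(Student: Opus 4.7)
The strategy is to reduce the claim to a continuity-in-$c_1$ perturbation of the already-established $c_1=0$ case. Setting $(\phi_1,\phi_2,\phi_0):=(c_1\Phi_{11}+c_2\Phi_{12}+c_0\Phi_{10},\, c_1\Phi_{21}+c_2\Phi_{22}+c_0\Phi_{20},\, c_1\Phi_{01}+c_2\Phi_{02}+c_0\Phi_{00})$, linearity of (\ref{phis-1})--(\ref{phis-0}) makes $(\phi_1,\phi_2,\phi_0)$ a solution of that system, so Lemma \ref{LEM-PPPP} will handle the interval $(-\infty,t_0]$ once its four hypotheses are verified at $t=t_0$. The half-line $[t_0,\infty)$ will be handled by a separate elementary bound on $\phi_0$.

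For the hypotheses of Lemma \ref{LEM-PPPP} at $t=t_0$: by Proposition \ref{Prop-Positive} the values $\Phi_{1j}(t_0), \Phi_{2j}(t_0), \Phi_{0j}(t_0)$ are strictly positive for $j=2,0$, so with $c_2\ge 0$ and $c_0>0$ the three quantities $\phi_1(t_0), \phi_2(t_0), \phi_0(t_0)$ at $c_1=0$ are strictly positive. For the fourth quantity $Q(t_0):=\tfrac{2}{3}\phi_0(t_0)-\tfrac{k_0-2}{6}u(t_0)\phi_1(t_0)$, I would argue that the analogous combinations $\tfrac{2}{3}\Phi_{0j}(t_0)-\tfrac{k_0-2}{6}u(t_0)\Phi_{1j}(t_0)$ are strictly positive at every real $t_0$ for $j=2,0$: at $t=\infty$ the explicit expansions used in the proof of Proposition \ref{Prop-Positive}, together with the bound $k_0<4$ (a consequence of $k_0<10/3$), give a positive asymptote, and Lemma \ref{LEM-PPPP} applied to the triple $(\Phi_{1j},\Phi_{2j},\Phi_{0j})$ with an arbitrarily large reference point then propagates positivity across every finite $t$. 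Hence at $c_1=0$ all four quantities are strictly positive. Each of $\phi_1(t_0), \phi_2(t_0), \phi_0(t_0), Q(t_0)$ is continuous and affine in $c_1$, so for $|c_1|$ less than some bound $\epsilon_1=\epsilon_1(t_0,c_0,c_2)>0$ all four remain strictly positive; Lemma \ref{LEM-PPPP} then yields positivity and monotonic decrease of $\phi_1,\phi_2,\phi_0$ on $(-\infty, t_0]$.

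For positivity of $\phi_0=c_1\Phi_{01}+c_2\Phi_{02}+c_0\Phi_{00}$ on $[t_0,\infty)$, I would use that by Proposition \ref{Prop-Positive} the function $\Phi_{00}$ is monotonic decreasing with $\lim_{t\to\infty}\Phi_{00}(t)=1$, so $\Phi_{00}\ge 1$ on $[t_0,\infty)$, while $\Phi_{02}\ge 0$ everywhere. By (\ref{Phi01P}) the function $\Phi_{01}$ is continuous on $[t_0,\infty)$ and tends to $0$ as $t\to\infty$, so $M:=\sup_{t\ge t_0}|\Phi_{01}(t)|<\infty$. Therefore $\phi_0(t)\ge c_0-|c_1|M$ for $t\ge t_0$, which is strictly positive when $|c_1|<c_0/M$. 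Taking $|c_1|<\min(\epsilon_1,c_0/M)$ closes both assertions of the proposition.

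The step requiring the most care is the positivity of $\tfrac{2}{3}\Phi_{0j}-\tfrac{k_0-2}{6}u\Phi_{1j}$ at the arbitrary finite $t_0$ appearing in the statement, rather than only at the specific $t_0^P$ of Proposition \ref{prop-ineq}. This is precisely where the bound $k_0<10/3$ of Remark \ref{rem-k0} is indispensable, since without it the derivative identity in the proof of Lemma \ref{LEM-PPPP} would fail to deliver the correct sign, and the backward propagation from $+\infty$ would break down. Everything else is a routine continuity-in-$c_1$ perturbation together with the elementary tail bound on $\Phi_{01}$.
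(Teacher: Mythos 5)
Your proof is correct and follows essentially the same route as the paper: continuity in $c_1$ at $t_0$ together with Lemma \ref{LEM-PPPP} handles the backward half-line, and a tail bound on $\Phi_{01}$ handles the forward half-line. The extra paragraph you devote to justifying the positivity of $\tfrac{2}{3}\phi_0(t_0)-\tfrac{k_0-2}{6}u(t_0)\phi_1(t_0)$ at $c_1=0$ is a careful expansion of what the paper dismisses as ``obvious,'' and your tracing of $k_0<4$ back through the $j=2$ asymptote is exactly the reasoning implicit in the proof of Proposition~\ref{Prop-Positive}.
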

\begin{proof}
	Let $\phi_1(t)=c_1 \Phi_{11}+c_2 \Phi_{12}+c_0 \Phi_{10}$, $\phi_2(t)=c_1 \Phi_{21}+c_2 \Phi_{22}+c_0 \Phi_{20}$
	and $\phi_0(t)= c_1 \Phi_{01}+c_2 \Phi_{02}+c_0 \Phi_{00}$.
	It is obvious there exists $\delta_1>0$ such that $\phi_1(t_0)>0$,
	$\phi_2(t_0)>0$, $\phi_0(t_0)>0$ and  $\frac{2}{3} \phi_0(t_0) -\frac{k_0-2}{6} u(t_0) \phi_1(t_0)>0$
	for any $|c_1|<\delta_1$, since they are all greater than $0$ for $c_1=0$.
	By Lemma \ref{LEM-PPPP}, 	$\phi_1(t)$, $\phi_2(t)$ and $\phi_0(t)$ are positive and monotonic decreasing for $t \le t_0$.
	By (\ref{Phi02P}), there exists $\delta_2>0$ such that $\phi_0(t)>0$ for all $t \in [t_0,  \infty)$ for $|c_1|<\delta_2$,
	since $c_2 \Phi_{02}(t)+c_0 \Phi_{00}(t)>0$ for all $t \ge t_0$.
	Let $\delta=\min(\delta_1, \delta_2)$. Then, if $|c_1|<\delta$,
	$\phi_1(t)$, $\phi_2(t)$ and $\phi_0(t)$ have all the desired properties.
\end{proof}

After changing of variables $s=\sqrt{-t}$, $\tilde \phi_1(s)=\phi_1(t)$, $\tilde \phi_2(s)=\phi_2(t)$,
$\tilde \phi_0(s)=\phi_0(t)$,
we can see the ODE system for $\tilde \phi_1(s)$,$\tilde \phi_2(s)$ and $\tilde \phi_0(s)$ satisfies  
all the requirements of Theorem 12.3 of \cite{Wasow}.
After changing the variables back, we get the following result.
\begin{proposition} \label{Prop-Negative}
	At $t=-\infty$,  $\phi_1(t)$ and $\phi_2(t)$ and $\phi_0(t)$ have asymptotics
	\begin{eqnarray}
	&&\phi_1(t) \sim k_P \times \varphi_{1P}(t)	+k_O \times\varphi_{1O}(t) 	+k_N \times\varphi_{1N}(t), \label{PropN-1} \\
	&&\phi_2(t) \sim k_P\times \varphi_{2P}(t)	+k_O \times\varphi_{2O}(t)	+k_N \times\varphi_{2N}(t) , \label{PropN-2}\\
	&&\phi_0(t) \sim k_P \times\varphi_{0P}(t) 	+k_O\times \varphi_{0O}(t) 	+k_N \times \varphi_{0N}(t),  \label{PropN-0}
	\end{eqnarray}
	where
	\begin{eqnarray}
	&&\varphi_{1P}(t)=\left( \sqrt{2} (-t)^{-\frac{1}{2}} +\frac{55}{48} (-t)^{-2}+\frac{9107}{1536 \sqrt{2}} (-t)^{-\frac{7}{2}}
	+\cdots \right)  (-t)^\frac{1}{12} e^{\frac{2 \sqrt{2} }{9}  (-t)^{3/2}}, \label{varphi1P} \\
	&&\varphi_{2P}(t)=\left( 1-\frac{5}{48 \sqrt{2}} (-t)^{-\frac{3}{2}}-\frac{1013}{3072} (-t)^{-3}
	-\frac{2547101}{1327104 \sqrt{2}} (-t)^{-\frac{9}{2}} +\cdots\right) 
	 (-t)^\frac{1}{12} e^{\frac{2 \sqrt{2} }{9}  (-t)^{3/2}} , \label{varphi2P}\\
	&&\varphi_{0P}(t)=\left( 1+\frac{7}{48 \sqrt{2}} t^{-\frac{3}{2}}+\frac{145}{1024} (-t)^{-3} 
	+\frac{1496311}{1327104 \sqrt{2}} t^{-\frac{9}{2}}+\cdots\right)
	(-t)^\frac{1}{12} e^{\frac{2 \sqrt{2} }{9}  (-t)^{3/2}}, \label{varphi0P}\\
	&&\varphi_{1O}(t)=\left( 1+\frac{67}{72} (-t)^{-3}+\frac{551671}{10368} (-t)^{-6}+\frac{22894539769}{2239488} (-t)^{-9}
	+\cdots \right) (-t)^{-\frac{1}{6} }, \label{varphi1O} \\
	&&\varphi_{2O}(t)=\left( -\frac{1}{4} (-t)^{-1}-\frac{1273}{288} (-t)^{-4}
	- \frac{20411827}{41472} (-t)^{-7} +\cdots \right) (-t)^{-\frac{1}{6} },\label{varphi2O}\\
	&&\varphi_{0O}(t)=\left( \frac{1}{\sqrt{2}} (-t)^{-\frac{5}{2}}+\frac{1009}{18 \sqrt{2}} (-t)^{-\frac{11}{2}}
	+ \frac{6873355}{648 \sqrt{2}} (-t)^{-\frac{17}{2}}  +\cdots
	\right) (-t)^{-\frac{1}{6} } ,\label{varphi0O}\\
	&&\varphi_{1N}(t)=\left(-\sqrt{2} (-t)^{-\frac{1}{2}} +\frac{55}{48} (-t)^{-2}-\frac{9107}{1536 \sqrt{2}} (-t)^{-\frac{7}{2}}
	+\cdots \right)  (-t)^\frac{1}{12} e^{-\frac{2 \sqrt{2} }{9}  (-t)^{3/2}}, \label{varphi1N} \\
	&&\varphi_{2N}(t)=\left( 1+\frac{5}{48 \sqrt{2}} (-t)^{-\frac{3}{2}}-\frac{1013}{3072} (-t)^{-3}
	+\frac{2547101}{1327104 \sqrt{2}} (-t)^{-\frac{9}{2}} +\cdots
	\right)  (-t)^\frac{1}{12} e^{-\frac{2 \sqrt{2} }{9}  (-t)^{3/2}}, \label{varphi2N} \\
	&&\varphi_{0N}(t)=\left( -1+\frac{7}{48 \sqrt{2}} t^{-\frac{3}{2}} -\frac{145}{1024} (-t)^{-3}
	+\frac{1496311}{1327104 \sqrt{2}} t^{-\frac{9}{2}}+\cdots \right)
	 (-t)^\frac{1}{12} e^{-\frac{2 \sqrt{2} }{9}  (-t)^{3/2}}. \label{varphi0N} 
	\end{eqnarray}
\end{proposition}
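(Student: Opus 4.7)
The plan is to follow the strategy hinted at just before the statement: introduce $s=\sqrt{-t}$ so that the irregular singular point at $t=-\infty$ is transferred to $s=+\infty$, and then invoke Theorem~12.3 of \cite{Wasow}. That theorem provides, for a linear system with an irregular singularity of integer Poincar\'e rank and three distinct leading eigenvalues, three linearly independent formal Puiseux-type fundamental solutions whose formal expansions are genuine asymptotic expansions of actual solutions in a sector.

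Step 1 is to carry out the reduction. Setting $\tilde\phi_i(s)=\phi_i(-s^2)$ and using $dt=-2s\,ds$, the system (\ref{phis-1})--(\ref{phis-0}) becomes a linear first-order system $\tilde\Phi'(s)=M(s)\tilde\Phi(s)$. Because $t+2u^2=O((-t)^{-2})$ by (\ref{u-expansion}), the coefficient matrix
$$M(s)=\begin{pmatrix}0&\tfrac{4s}{3}&0\\ \tfrac{s}{3}(-s^2+2u^2)&0&\tfrac{4s}{3}u\\ -\tfrac{2s}{3}u'&\tfrac{4s}{3}u&0\end{pmatrix}$$
admits a Puiseux expansion at $s=\infty$ whose leading entries grow like $s^2$. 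A short computation of the characteristic polynomial, using $u\sim s/\sqrt 2$, $uu'\sim-1/4$ and $t+2u^2=O(s^{-4})$, gives
$$\det(\lambda I-M(s))\sim\lambda^3-\tfrac{8}{9}s^4\lambda-\tfrac{8}{27}s^3,$$
whose three roots are $\lambda_{\pm}=\pm\tfrac{2\sqrt 2}{3}s^2+O(s^{-1})$ and $\lambda_{0}=-\tfrac{1}{3s}+O(s^{-4})$; they are pairwise distinct as $s\to\infty$, so the singularity is non-resonant. Integrating $\lambda_{\pm}$ in $s$ produces the factors $e^{\pm\frac{2\sqrt 2}{9}s^3}=e^{\pm\frac{2\sqrt 2}{9}(-t)^{3/2}}$ appearing in (\ref{varphi1P})--(\ref{varphi0P}) and (\ref{varphi1N})--(\ref{varphi0N}); integrating $\lambda_{0}$ gives the algebraic prefactor $s^{-1/3}=(-t)^{-1/6}$ of (\ref{varphi1O})--(\ref{varphi0O}); the $O(s^{-1})$ corrections to $\lambda_{\pm}$ account, after integration, for the $(-t)^{1/12}$ prefactors in the $P$ and $N$ families.

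Step 2 is to check the hypotheses of Theorem~12.3 of \cite{Wasow}: $M(s)$ is analytic for large $s$, admits a Puiseux expansion there, its leading part has the three distinct eigenvalues found in Step~1, and its Poincar\'e rank equals $2$. The theorem then supplies three formal fundamental solutions that are asymptotic, in a sector containing $s\to+\infty$, to three actual linearly independent solutions of the system. Since (\ref{phis-1})--(\ref{phis-0}) has a three-dimensional solution space, every particular solution $(\phi_1,\phi_2,\phi_0)$ is a linear combination of the three modes, which after returning to the variable $t$ is precisely (\ref{PropN-1})--(\ref{PropN-0}) with uniquely determined scalars $k_P,k_O,k_N$. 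The explicit coefficients in (\ref{varphi1P})--(\ref{varphi0N}) are then read off by inserting each ansatz
$$\tilde\phi_i(s)=s^{\sigma}e^{\epsilon\,\frac{2\sqrt 2}{9}s^3}\sum_{k\ge 0}a_{i,k}\,s^{-3k/2},\qquad \epsilon\in\{+1,0,-1\},$$
into the $s$-version of the system together with (\ref{u-expansion}), and solving the resulting triangular recurrences; this is mechanical but lengthy, and Wasow's theorem is what legitimises the formal calculation.

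The principal technical obstacle lies in Step~2. One must check that, notwithstanding the different apparent orders of the three rows of $M(s)$, a preliminary shearing transformation puts the system into an irregular singular form of integer Poincar\'e rank~$2$ with distinct leading eigenvalues, so that Wasow's classical theorem applies directly rather than after a Turrittin--Levelt ramified refinement. Equivalently, one has to verify that the three WKB exponents remain separated to all orders at which a resonance could otherwise obstruct the formal construction. Once this normal form is in place and the three leading eigenvalues are separated, the construction of $\varphi_P,\varphi_O,\varphi_N$ is automatic and the proposition follows.
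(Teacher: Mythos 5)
Your proposal follows exactly the route the paper takes: substitute $s=\sqrt{-t}$, recognize the resulting system as having an irregular singularity at $s=\infty$, and invoke Theorem~12.3 of Wasow. Your Step~1 computations (the matrix $M(s)$, the approximate characteristic polynomial $\lambda^3-\tfrac{8}{9}s^4\lambda-\tfrac{8}{27}s^3$, the eigenvalues $\pm\tfrac{2\sqrt2}{3}s^2+O(s^{-1})$ and $-\tfrac{1}{3s}+O(s^{-4})$, and the identification of the $(-t)^{1/12}$ and $(-t)^{-1/6}$ prefactors with the $O(s^{-1})$ terms of the eigenvalues) are correct and make explicit what the paper only asserts.

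The concern you raise in your final paragraph, however, is unfounded. Writing $M(s)=s^{2}B(s)$, the leading matrix
$B_{0}=\lim_{s\to\infty}B(s)$ is
$\left(\begin{smallmatrix}0&0&0\\0&0&\tfrac{2\sqrt2}{3}\\0&\tfrac{2\sqrt2}{3}&0\end{smallmatrix}\right)$,
with pairwise-distinct eigenvalues $0,\pm\tfrac{2\sqrt2}{3}$; a zero eigenvalue poses no obstruction so long as it is simple. Hence the system is already in a non-resonant, unramified normal form of integer Poincar\'e rank $2$, and Wasow's theorem applies directly — no shearing transformation and no Turrittin--Levelt ramified reduction is needed. (This is implicitly what the one-sentence remark preceding the proposition in the paper asserts.) A small typographical slip: your formal-series ansatz $\sum_{k\ge0}a_{i,k}s^{-3k/2}$ should read $\sum_{k\ge0}a_{i,k}s^{-k}$ (equivalently $\sum_{k\ge0}a_{i,k}s^{-3k}$ after accounting for the vanishing of intermediate coefficients forced by the structure of $u$), since $(-t)^{-3/2}=s^{-3}$ rather than $s^{-3/2}$. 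With these corrections the argument is complete and coincides with the paper's.
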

\begin{remark}
	Proposition \ref{Prop-Negative} gives a straightforward explanation for the $3$ sets of solutions appearing in the proof of 
	Lemma \ref{LEM-PPPP}. If $k_P \neq 0$, the limits  are given by the Set A.
	If $k_P=0$ and $k_O \neq 0$, the limits are given by the Set B. Else if $k_P=k_O=0$, the limits are given by the Set C.
	There is no other possibility for the limits.
	However, the understanding of  Proposition \ref{Prop-Negative} is subtle:
	for example, 
	if in case the best approximation (obtained by optimal truncation) of $u(t)$ by its asymptotic series has an error more than 
	the order of $e^{-\frac{\sqrt{2}}{9} (-t)^{3/2}}$,
	the lower order terms in (\ref{PropN-1})-(\ref{PropN-0}) lost their meaning for REAL $t$.
	Fortunately, the error order of the best approximation of $u(t)$ by its asymptotic series
	is $e^{-(-t)^{3/2}}$. 
	So all terms in (\ref{PropN-1})-(\ref{PropN-0})  are  contributing.
\end{remark}

So we have constructed two sets of solutions for (\ref{phis-1})-(\ref{phis-0}):
 at $t=\infty$, we have $(\Phi_{1,i},\Phi_{2,i},\Phi_{0,i})$, $i=1, 2, 0$;
 and at $t=-\infty$, we have $(\varphi_{1,i},\varphi_{2,i},\varphi_{0,i})$, $i=P, O, N$.
Therefore, they only differ by a constant matrix
\begin{eqnarray}
&&\left(\begin{array}{ccc}
\Phi_{11}(t)&\Phi_{12}(t)&\Phi_{10}(t)\\
\Phi_{21}(t)&\Phi_{22}(t)&\Phi_{20}(t)\\
\Phi_{01}(t)&\Phi_{02}(t)&\Phi_{00}(t)
\end{array}
\right)=
\left(\begin{array}{ccc}
\varphi_{1P}(t)&\varphi_{1O}(t)&\varphi_{1N}(t)\\
\varphi_{2P}(t)&\varphi_{2O}(t)&\varphi_{2N}(t)\\
\varphi_{0P}(t)&\varphi_{0O}(t)&\varphi_{0N}(t)
\end{array}
\right) 
\left(\begin{array}{ccc}
k_{P1}&k_{P2}&k_{P0}\\
k_{O1}&k_{O2}&k_{O0}\\
k_{N1}&k_{N2}&k_{N0}
\end{array}
\right).
\label{Connection-KKK}
\end{eqnarray}
$\Phi_{ij}(t) \xlongrightarrow {t \rightarrow -\infty} \infty $ for $i=2,0$ and $j=1, 2, 0$
mean $k_{P2}>0$ and  $k_{P0}>0$.  In fact, their approximate values are $k_{P2} \approx 0.1678571$ and $k_{P0} \approx 0.6235798$.
More accurate values of them  are given in Section \ref{secFig1},  where they are determined  up to more than $100$ digits.

Now we are able to prove Theorem \ref{theorem-MultiSolution}.
\begin{proof}
Let $c_2 \ge 0$ and $c_1$ be sufficiently small.
Define $\phi_1(t)=c_1 \Phi_{11}+c_2 \Phi_{12}+\Phi_{10}$, $\phi_2(t)=c_1 \Phi_{21}+c_2 \Phi_{22}+\Phi_{20}$,
 $\phi_0(t)=c_1 \Phi_{01}+c_2 \Phi_{02}+\Phi_{00}$.
Then $\phi_1(t)$, $\phi_2(t)$ and $\phi_0(t)$ satisfy (\ref{phis-1})-(\ref{phis-0}).
Next define  $\tilde q_2(t)=\frac{\phi_1(t) }{\phi_0(t)} u(t)$ and  $\alpha(t)=\frac{\phi_2(t) }{\phi_0(t)} u(t)$.
By (\ref{Cole-Holpf}), $\tilde q_2(t)$ and $\alpha(t)$ satisfy (\ref{wq2alpha-deqsA})-(\ref{wq2alpha-deqs}).
By Proposition \ref{Prop-Positive-2}, $\tilde q_2(t)$ and $\alpha(t)$ are smooth on $(-\infty, \infty)$.
By Proposition \ref{Prop-Asymp-P} and  \ref{Prop-Negative}, 
$\tilde q_2(t)$ and $\alpha(t)$  have  desired asymptotics at $t=\infty$ and $t=-\infty$.
\end{proof}

\section{Numerical experiments about Figure 1 \label{secFig1}}
In this section, we give the details to generate Figure 1.
A few important data, such as the numerical values of the connection data, are also given,
as well as some interesting observations from the numerical experiments.

\subsection{Description of the procedure \label{procdure-I}}
By Section \ref{sec-phi012}, we know the singularities of $q_2(t)$ and $\alpha(t)$ 
are completely determined by the zeroes of $\phi_0(t)=\Phi_{00}(t)+c_1 \Phi_{01}(t)+c_2 \Phi_{02}(t)$.
So our first step is to obtain the numerical solutions of $\Phi_{ij}(t)$, $i, j=1,2,0$, for $t \in [t_N, t_P]$.
Since $\phi_0(t) \xlongrightarrow {t \rightarrow \infty} 1>0$,
we must require $\phi_0(-\infty) \ge 0$ in order that $q_2(t)$ and $\alpha(t)$ have no zeroes for $t \in (-\infty, \infty)$.
Therefore, our second step is to compute the matrix elements $k_{P1}$,  $k_{P2}$ and  $k_{P0}$, 
which will reflect the main behaviors of the solution near $t=-\infty$.
For moderate $t$, we use the numerical solutions to resolve if $\phi_0(t)$ has zeroes, which constitutes our last step.
More precisely, we determine the boundary between $R_{smooth}$ and $R_{singular}$  by seeking  the minimal $c_2$
such that $\Phi_{00}(t)+c_1 \Phi_{01}(t)+c_2 \Phi_{02}(t) \ge 0$ for all $t \in (-\infty, \infty)$ 
for given $c_1 \in \mathbb{R}$.
Since $\Phi_{02}(t)>0$, the problem is simplified to find the minimum of 
$\frac{\Phi_{00}(t)}{\Phi_{02}(t)}+c_1 \frac{\Phi_{01}(t)}{\Phi_{02}(t)}$ for given $c_1$,
i.e.,  $c_2=-\min \limits_{\forall t \in (-\infty,\infty)} \left(\frac{\Phi_{00}(t)}{\Phi_{02}(t)}+c_1 \frac{\Phi_{01}(t)}{\Phi_{02}(t)} \right)$.

Obviously, we have to do numerical integration of ODEs.
Currently, the most precise ODE integrator, such as {\sl Taylor}\cite{JZ} or high-order Runge-Kutta, 
can integrate an ODE  numerically  with precision up to $1000$ digits. 
For convenience, we use the build-in `NDSolve'  of {\sl Mathematica} to do the numerical integration for (\ref{wq2alpha-deqsA})-(\ref{wq2alpha-deqs}).
The default option of `NDSolve' is inappropriate to do high-precision numerical integration.
By explicitly giving  the `Method' option of `NDSolve', we can force it to use the   Gauss-Legendre Runge-Kutta method,
which is suitable for the high-precision purpose.
To save running time, we manage to let the typical precision be of order $10^{-120}$
\footnote{It does not mean  the final  error or final relative error is less than  $10^{-120}$.
It just mean,  the relative error is less than $10^{-120}$ for every step.} .
The stages  of the Runge-Kutta method are set  according to the precision goal of the numerical integration.
As a rule, we always let the stages greater than $100$, 
i.e, the order of the numerical scheme is always more than $200$. 
The step-sizes $h$ are chosen as  $0.01 \le h \le 0.05$.
By rough but careful estimations for each case, 
we guarantee that the errors generated by the numeric scheme itself are always  negligible, 
comparing to the errors that exist on the boundaries and  are propagated by the  ODE system.

\subsection{Determine  $T_P$}
$T_P$ is determined by two key factors: 
the truncation orders of $\Phi_{ij}$  at $t=\infty$ 
and the precision goal of the numerical integration.
We use (\ref{Phi11P})-(\ref{Phi00P}) as the truncation of $\Phi_{ij}$ 
since the higher order truncation will involve multiple integrals, which is difficult to get satisfactory high-precision results.

We demand the error of $\Phi_{ij}(t)$ at $t=0$ is of order $10^{-120}$.
For the solution $\Phi_{i1}(t)$, we can show  their errors at $t=0$ are of order $e^{-\frac{8}{9} t_P^{3/2}}$.
Solving $e^{-\frac{8}{9} t_P^{3/2}}=10^{-120}$, we get $t_P \approx 45.888$.
For convenience, we set $t_P=46$, at which the relative errors are of order 
$e^{-\frac{4}{3} 46^{3/2}} \approx 2.19 \times 10^{-181}$. 
So we set the precision goal of the numeric scheme as $10^{-182}$ in computing $\Phi_{i1}(t)$.
By a similar way, we could show it is appropriate to set $t_P=36$ and the precision goal as $10^{-120}$
in computing $\Phi_{i2}(t)$.
In computing $\Phi_{i0}(t)$, we also use $t_P=36$ and the precision $10^{-120}$.

\subsection{Determine  $T_N$}
By (\ref{PropN-1})-(\ref{PropN-0}), each $k_O$ term contributes to  a portion of order  
$e^{-\frac{2 \sqrt{2}}{9}(-t)^{3/2}}$.
By solving $ e^{-\frac{2 \sqrt{2}}{9}(-t)^{3/2}}=10^{-120}$, we get $t \approx -91.7761 $.
This means  the $k_O$ terms  can be neglected when $t<-91.7761$.
For convenience, we set $T_N=-92$.

\subsection{The numerical solution of $u(t)$}
For computation efficiency, the numerical solution of $u(t)$ is first obtained independently on $[t_M, t_H]$.
We demand the max error of $u(t)$ is of order  $10^{-120}$.
Since the best approximation of $u(t)$ by (\ref{u-expansion}) has an error of order $e^{-(-t)^{3/2}}$, 
$t_M$ is obtained by solving $e^{-(-t)^{3/2}}=10^{-120}$, i.e., $t_M \approx -42.42$.
For safety, we set $t_M=-44$. So, for $t_N \le t<t_M$, we use the asymptotic expansion (\ref{u-expansion}) 
up to the $(-t)^{-414}$ term to compute $u(t)$.
For $t_M \le t \le t_H$, $u(t)$   is obtained by the high-precision numerical integration of (\ref{PII}). 

Let $$\tilde u(t)=u(t)+\epsilon \mathscr{U}(t),$$
where $\epsilon$ is an infinitesimal, and $\tilde u$ also satisfies (\ref{PII}).
Then $\mathscr{U}(t)$ satisfies
$$\mathscr{U}''(t)=\left(t +6 u(t)^2 \right) \mathscr{U}(t).$$
As $u(t) \xlongrightarrow{t \rightarrow -\infty} \sqrt{\frac{-t}{2}}+\cdots $,
$\mathscr{U}(t)$ is of order $e^{\frac{1}{3} (-2 t)^{3/2}}$.
In the numerical experiments, $\epsilon \mathscr{U}(t)$ is  understood as the error.
So $\epsilon \sim 3 \times 10^{-240}$ in order that at $t=-44$ the error is of order $10^{-120}$.
Then the error of $u(t)$ at $t=0$ should be of order  $10^{-240}$.
The computational error of $u(t)$ behaves like $\epsilon_u \mathrm{Ai}(t)$ for $t>0$.
So, the relative error at $t=t_H$ should also be of order  $10^{-240}$.
For safety, we manage  the relative error at $t=t_H$ to be of order $10^{-250}$.

The value of $t_H$ is related to how $u(t)$ is approximated near $t=\infty$.
We take 
\begin{eqnarray}
u(t)\approx \mathrm{Ai}(t)-2 \pi  \mathrm{Ai}(t) \int_{\infty}^t \mathrm{Bi}(s) \mathrm{Ai}(s)^3 ds
+2 \pi  \mathrm{Bi}(t) \int_{\infty}^t \mathrm{Ai}(s)^4 ds \label{u-approxP}
\end{eqnarray}
as the approximation of $u(t)$.  
The error order of the approximation (\ref{u-approxP}) is about $e^{-\frac{10}{3}t^{3/2}}$.
So the relative error is of order  $e^{-\frac{8}{3}t^{3/2}}$.
By solving $e^{-\frac{8}{3}t^{3/2}}=10^{-250}$, $t_H \approx 35.985 $ is obtained.
For convenience, we set $t_H=36$.
We use $250$ digits in computing the numerical solution of $u(t)$.


\subsection{Transformations to avoid small step size} 
Fixing the step size, the Runge-Kutta method will be generally  more accurate to integrate a slow-varying system. 
To see the crux, let us consider the approximation of $e^{-t^{3/2}}$ by polynomials.
It is easy to see that the relative error of the approximation on interval $[100,100.01]$ is 
almost the same as the one on interval $[1,1.1]$ when using the same degree of approximation polynomials.
This means smaller step size is needed for large $t$ if the system increases or decreases too fast. 
To avoid the small step size for large $|t|$, 
we use Table 1 to transform the fast  variables to slow  ones.
\begin{center}
\begin{table}[H]
	\centering
	\caption{Transformations used to transform the fast variables to the slow ones}
\begin{tabular}{|c|c|c|c|}
\hline
fast variables& $t<-1$ & $-1 \le t \le 1$ & $t>1$ \\
\hline
$u$ & $u(t)$ & $u(t)$ & $\tilde u(t)= u(t) e^{\frac{2}{3} t^{3/2}} $ \\
\hline
& & &
$\tilde \Phi_{11}(t)= \Phi_{11}(t) e^{-\frac{2}{9}t^{3/2}}$\\
$\Phi_{i1}$, $i=1,2,0.$ &$\tilde \Phi_{i1}(t)=\Phi_{i1}(t) e^{-\frac{2 \sqrt{2}}{9} (-t)^{3/2}} $ &$\Phi_{i1}(t)$ &
$\tilde \Phi_{21}(t)= \Phi_{21}(t) e^{-\frac{2}{9}t^{3/2}}$\\
& & &
$\tilde \Phi_{01}(t)= \Phi_{01}(t) e^{\frac{4}{9}t^{3/2}}$ \\
\hline
& & &
$\tilde \Phi_{12}(t)= \Phi_{12}(t) e^{\frac{2}{9}t^{3/2}}$\\
$\Phi_{i2}$, $i=1,2,0.$ &$\tilde \Phi_{i2}(t)=\Phi_{i2}(t) e^{-\frac{2 \sqrt{2}}{9} (-t)^{3/2}} $ &$\Phi_{i2}(t)$ &
$\tilde \Phi_{22}(t)= \Phi_{22}(t) e^{\frac{2}{9}t^{3/2}}$\\
& & &
$\tilde \Phi_{02}(t)= \Phi_{02}(t) e^{\frac{8}{9}t^{3/2}}$ \\
\hline
& & &
$\tilde \Phi_{10}(t)= \Phi_{10}(t) e^{\frac{2}{3}t^{3/2}}$\\
$\Phi_{i0}$, $i=1,2,0.$ &$\tilde \Phi_{i0}(t)=\Phi_{i0}(t) e^{-\frac{2 \sqrt{2}}{9} (-t)^{3/2}} $ &$\Phi_{i0}(t)$ &
$\tilde \Phi_{20}(t)= \Phi_{20}(t) e^{\frac{2}{3}t^{3/2}}$\\
& & &
$\tilde \Phi_{00}(t)= (\Phi_{00}(t)-1) e^{\frac{4}{3} t^{3/2}} $ \\
\hline
\end{tabular}
\end{table}
\end{center}

\subsection{Numerical results}
The main numerical results are displayed in Figure 1.


\subsubsection{The values of $k_{P1}$, $k_{P2}$ and $k_{P0}$}
In principle, $k_{Pi}$, $i=1, 2, 0$, can be computed by any of $\lim \limits_{t \rightarrow -\infty} \frac{\Phi_{1i}(t)}{\varphi_{1P}(t)} $,
 $\lim \limits_{t \rightarrow -\infty} \frac{\Phi_{2i}(t)}{\varphi_{2P}(t)} $ or
  $\lim \limits_{t \rightarrow -\infty} \frac{\Phi_{0i}(t)}{\varphi_{0P}(t)} $.
In our numerical experiments, we use  
\begin{eqnarray}
k_{Pi}=\frac{\Phi_{0i}(t_N)}{\varphi_{0P}(t_N)} , \label{kPi-CAL}
\end{eqnarray}
which is a little more accurate than the other two choices.
In (\ref{kPi-CAL}), $\Phi_{0i}(t_N)$ are obtained directly from the numerical integration of ODEs of $\tilde \Phi_{ji}$,
while $\varphi_{0P}(t_N)$ is calculated by its asymptotic expansion (\ref{varphi0P}), 
where $\varphi_{0P}(t)$ is computed up to the term $c^0_{354} \times (-t)^{354} (-t)^{\frac{1}{12}} e^{\frac{2 \sqrt{2}}{9}(-t)^{3/2}}$. 
It is not surprising that $c^0_{354}$ is very large since  (\ref{varphi0P}) is an asymptotic expansion.  
In fact, the term $c^0_{354} \times (-t)^{354}$ contributes about $1.74545 \times 10^{-126}$ at $t=-92$.
So  $k_{Pi}$ is determined with an approximate precision of $10^{-120}$.
The final numerical results of $k_{Pi}$ are 
\begin{eqnarray}
&&k_{P1} =-0.0969123435570255523226380385083332 \cdots, \label{kP1-v}\\
&&k_{P2} = 0.167857102921338590132168687360301197 \cdots, \label{kP2-v}\\
&&k_{P0}= 0.62357981669501424223251084362366955 \cdots. \label{kP0-v}
\end{eqnarray}

\subsubsection{$\Phi_{ij}(t)$ near $t=0$}
By (\ref{Phi11P})-(\ref{Phi11P}), (\ref{varphi1P})-(\ref{varphi0P}) and (\ref{kP1-v})-(\ref{kP0-v}),
the main behaviors of $\Phi_{ij}(t)$ at $t=\pm \infty$  have been described.
We demonstrate their behaviors on the ``transition zone"  by Figures 3, 4 and 5.  

\input{Figure3.tex}

\input{Figure4.tex}

\input{Figure5.tex}

\subsubsection{The critical point $P_c$}
In Section \ref{procdure-I}, we have explained
$c_2=-\min \limits_{\forall t \in (-\infty,\infty)} \left(\frac{\Phi_{00}(t)}{\Phi_{02}(t)}+c_1 \frac{\Phi_{01}(t)}{\Phi_{02}(t)} \right)$
on the boundary between $R_{smooth}$ and $R_{singular}$.
Given $c_1$, let the minimum is achieved at $t=t_z$.  The numerical results show  the $t_z$ is unique for any given $c_1$. 
So, on the boundary cure, $t_z=t_z(c_1)$.
It is obvious that both $t_z$ and $c_2$ must approach to $\infty$ when $c_1 \rightarrow -\infty$.
As $c_1$ increases gradually to $P_c$, $t_z$ decreases and finally approaches to $-\infty$ as displayed by Figure 6.

\begin{center}
\begin{tikzpicture}[scale=1.0]
\def\xsl{2.0}
\def\ysl{0.2}
\def\FigureDataA{(3.21723*\xsl,-1.8574758661190321782*\ysl)(3.2172*\xsl,-1.8574931866271077646*\ysl)
	(3.217*\xsl,-1.8576086566809389773*\ysl)(3.216*\xsl,-1.8581860069498466430568336*\ysl)
	(3.215*\xsl,-1.8587633572182017991*\ysl)(3.21*\xsl,-1.8616501085481464402*\ysl)
	(3.2*\xsl,-1.8674236111241669537*\ysl)(3.15*\xsl,-1.8962911210883407680867671234*\ysl)
	(3.1*\xsl,-1.9251586234874645675*\ysl)(3.*\xsl,-1.9828935921925428472*\ysl)
	(2.75*\xsl,-2.1272306337917092481*\ysl)(2.5*\xsl,-2.2715666689318457117*\ysl)(2.*\xsl,-2.560231977605626288*\ysl)
	(1.5*\xsl,-2.8488761612888077207*\ysl)(\xsl,-3.1374679841879923513*\ysl)(0.5*\xsl,-3.4259345704941775874*\ysl)
	(0,-3.7141079331281263929*\ysl)(-0.5*\xsl,-4.0016144778234764478*\ysl)}

\def\FigureDataB{(3.21723*\xsl,-16.556149874022113335*\ysl)(3.2172*\xsl,-15.528256439940496922*\ysl)
	(3.217*\xsl,-14.375164613766567857*\ysl)(3.216*\xsl,-13.29832856211287090247*\ysl)
	(3.215*\xsl,-12.89635371742941964*\ysl)(3.212500000*\xsl, -12.37273724*\ysl)
	(3.21*\xsl,-12.068826758311509639*\ysl)(3.2*\xsl,-11.424928005952301142*\ysl)(3.183750000*\xsl, -10.90872246*\ysl)
	(3.155000000*\xsl, -10.40370937*\ysl)(3.15*\xsl,-10.33887668313003289457955*\ysl)(3.102500000*\xsl, -9.876881768*\ysl)
	(3.1*\xsl,-9.8576940829757269616*\ysl)(3.050000000*\xsl, -9.534224626*\ysl)
	(3.*\xsl,-9.2860795209835792227*\ysl)(2.875000000*\xsl, -8.829573597*\ysl)
	(2.75*\xsl,-8.4930075801334799807*\ysl)(2.625000000*\xsl, -8.219666159*\ysl)
	(2.5*\xsl,-7.9854027613007211661*\ysl)(2.*\xsl,-7.2504803988566987227*\ysl)
	(1.5*\xsl,-6.6659591248477132144*\ysl)(\xsl,-6.1428481030692153702*\ysl)(0.5*\xsl,-5.6464150330525048495*\ysl)
	(0,-5.1614994722671110374*\ysl)(-0.5*\xsl,-4.6846767520716986454*\ysl)}

\draw[->] (-0.8*\xsl,0)--(3.6*\xsl,0) node[right] {$c_1$};
\draw[->] (0,-17.5*\ysl)--(0,1.5*\ysl);
\foreach \x in { 1*\xsl, 2*\xsl, 3*\xsl } \draw (\x,2 pt)--(\x,-2pt);
\foreach \y in {-2*\ysl, -4*\ysl, -6*\ysl, -8*\ysl, -10*\ysl, -12*\ysl, -14*\ysl, -16*\ysl} \draw (2 pt, \y)--(-2pt, \y);
\node at(-0.4,-2*\ysl){$-2$}; \node at(-0.4,-4*\ysl){$-4$}; \node at(-0.4,-6*\ysl){$-6$}; \node at(-0.4,-8*\ysl){$-8$};
\node at(-0.4,-10*\ysl){$-10$}; \node at(-0.4,-12*\ysl){$-12$}; \node at(-0.4,-14*\ysl){$-14$};  \node at(-0.4,-16*\ysl){$-16$};
 \node at(1*\xsl, -0.3){$1$}; \node at(2*\xsl, -0.3){$2$};\node at(3*\xsl, -0.3){$3$};
\node at(0.2,-0.2){O};
\node at(2.5*\xsl,-3.3*\ysl){$c_2$};
\node at(2.5*\xsl,-9*\ysl){$t_z$};
\draw[color=red] plot[smooth] coordinates {\FigureDataA};
\draw[color=green] plot[smooth] coordinates {\FigureDataB};
\fill (3.21723628697558*\xsl,-1.8574722363319859394*\ysl) circle (1 pt) node[right] {$P_c$};

\end{tikzpicture}
\end{center}
\begin{center}
\begin{minipage}{12cm}
	Figure 6. Plots of $c_2$ (red) and $t_z$ (green).
	The $c_2$ curve, which is the boundary between $R_{smooth}$ and $R_{singular}$, is smooth.
	Though it looks very like a straight line, it is indeed a curve.
	The $t_z$ curve has apparently a singularity near $c_1=-\frac{1}{2}\frac{k_{P0}}{k_{P1}} \approx 3.217236287$. 
\end{minipage}
\end{center}

On the right of $P_c$, the minimum is always achieved at $t=-\infty$, i.e., $t_z=-\infty$.
So we have 
\begin{eqnarray}
c_2=- \left(\frac{k_{P0}}{k_{P2}}+c_1 \frac{k_{P1}}{k_{P2}} \right)= -\frac{k_{P0}}{k_{P2}}-\frac{k_{P1}}{k_{P2}} c_1, \label{c2c1}
\end{eqnarray}
which is the straight line right of $P_c$ in Figure 1.
For the critical point $P_c$, the interesting observation  from the numerical experiment is $c_1=-\frac{1}{2}\frac{k_{P0}}{k_{P1}}$.
Then, from (\ref{c2c1}),  $c_2=-\frac{1}{2}\frac{k_{P0}}{k_{P2}}$ at $P_c$.

\subsubsection{The values of $k_{O1}$, $k_{O2}$, $k_{O0}$, $k_{N1}$, $k_{N2}$ and $k_{N0}$ }
Integrating  $u(t)$ and $\Phi_{ij}(t)$ numerically along the path O-A-B in Figure 7,
we have obtained the values of $k_{P1}$, $k_{P2}$ and $k_{P0}$ with about $120$ digits of precision.
But $k_{O1}$, $k_{O2}$, $k_{O0}$, $k_{N1}$, $k_{N2}$ and $k_{N0}$ can not be obtained in this way.
To calculate them, we have to extend our numerical integration from the real line  to the complex plane of $t$
as  displayed by Figure 7.

\begin{center}
	\begin{minipage}[c]{12 cm}
		\begin{center}
		\begin{minipage}[c]{6 cm}
			\begin{tikzpicture}[scale=1.0]
			\def\radA{3}
			\def\cosA{-0.5}
			\def\sinA{0.86602540378443864676}
			\def\cosB{0.76484218728448842626}
			\def\sinB{0.64421768723769105367}
			\def\cosC{0.637635084398867861}
			\def\sinC{0.77033856137652142}
			\def\xscal{0.03260869565217391}
			\def\yscal{0.03260869565217391}
\def\FigureAdata{
(-62.598229272737975885*\xscal, -67.420039245892837011*\yscal)
(-61.1822*\xscal,-64.1434*\yscal)(-59.5413*\xscal,-60.2646*\yscal)(-58.1007*\xscal,-56.7738*\yscal)
(-56.8267*\xscal,-53.6079*\yscal)(-55.6928*\xscal,-50.7171*\yscal)(-54.678*\xscal,-48.0613*\yscal)(-53.7652*\xscal,-45.6082*\yscal)
(-52.9407*\xscal,-43.3311*\yscal)(-52.1929*\xscal,-41.2077*\yscal)(-51.5124*\xscal,-39.2197*\yscal)(-50.8911*\xscal,-37.3514*\yscal)
(-50.3224*\xscal,-35.5894*\yscal)(-49.8004*\xscal,-33.9223*\yscal)(-49.3203*\xscal,-32.3403*\yscal)(-48.8779*\xscal,-30.8348*\yscal)
(-48.4695*\xscal,-29.3984*\yscal)(-48.0918*\xscal,-28.0245*\yscal)(-47.7422*\xscal,-26.7072*\yscal)(-47.4181*\xscal,-25.4416*\yscal)
(-47.1175*\xscal,-24.2229*\yscal)(-46.8384*\xscal,-23.0472*\yscal)(-46.5792*\xscal,-21.9107*\yscal)(-46.3383*\xscal,-20.8102*\yscal)
(-46.1145*\xscal,-19.7426*\yscal)(-45.9065*\xscal,-18.7052*\yscal)(-45.7133*\xscal,-17.6956*\yscal)(-45.5339*\xscal,-16.7115*\yscal)
(-45.3675*\xscal,-15.7507*\yscal)(-45.2133*\xscal,-14.8115*\yscal)(-45.0707*\xscal,-13.892*\yscal)(-44.9389*\xscal,-12.9905*\yscal)
(-44.8175*\xscal,-12.1055*\yscal)(-44.706*\xscal,-11.2357*\yscal)(-44.6039*\xscal,-10.3796*\yscal)(-44.5108*\xscal,-9.53597*\yscal)
(-44.4263*\xscal,-8.70365*\yscal)(-44.3502*\xscal,-7.88149*\yscal)(-44.2821*\xscal,-7.06841*\yscal)(-44.2218*\xscal,-6.26337*\yscal)
(-44.1691*\xscal,-5.46539*\yscal)(-44.1238*\xscal,-4.6735*\yscal)(-44.0857*\xscal,-3.88679*\yscal)(-44.0547*\xscal,-3.10436*\yscal)
(-44.0307*\xscal,-2.32532*\yscal)(-44.0136*\xscal,-1.54881*\yscal)(-44.0034*\xscal,-0.773987*\yscal)(-44.*\xscal,0)
(-44.0034*\xscal,0.773987*\yscal)(-44.0136*\xscal,1.54881*\yscal)(-44.0307*\xscal,2.32532*\yscal)(-44.0547*\xscal,3.10436*\yscal)
(-44.0857*\xscal,3.88679*\yscal)(-44.1238*\xscal,4.6735*\yscal)(-44.1691*\xscal,5.46539*\yscal)(-44.2218*\xscal,6.26337*\yscal)
(-44.2821*\xscal,7.06841*\yscal)(-44.3502*\xscal,7.88149*\yscal)(-44.4263*\xscal,8.70365*\yscal)(-44.5108*\xscal,9.53597*\yscal)
(-44.6039*\xscal,10.3796*\yscal)(-44.706*\xscal,11.2357*\yscal)(-44.8175*\xscal,12.1055*\yscal)(-44.9389*\xscal,12.9905*\yscal)
(-45.0707*\xscal,13.892*\yscal)(-45.2133*\xscal,14.8115*\yscal)(-45.3675*\xscal,15.7507*\yscal)(-45.5339*\xscal,16.7115*\yscal)
(-45.7133*\xscal,17.6956*\yscal)(-45.9065*\xscal,18.7052*\yscal)(-46.1145*\xscal,19.7426*\yscal)(-46.3383*\xscal,20.8102*\yscal)
(-46.5792*\xscal,21.9107*\yscal)(-46.8384*\xscal,23.0472*\yscal)(-47.1175*\xscal,24.2229*\yscal)(-47.4181*\xscal,25.4416*\yscal)
(-47.7422*\xscal,26.7072*\yscal)(-48.0918*\xscal,28.0245*\yscal)(-48.4695*\xscal,29.3984*\yscal)(-48.8779*\xscal,30.8348*\yscal)
(-49.3203*\xscal,32.3403*\yscal)(-49.8004*\xscal,33.9223*\yscal)(-50.3224*\xscal,35.5894*\yscal)(-50.8911*\xscal,37.3514*\yscal)
(-51.5124*\xscal,39.2197*\yscal)(-52.1929*\xscal,41.2077*\yscal)(-52.9407*\xscal,43.3311*\yscal)(-53.7652*\xscal,45.6082*\yscal)
(-54.678*\xscal,48.0613*\yscal)(-55.6928*\xscal,50.7171*\yscal)(-56.8267*\xscal,53.6079*\yscal)(-58.1007*\xscal,56.7738*\yscal)
(-59.5413*\xscal,60.2646*\yscal)(-61.1822*\xscal,64.1434*\yscal)
(-62.598229272737975885*\xscal, 67.420039245892837011*\yscal)
}

\fill[fill=green!16] (0, 0)--(-\radA*\cosC,-\radA*\sinC)-- plot[smooth] coordinates {\FigureAdata} --(-\radA*\cosC,\radA*\sinC) --cycle;
\fill[fill=yellow!16] plot (\radA *\cosA, \radA *\sinA) arc (120: 240: \radA)-- plot[smooth] coordinates {\FigureAdata} --(-\radA*\cosC,\radA*\sinC) --cycle;
			\draw[densely dashed, thin] (0,0)--(\radA*\cosA, \radA*\sinA);
			\draw[densely dashed, thin] (0,0)--(\radA*\cosA, -\radA*\sinA);
			\draw[red] (0,0)--(-\radA*\cosB, \radA*\sinB);
			\draw[red] (0,0)--(-\radA*\cosB, -\radA*\sinB);
			\draw[green] (0,0)--(-\radA*\cosC, \radA*\sinC);
			\draw[green] (0,0)--(-\radA*\cosC, -\radA*\sinC);
			\draw[->] (-1.2*\radA,0)--(0.5*\radA,0) node[right] {Re(t)};
			\draw[->] (0,-0.9*\radA,0)--(0, 0.9*\radA) node[right] {Im(t)};
		\node at(-97*\xscal,-7*\yscal){$-92$}; \node at(-97*\xscal,6*\yscal){B};
		\node at(-47*\xscal,-7*\yscal){$-44$}; \node at(-40*\xscal,6*\yscal){A};
		\node at(-58*\xscal,40*\yscal){C}; \node at(-58*\xscal,-40*\yscal){$\widebar{\text{C}}$};
		\node at(-76*\xscal,59*\yscal){D}; \node at(-76*\xscal,-59*\yscal){$\widebar{\text{D}}$};
		\node at(-60*\xscal,76*\yscal){E}; \node at(-60*\xscal,-77*\yscal){$\widebar{\text{E}}$};
		\node at(5*\xscal,-5*\yscal){O};
		\draw(0.4, 0) arc (0: 120: 0.4); \node at(10*\xscal,17*\yscal){$\frac{2 \pi}{3}$};	
			\end{tikzpicture}
		\end{minipage}
	\end{center}
	\begin{center}
		\begin{minipage}[c]{12 cm}
	\noindent	Figure 7. Paths used to integrate $\Phi_{ij}(t)$. 
	 $\angle \text{BOD}=\frac{7}{10}$ and $\angle \text{BOE}=\frac{2}{3}\arccos(\frac{135\ln 10}{184 \sqrt{46}}) \approx 0.879372$.
	The boundary between the light yellow region and the light green one is 
	 $r=\frac{44}{\left( \cos(\frac{3}{2}\theta-\frac{3\pi}{2}) \right)^{2/3}} $.
	$\Phi_{ij}$ at point D are used to calculate $k_{O1}$,  $k_{O2}$ and  $k_{O0}$.
	$k_{N1}$,  $k_{N2}$ and  $k_{N0}$ are calculated from $\Phi_{ij}$ at point E.
	For precision reason, we use path O-C-D rather than arc $\wideparen{\text{BD}}$ to  numerically integrate the ODEs 
	for  $\tilde \Phi_{ij}$. Path O-E is used for the same reason. 
	$u(t)$ on the path in the light green region is obtained by the numerical integration of (\ref{PII})  
	while on the path in the light yellow region it is calculated by the expansion (\ref{u-expansion}) up to the
	$(-t)^{-414}$ term.
	\end{minipage}
	\end{center}
	\end{minipage}
\end{center}

We compute $k_{Oi}$  by $k_{Oi} = \frac{\Phi_{1i}(t) -k_{Pi} \times \varphi_{1P}(t) }{\varphi_{1O}(t)}$,
where $t$ is chosen as the point D.
The argument of D is chosen by solving $e^{\frac{2 \sqrt{2}}{9} (92)^{3/2} \cos(\frac{3}{2}\theta +\frac{3\pi}{2})}=10^{60} $, i.e., 
$\pi-\theta \approx 0.699535$. For simplicity, we choose $\pi-\theta=\frac{7}{10}$.
It is easy to show that  $\tilde \Phi_{ij}$ lost their precision when they are integrated numerically 
along arc $\wideparen{\text{BD}}$ starting from B.
So we integrate them numerically along the ray O-D, by which $\tilde \Phi_{ij}$ 
can be guaranteed to have about $120$ digits of precision.
$k_{Oi}$ obtained by this way can be shown to have about $60$ digits of  precision,
which is almost the best that we can expect for the computation of $k_{Oi}$ 
when $u(0)$ is computed with about $240$ digits of precision.

To compute $k_{Ni}$, we use 
$k_{Ni}=\frac{\Phi_{2i}(t) -k_{Pi} \times \varphi_{2P}(t)-k_{Oi}\times \varphi_{2O} }{\varphi_{2N}(t)}$,
where $t$ is chosen as the point E.
The argument of E is chosen by solving $e^{\frac{2 \sqrt{2}}{9} (92)^{3/2} \cos(\frac{3}{2}\theta+\frac{3\pi}{2})}=10^{30} $.
At the first sight, one may want to evaluate $k_{Ni}$ from $\Phi_{ij}$ on the dotted line.
But \cite{FIKN} has proved 
\begin{eqnarray}
u(t)= \sqrt{\frac{-t}{2}} \left(1+O((-t)^{-3/2}) \right)
+\frac{\mathrm{i}}{2^\frac{7}{4} \sqrt{\pi}} (-t)^{-\frac{1}{4}} e^{-\frac{2 \sqrt{2}}{3} (-t)^{3/2}}
\left( 1+O(t^{-\frac{1}{4}})\right), \label{FI-Uexpansion}
\end{eqnarray}
for $\frac{2 \pi}{3} \le \arg(t)<\frac{4 \pi}{3}$.
So we should not use the expansions (\ref{varphi1P})-(\ref{varphi0N}) near $\theta =\frac{2 \pi}{3}$.
Considering the exponential term of (\ref{FI-Uexpansion}), 
we can show $k_{Ni}$ are best calculated near E.
Also, it can be shown that $k_{Ni}$ calculated in this way have about $30$ digits of  precision. 

The final numerical values of $k_{ij}$, $i=O,N$, $j=1,2,0$, are
\begin{eqnarray}
&&k_{O1}^+ =\left(k_{O1}^-\right)^*=  0.474787653555570800096\cdots+ \mathrm{i} \times  0.091372926529406526556\cdots ,\label{kO1}\\
&&k_{O2}^+=\left(k_{O2}^-\right)^* =  0.274118779588219579669\cdots- \mathrm{i} \times 0.158262551185190266698\cdots, \label{kO2}\\
&&k_{O0}^+ =\left(k_{O2}^-\right)^* =-1.018336045084649924885\cdots- \mathrm{i} \times 0.58793658975512151298 \cdots, \label{kO0}\\
&&k_{N1}^+ =\left(k_{N1}^-\right)^*= -0.19583328674156168848\cdots + \mathrm{i} \times 0.048456171778512776161\cdots , \label{kN1}\\
&&k_{N2}^+=\left(k_{N2}^-\right)^* =  0.0484561717785127761613\cdots+ \mathrm{i}\times 0.083928551460669295066\cdots, \label{kN2}\\
&&k_{N0}^+=\left(k_{N0}^-\right)^* =-0.360023975030083963185\cdots . \label{kN0}
\end{eqnarray}

From our numerical results, we observe that $k_{N1}^+= \left( \frac{7 \sqrt{3}}{6}-\frac{1}{2} \mathrm{i} \right)k_{P1} $,
$k_{N2}^+= \left(\frac{\sqrt{3}}{6}+\frac{1}{2} \mathrm{i} \right)k_{P2} $ and $k_{N0}^+= -\frac{\sqrt{3}}{3}k_{P0} $
with the errors less than $10^{-30}$, which are consistent with the estimated precision of 
the numerical $k_{N1}^+$,  $k_{N2}^+$ and $k_{N0}^+$.
Also it is observed that $\left(\frac{k_{O2}^+}{k_{P2}}\right)^*=-\frac{k_{O0}^+}{k_{P0}}$ with more than $60$ digits of precision.

\subsubsection{The solution corresponding to $P_c$}
Let us consider the  solutions of (\ref{phis-1})-(\ref{phis-0}) described by Figure 1.
We note that the solution corresponding to $P_c$ in Figure 1 has a special property.
For simplicity, we scale the solution as 
\begin{eqnarray}
\Phi_c&=&(\Phi_{1c},\Phi_{2c}, \Phi_{0c}) \nonumber\\
&=&\frac{2}{k_{P0}}\left( \Phi_{10},\Phi_{20}, \Phi_{00} \right)
-\frac{1}{k_{P1}}\left( \Phi_{11},\Phi_{21}, \Phi_{01} \right)
-\frac{1}{k_{P2}}\left( \Phi_{12},\Phi_{22}, \Phi_{02} \right) . \nonumber 
\end{eqnarray}
By the numerical connection data (\ref{kP1-v}-\ref{kP0-v}) and (\ref{kO1})-(\ref{kN0}),
it is easy to verify (within the tolerance of precision)
$\Phi_c(t)  \xlongrightarrow {t \rightarrow -\infty}  -2\sqrt{3} \left( \varphi_{1N}(t), \varphi_{2N}(t), \varphi_{0N}(t) \right)$.
So this special solution decreases exponentially to $0$ as $t \rightarrow -\infty$.
We also note the other bounded solutions at $t=-\infty$,
which are spanned by $ \Phi_c$ and $\frac{1}{k_{P1}}\left( \Phi_{11},\Phi_{21}, \Phi_{01} \right)
-\frac{1}{k_{P2}}\left( \Phi_{12},\Phi_{22}, \Phi_{02} \right)$, 
decrease algebraically\footnote{Just as the asymptotic series hints,
the numerical results show the decrease looks very like $(-t)^{-\frac{1}{6}}$.} to $0$.

\section{The wave function of Painlev\'{e} II }
The Lax pair of Painlev\'{e} II is
\begin{eqnarray}
	&&\frac{d \Psi_0}{dx}= \hat L_0 \Psi_0, \label{Psix} \\
	&&\frac{d \Psi_0}{dt}= \hat B_0 \Psi_0, \label{Psit}
\end{eqnarray}
where $\hat L_0$ and  $\hat B_0$ are defined by (\ref{L0}) and (\ref{B0}).
Unlike in Section \ref{sec-q2alpha-ODE} where $\psi_0$ is vector,  here, $\Psi_0$ is a $2 \times 2$ matrix.

Define the six regions in the complex $x$-plane as
$$\Omega_j= \left\{x \left| \frac{\pi}{2}+\frac{j-2}{3} \pi<\arg x<\frac{\pi}{2}+\frac{j}{3}\pi \right. \right\},
\quad j=1, 2, \cdots, 6.$$
Equation (\ref{Psix}) has $6$ canonical solutions $\Psi_0^{(j)}(x)$ defined in the regions $\Omega_j$, $j=1, \cdots, 6$,
\begin{eqnarray}
\Psi_0^{(j)}(x)
\xlongrightarrow {x \rightarrow \infty}
\left(I+\frac{m_1}{x}+\cdots \right) e^{\left(\frac{x^3}{6}-\frac{x t}{2} \right) \sigma_3},
\hspace*{2cm}\frac{\pi}{2}+\frac{j-2}{3} \pi<\arg x<\frac{\pi}{2}+\frac{j}{3}\pi. \nonumber
\end{eqnarray}
For convenience, we denote $\Omega_7=\Omega_1$ and $\Psi_0^{(7)}=\Psi_0^{(1)}$.
If $\Psi_0$ is known, then $u(t)$ can be recovered by
$$u= (m_1)_{21}=-(m_1)_{12}.$$

The sector $\Omega_j$ overlaps with $\Omega_{j+1}$. 
In the crossover region,
\begin{eqnarray}
\Psi_0^{(j+1)}=\Psi_0^{(j)} S_0^{(j)} .\label{overlap}
\end{eqnarray}
For the case that $u(t)$ is the Hastings-McLeod solution,
\begin{eqnarray}
\begin{split}
S_0^{(1)}=\left( \begin{array}{cc} 1&0\\-1 & 1\end{array} \right), \quad
S_0^{(2)}=I_{2\times 2}, \quad
S_0^{(3)}=\left( \begin{array}{cc} 1&0\\1 & 1\end{array} \right), \\
S_0^{(4)}=\left( \begin{array}{cc} 1&-1\\0 & 1\end{array} \right), \quad
S_0^{(5)}=I_{2\times 2}, \quad
S_0^{(6)}=\left( \begin{array}{cc} 1&1\\0 & 1\end{array} \right).
\end{split}
\nonumber
\end{eqnarray}

Define $Y^{(j)}(x,t)$ by
\begin{eqnarray}
Y^{(j)}(x,t)=\Psi_0^{(j)} (x,t) e^{-\left(\frac{x^3}{6}-\frac{xt}{2} \right) \sigma_3}. \label{Y-DEF}
\end{eqnarray}

For convenience, we call both $\Psi_0^{(j)}$ and  $Y^{(j)}$ as the wave functions of Painlev\'{e} II.

By (\ref{overlap}), $Y^{(j)}$ satisfy the Riemann-Hilbert problem illustrated by Figure 8.

\begin{center}
	\begin{minipage}[c]{8 cm}
		\begin{minipage}[c]{8 cm}
			\begin{tikzpicture}
			\draw[->, thin, gray] (-3.4,0) -- (3.4,0) node[below] {Re($x$)};
			\draw[->, thin, gray] (0,-3.4) -- (0,3.4) node[right] {Im($x$)};
			\draw[->, thick] (0,0) -- (0.5*1.7,0.866*1.7);\draw[-,thick] (0.5*1.7,0.866*1.7)--(0.5*3.4,0.866*3.4);
			\draw[->, thick] (0,0) -- (-0.5*1.7,0.866*1.7);\draw[-,thick] (-0.5*1.7,0.866*1.7)--(-0.5*3.4,0.866*3.4);
			\draw[->, thick] (0,0) -- (-0.5*1.7,-0.866*1.7);\draw[-,thick] (-0.5*1.7,-0.866*1.7)--(-0.5*3.4,-0.866*3.4);
			\draw[->, thick] (0,0) -- (0.5*1.7,-0.866*1.7);\draw[-,thick] (0.5*1.7,-0.866*1.7)--(0.5*3.4,-0.866*3.4);
			\node at(1.7,-0.8){$Y^{(5)}$}; \node at(1.7,0.8){$Y^{(6)}$}; \node at(1.7,0.2){$Y^{(5)}=Y^{(6)}$};
			\node at(0,1.7){$Y^{(1)}$};
			\node at(-1.7,-0.8){$Y^{(3)}$}; \node at(-1.7,0.8){$Y^{(2)}$}; \node at(-1.7,0.2){$Y^{(2)}=Y^{(3)}$};
			\node at(0,-1.7){$Y^{(4)}$};
			\node at(0,-3.8){$Y^{(j)} \xlongrightarrow {x \rightarrow \infty} I_{2 \times 2}$};
			\node at(2.5,2){$\left(\begin{array}{cc}1&e^{\frac{x^3}{3}-x t}\\0&1 \end{array}\right) $};
			\node at(-2.7,2){$\left(\begin{array}{cc}1&0\\-e^{x t-\frac{x^3}{3}}&1 \end{array}\right) $};
			\node at(2.7,-2){$\left(\begin{array}{cc}1&-e^{\frac{x^3}{3}-x t}\\0&1 \end{array}\right) $};
			\node at(-2.5,-2){$\left(\begin{array}{cc}1&0\\e^{x t-\frac{x^3}{3}}&1 \end{array}\right) $};
			\end{tikzpicture}
		\end{minipage}
		\begin{minipage}[c]{8 cm}
			\begin{center}Figure 8. The original Riemann-Hilbert Problem.\end{center}
		\end{minipage}
	\end{minipage}
\end{center}

To prove Theorem \ref{theorem-UniqueSolution},
a detailed analysis for the case $t \rightarrow \infty$ is needed.
So we deform the original Riemann-Hilbert problem to Figure 9.

\begin{center}
	\begin{minipage}[c]{8 cm}
		\begin{minipage}[c]{8 cm}
			\begin{tikzpicture}
			\draw[->, thin, gray] (-3.4,0) -- (3.4,0) node[below] {Re($x$)};
			\draw[->, thin, gray] (0,-3.4) -- (0,3.4) node[right] {Im($x$)};
			\node at(0,-3.8){$ \tilde Y^{(1)}, \tilde Y^{(3)}, \tilde Y^{(6)} 
				\xlongrightarrow {x \rightarrow \infty} I_{2 \times 2}$};
			\node at(2.6,2){$\left(\begin{array}{cc}1&e^{\frac{x^3}{3}-x t}\\0&1 \end{array}\right) $};
			\node at(-2.8,2){$\left(\begin{array}{cc}1&0\\-e^{x t-\frac{x^3}{3}}&1 \end{array}\right) $};
			\draw [->](0.8,0.4) .. controls (0.5*2,0.866*2-0.2) and (0.5*2.7,0.866*2.7-0.1) .. (0.5*3.4,0.866*3.4-0.01);
			\draw[thick] (0.8,0) circle (0.4cm);
			\draw (0.8,-0.4) .. controls (0.5*2,-0.866*2+0.2) and (0.5*2.7,-0.866*2.7+0.1) .. (0.5*3.4,-0.866*3.4+0.01);
			\draw [->](-0.8,0.4) .. controls (-0.5*2,0.866*2-0.2) and (-0.5*2.7,0.866*2.7-0.1) .. (-0.5*3.4,0.866*3.4-0.01);
			\draw[thick] (-0.8,0) circle (0.4cm);
			\draw (-0.8,-0.4) .. controls (-0.5*2,-0.866*2+0.2) and (-0.5*2.7,-0.866*2.7+0.1) .. (-0.5*3.4,-0.866*3.4+0.01);
			\node at(0,1){$\tilde {Y}^{(1)}=\tilde{Y}^{(4)}$};
			\node at(0,2){$\tilde {Y}^{(1)}=Y^{(1)}$};
			\node at(0,-2){$\tilde {Y}^{(4)}=Y^{(4)}$};
			\node at(2.5,-1){$\tilde {Y}^{(6)}=Y^{(6)}$};
			\node at(-2.5,-1){$\tilde {Y}^{(3)}=Y^{(3)}$};
			\node at(0.8,0){$\tilde {Y}_P$};
			\node at(-0.8,0){$\tilde {Y}_N$};
			\end{tikzpicture}
		\end{minipage}
		\begin{minipage}[c]{8 cm}
			\begin{center}Figure 9. The final Riemann-Hilbert Problem.\end{center}
		\end{minipage}
	\end{minipage}
\end{center}

By solving the Riemann-Hilbert problem of $Y$, one gets the following result.
\begin{lemma} \label{Y-RHP}  $Y^{(6)}$ and $Y^{(3)}$ have the following asymptotics:
\begin{itemize}
	\item [(A)] For $x \rightarrow \infty$ and  fixed $t$,
	$Y^{(6)}(x,t)$ has expansion $Y^{(6)}(x,t)=I_{2\times 2}+\frac{m_1(t)}{x} +\cdots$.
	
	\item [(B)] For $x \rightarrow -\infty$ and fixed t,
	$Y^{(3)}(x,t)$ has expansion $Y^{(3)}(x,t)=I_{2\times 2}+\frac{m_1(t)}{x}+\cdots$.
	
    \item [(C)] $\lim \limits_{x \rightarrow \infty, t \rightarrow \infty, t^{\frac{1}{4}}|x-\sqrt{t}| \rightarrow \infty} 
	Y^{(6)}(x,t)=I_{2\times 2}$
	and $\lim \limits_{x \rightarrow -\infty, t \rightarrow \infty,  t^{\frac{1}{4}}|x+\sqrt{t}| \rightarrow \infty} 
	Y^{(3)}(x,t)=I_{2\times 2}$.
	
	\item [(D)] For $t \rightarrow \infty$ and $0 \le x < \sqrt{t}- t^{\epsilon -\frac{1}{4}}$,
	$Y^{(6)} \rightarrow \left( \begin{array}{cc}1& -e^{\frac{x^3}{3}-x t} \\0 &1 \end{array}\right)$.
	
	\item [(E)] For $t \rightarrow \infty$ and $-\sqrt{t}+t^{\epsilon -\frac{1}{4}} < x \le 0 $,
	$Y^{(3)} \rightarrow \left( \begin{array}{cc}1& 0\\-e^{x t-\frac{x^3}{3}}  &1 \end{array}\right)$.
\end{itemize}
	In both cases (A) and (B), $m_1(t)=\left( \begin{array}{cc} (u')^2-u^4-t u^2&-u\\u&-(u')^2+u^4+t u^2\end{array} \right)$. 
\end{lemma}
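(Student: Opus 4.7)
The proof combines two classical tools: direct power-series analysis of the Flaschka--Newell Lax pair at $x=\infty$ for parts (A) and (B), and the Deift--Zhou nonlinear steepest descent method applied to the deformed Riemann--Hilbert problem of Figure 9 for parts (C)--(E). Throughout I use that $u$ is the Hastings--McLeod solution of (\ref{PII}) and that each canonical solution $\Psi_0^{(j)}$ admits a formal expansion of the standard type at $x=\infty$.

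For (A) and (B), I would substitute $Y^{(j)}(x,t) = I + m_1(t)/x + m_2(t)/x^2 + \cdots$ into the ODE
\begin{equation*}
Y_x = \tfrac{x^2}{2}[\sigma_3, Y] + x A_0 Y + B_0 Y + \tfrac{t}{2} Y \sigma_3,
\end{equation*}
obtained by conjugating (\ref{Psix}) by $e^{-(x^3/6 - xt/2)\sigma_3}$, with $A_0$ and $B_0$ the off-diagonal and constant parts of (\ref{L0}), and then match powers of $x$ recursively. The order $x^1$ relation forces $[\sigma_3,m_1] = -2A_0$, giving $(m_1)_{12}=-u$ and $(m_1)_{21}=u$, in agreement with the standard reconstruction $u=(m_1)_{21}$. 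The diagonal entries of $m_1$ are not fixed at this order, but are determined by compatibility of (\ref{Psix}) with (\ref{Psit}) together with (\ref{PII}) for $u$; this yields $(m_1)_{11} = (u')^2 - u^4 - t u^2 = -(m_1)_{22}$. Restricting the resulting expansion to the rays $\arg x = 0 \subset \Omega_6$ and $\arg x = \pi \subset \Omega_3$ gives (A) and (B) respectively.

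For (C)--(E) I would carry out the Deift--Zhou analysis on Figure 9. The phase $\theta(x,t) = x^3/3 - xt$ has saddles at $x=\pm\sqrt{t}$ with $\theta(\pm\sqrt{t}) = \mp \tfrac{2}{3} t^{3/2}$, and the deformation in Figure 9 is chosen so that on every ray leaving to infinity the jump is $I + O(e^{-c\, t^{3/2}})$ for a uniform $c>0$. Around each saddle one builds the standard Airy parametrix on the local scale $\xi = t^{1/4}(x\mp\sqrt{t})$, matched to an identity outer parametrix; small-norm Riemann--Hilbert theory then delivers $\tilde Y(x,t) = (I + O(t^{-3/2})) P(x,t)$ uniformly, where $P$ reduces to the identity outside disks of radius $\sim t^{-1/4+\delta}$ around $\pm\sqrt{t}$. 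This is precisely (C). To deduce (D), I reverse the contour deformation in the slab $0 \le x \le \sqrt{t} - t^{\epsilon -1/4}$: since this region was separated from the outer $Y^{(6)}$-region by the lens along the upper-right ray carrying the jump $\bigl(\begin{smallmatrix}1 & e^{\theta}\\ 0 & 1\end{smallmatrix}\bigr)$, inverting that jump converts $\tilde Y \to I$ into $Y^{(6)} \to \bigl(\begin{smallmatrix}1 & -e^{\theta}\\ 0 & 1\end{smallmatrix}\bigr)$. Part (E) follows by the mirror-image argument in the left half-plane using the jump $\bigl(\begin{smallmatrix}1 & 0\\ -e^{-\theta} & 1\end{smallmatrix}\bigr)$ on the upper-left ray.

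The main obstacle I anticipate is the careful bookkeeping of which side of each deformed contour the evaluation point $x$ lies on, i.e., which jump factors must be reinstated when passing from $\tilde Y$ back to the canonical solutions $Y^{(6)}$, $Y^{(3)}$, uniformly across the entire slab up to the soft-edge cutoff $|x \mp \sqrt{t}| = t^{\epsilon -1/4}$. Getting the exponentials in (D) and (E) with the correct signs and in the correct matrix entries, and verifying that the Airy-parametrix error near this cutoff does not swamp the explicit leading corrections $-e^{x^3/3-xt}$ and $-e^{xt-x^3/3}$, is the most delicate point; it relies on the Deift--Zhou error estimates being exponentially strong on the real axis away from the Airy disks. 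Once this bookkeeping is complete and combined with the uniqueness of the canonical solutions $\Psi_0^{(j)}$, the lemma follows.
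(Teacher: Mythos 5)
The paper does not actually supply a proof of Lemma~\ref{Y-RHP}; immediately after the statement it says only ``Lemma \ref{Y-RHP} is already known, see for example \cite{FIKN}.'' So there is no internal argument to compare your sketch against --- what you have written is essentially an outline of the proof that lives in the cited reference, not an alternative to a proof the paper carries out.

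That said, your outline is sound and matches the standard treatment. For (A)--(B), the power-matching in the conjugated ODE correctly yields $(m_1)_{12}=-u$, $(m_1)_{21}=u$ at order $x^1$; you are right that the diagonal is not fixed by the $x$-equation at any order (the $x^{-1}$ relation reduces to a tautology once $(m_2)_{21}$ is substituted), and that one must use the $t$-equation $Y_t=-\tfrac{x}{2}[\sigma_3,Y]-A_0Y$, whose $x^{-1}$ coefficient gives $\dot m_1 = B_0 + \tfrac{t}{2}\sigma_3$, hence $\tfrac{d}{dt}(m_1)_{11}=-u^2$; combined with Painlev\'e II this integrates to $(u')^2-u^4-tu^2$. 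Being slightly more explicit about this step would strengthen the write-up, since ``compatibility together with (\ref{PII})'' leaves it to the reader. For (C)--(E), the Deift--Zhou program with Airy parametrices at $x=\pm\sqrt t$ and an identity outer parametrix is exactly how these limits are obtained; your reading of the jump bookkeeping is also correct: from $Y^{(1)}=Y^{(6)}\left(\begin{smallmatrix}1&e^{x^3/3-xt}\\0&1\end{smallmatrix}\right)$ and $\tilde Y^{(1)}=Y^{(1)}$ in the central strip, the smallness of $\tilde Y-I$ there gives $Y^{(6)}\to\left(\begin{smallmatrix}1&-e^{x^3/3-xt}\\0&1\end{smallmatrix}\right)$ on $0\le x<\sqrt t - t^{\epsilon-1/4}$, and the mirror argument gives (E). The one point worth flagging as genuinely delicate --- and you flag it yourself --- is the uniformity of the small-norm error up to the Airy-disk boundary at scale $t^{\epsilon-1/4}$ so that it does not interfere with the explicit exponential in (D)--(E); that is precisely the part that would have to be carried out carefully if one were not content to cite \cite{FIKN} as the paper does.
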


Lemma \ref{Y-RHP} is already known, see for example \cite{FIKN}.

\begin{remark}
Because of (\ref{kappa-Ninf}), at $t=-\infty$,  $\kappa(t)$ is `smaller' than other quantities  in the formulae.
It is unnecessary to estimate $Y^{(3)}$ and $Y^{(6)}$ so accurately at $t=-\infty$.
\end{remark}

Lemma \ref{Y-RHP} fulfils parts of our purpose to prove (\ref{theorem-UniqueSolution}).
In fact we still need more detailed behaviour of $Y^{(3)}$ on $x=k \sqrt{t}$.
For completeness, we also give the results for $Y^{(6)}$.

Before we study the asymptotics of $Y^{(6)}(x,t)$ and $Y^{(3)}(x,t)$  along $x=k \sqrt{t}$, let us
first write down the ODEs for them, which our study will rely on.

By (\ref{Y-DEF}), $Y^{(j)}$ satisfies
\begin{eqnarray}
&&\frac{dY^{(j)}}{dx} =\hat L_0 Y^{(k)}+\left(\frac{t}{2}-\frac{x^2}{2} \right)  Y^{(j)} \sigma_3 ,\nonumber\\
&&\frac{dY^{(j)}}{dt} =\hat B_0 Y^{(k)}+\frac{x}{2} Y^{(j)} \sigma_3  .\nonumber
\end{eqnarray}
The detailed formulae are
\begin{eqnarray}
&&\frac{d}{dx} 
\left( \begin{array}{c}
Y^{(j)}_{11}\\Y^{(j)}_{21}
\end{array}\right)=
\left( \begin{array}{cc}
-u(t)^2&x u(t)-u'(t)\\x u(t)+u'(t)& t-x^2+u(t)^2
\end{array}\right)
\left( \begin{array}{c}
Y^{(j)}_{11}\\Y^{(j)}_{21}
\end{array}\right),
\label{Y1C-EQ-X}\\
&&\frac{d}{dt} 
\left( \begin{array}{c}
Y^{(j)}_{11}\\Y^{(j)}_{21}
\end{array}\right)=
\left( \begin{array}{cc}
0&-u(t)\\-u(t)&x
\end{array}\right)
\left( \begin{array}{c}
Y^{(j)}_{11}\\Y^{(j)}_{21}
\end{array}\right),
\label{Y1C-EQ-T}\\
&&\frac{d}{dx} 
\left( \begin{array}{c}
Y^{(j)}_{12}\\Y^{(j)}_{22}
\end{array}\right)=
\left( \begin{array}{cc}
-t+x^2-u(t)^2&x u(t)-u'(t)\\x u(t)+u'(t)& u(t)^2
\end{array}\right)
\left( \begin{array}{c}
Y^{(k)}_{12}\\Y^{(k)}_{22}
\end{array}\right),
\label{Y2C-EQ-X}\\
&&\frac{d}{dt} 
\left( \begin{array}{c}
Y^{(j)}_{12}\\Y^{(j)}_{22}
\end{array}\right)=
\left( \begin{array}{cc}
-x&-u(t)\\-u(t)& 0
\end{array}\right)
\left( \begin{array}{c}
Y^{(j)}_{12}\\Y^{(j)}_{22}
\end{array}\right).
\label{Y2C-EQ-T}
\end{eqnarray}

Along the line $x=k \sqrt{t}$, by
\begin{eqnarray}
\frac{dY^{(j)}(k \sqrt{t},t)}{dt}=\frac{k}{2 \sqrt{t}}\frac{dY^{(j)}(x,t)}{dx}\bigg|_{x=k\sqrt{t}}
+\frac{dY^{(j)}(x,t)}{dt}\bigg|_{x=k\sqrt{t}} , \quad j=3, 6 \nonumber, 
\end{eqnarray}
we get
\begin{eqnarray}
&&\frac{d}{dt} 
\left( \begin{array}{c}
Y^{(j)}_{11}\\Y^{(j)}_{21}
\end{array}\right)=
\left( \begin{array}{cc}
\frac{-k u(t)^2}{2 \sqrt{t}}&\frac{k^2-2}{2} u(t)-\frac{k}{2 \sqrt{t}} u'(t)\\ \frac{k^2-2}{2} u(t)+\frac{k}{2 \sqrt{t}}u'(t)&-k\frac{(k^2-3)t-u(t)^2}{2 \sqrt{t}}
\end{array}\right)
\left( \begin{array}{c}
Y^{(j)}_{11}\\Y^{(j)}_{21}
\end{array}\right),
\label{Y1C-SQRT-T}\\
&&\frac{d}{dt} 
\left( \begin{array}{c}
Y^{(j)}_{12}\\Y^{(j)}_{22}
\end{array}\right)=
\left( \begin{array}{cc}
k\frac{(k^2-3)t-u(t)^2}{2 \sqrt{t}}&\frac{k^2-2}{2} u(t)-\frac{k}{2 \sqrt{t}}u'(t)\\ \frac{k^2-2}{2} u(t)+\frac{k}{2 \sqrt{t}}u'(t)&\frac{k u(t)^2}{2 \sqrt{t}}
\end{array}\right)
\left( \begin{array}{c}
Y^{(j)}_{12}\\Y^{(j)}_{22}
\end{array}\right).
\label{Y2C-SQRT-T}
\end{eqnarray}

\subsection{The asymptotics of $Y^{(6)}(x,t)$ for $x \rightarrow \infty$ and $t \rightarrow \infty$  along $x=k \sqrt{t}$
\label{Y6-Asymptotic}}
In this case $k>0$. 
First let us assume $k \neq 1$.
By Lemma \ref{Y-RHP}, we know $Y^{(6)}_{11} \rightarrow 1$ and $Y^{(6)}_{22} \rightarrow 1$.
Therefore, the approximate differential equations for $Y^{(6)}_{21}$ and $Y^{(6)}_{12}$ along $x=k \sqrt{t}$ are
\begin{eqnarray}
&&\frac{d}{dt} Y^{(6)}_{21}= \frac{(3-k^2)k}{2} \sqrt{t}  Y^{(6)}_{21}+\frac{k^2-2}{2}\mathrm{Ai}(t)+\frac{k}{2 \sqrt{t}}\mathrm{Ai}'(t),
\label{Y6-21EQ-apprx}\\
&&\frac{d}{dt} Y^{(6)}_{12}=\frac{(k^2-3)k}{2} \sqrt{t}  Y^{(6)}_{12} +\frac{k^2-2}{2}\mathrm{Ai}(t)-\frac{k}{2 \sqrt{t}}\mathrm{Ai}'(t),
\label{Y6-12EQ-apprx}
\end{eqnarray}
while the approximate differential equations for $Y^{(6)}_{21}$ and $Y^{(6)}_{12}$ for fixed but large $t$  are
\begin{eqnarray}
	&&\frac{d}{dx}Y^{(6)}_{21}=(t-x^2)Y^{(6)}_{21}+x \mathrm{Ai}(t)+\mathrm{Ai}'(t), \label{Y6-21EQ-apprx-X}\\
	&&\frac{d}{dx}Y^{(6)}_{12}=(-t+x^2)Y^{(6)}_{12}+x \mathrm{Ai}(t)-\mathrm{Ai}'(t). \label{Y6-12EQ-apprx-X}
\end{eqnarray}

\subsubsection{$Y^{(6)}_{21}$ }
{\bf \noindent Case $0<k<2$:}

In this case\footnote{By the result of Riemann-Hilbert problem, $Y^{(6)}_{21} \rightarrow 0$ and $Y^{(6)}_{22} \rightarrow 1$
are still true for $k=1$.}, the solution of (\ref{Y6-21EQ-apprx}) is
\begin{eqnarray}
&&Y^{(6)}_{21}=\Upsilon_1(k) e^{\frac{k}{3}(3-k^2) t^{3/2}} 
+e^{\frac{k}{3}(3-k^2) t^{3/2}} \int_{\infty}^t e^{\frac{k}{3}(k^2-3) s^{3/2}} \left(\frac{k^2-2}{2}\mathrm{Ai}(s)+\frac{k}{2 \sqrt{s}}\mathrm{Ai}'(s) \right) ds.
\label{Y6-21EQ-apprx-SOL}
\end{eqnarray}
The solution of (\ref{Y6-21EQ-apprx-X}) is
\begin{eqnarray}
&&Y^{(6)}_{21}=C_1(t) e^{t x-\frac{1}{3}x^3} 
+ e^{t x-\frac{1}{3}x^3} \int_{\sqrt{t}}^x e^{-t s+\frac{1}{3}s^3} \left(s \mathrm{Ai}(t)+\mathrm{Ai}'(t) \right) ds.
\label{Y6-21EQ-apprx-SOL-X}
\end{eqnarray}
(\ref{Y6-21EQ-apprx-SOL}) and (\ref{Y6-21EQ-apprx-SOL-X}) must coincide at $x=k \sqrt{t}$.
Therefore, we have
\begin{eqnarray}
C_1(t)-\Upsilon_1(k)&=&\int_{\infty}^t e^{\frac{k}{3}(k^2-3) s^{3/2}} \left(\frac{k^2-2}{2}\mathrm{Ai}(s)+\frac{k}{2 \sqrt{s}}\mathrm{Ai}'(s) \right) ds  \nonumber\\
&&-\int_{\sqrt{t}}^{k \sqrt{t}} e^{-t s+\frac{1}{3}s^3} \left(s \mathrm{Ai}(t)+\mathrm{Ai}'(t) \right) ds .\label{Y6-21-AsymptA}
\end{eqnarray}

Note that the right-side of (\ref{Y6-21-AsymptA}) is a solution of $\frac{\partial^2 }{\partial k \partial t}f(k,t)=0$.
So (\ref{Y6-21-AsymptA}) determines $C_1(t)$ and $\Upsilon_1(k)$ up to a constant.
But we know $\Upsilon_1( k)=0$ for $k \in (0, \sqrt{3}]$.
So we get
\begin{eqnarray}
&&C_1(t)=\int_{\infty}^t e^{-\frac{2}{3} s^{3/2}} \left(-\frac{1}{2}\mathrm{Ai}(s)+\frac{1}{2 \sqrt{s}}\mathrm{Ai}'(s) \right) ds
.\label{Y6-21-C1-SOL}
\end{eqnarray}
Taking the $t \rightarrow \infty$ limit of (\ref{Y6-21-AsymptA}) and considering (\ref{Y6-21-C1-SOL}), we obtain
\begin{eqnarray}
\Upsilon_1(k)=0,  \quad k\in (0, 2). \label{Y6-21-Upsilon1-SOL}
\end{eqnarray}
Therefore, the final result for $0<k<2$ is
\begin{eqnarray}
Y^{(6)}_{21} \approx e^{\frac{k}{3}(3-k^2) t^{3/2}} \int_{\infty}^t e^{\frac{k}{3}(k^2-3) s^{3/2}} \left(\frac{k^2-2}{2}\mathrm{Ai}(s)+\frac{k}{2 \sqrt{s}}\mathrm{Ai}'(s) \right) ds. \nonumber
\end{eqnarray}

{\bf \noindent Case $k=2$:}

Similar to the case $0<k<2$, we can also derive
$$\Upsilon_1(2)=0 .$$
Thus
\begin{eqnarray}
Y^{(6)}_{21} \approx e^{-\frac{2}{3} t^{3/2}} \int_{\infty}^t e^{\frac{2}{3} s^{3/2}} \left(\mathrm{Ai}(s)+\frac{1}{\sqrt{s}}\mathrm{Ai}'(s) \right) ds. \nonumber
\end{eqnarray}

{\bf \noindent Case $k>2$:}

In this case, the solution of (\ref{Y6-21EQ-apprx}) is
\begin{eqnarray}
&&Y^{(6)}_{21}=\Upsilon_1(k) e^{\frac{k}{3}(3-k^2) t^{3/2}} 
+e^{\frac{k}{3}(3-k^2) t^{3/2}} \int_{t_A}^t e^{\frac{k}{3}(k^2-3) s^{3/2}} \left(\frac{k^2-2}{2}\mathrm{Ai}(s)+\frac{k}{2 \sqrt{s}}\mathrm{Ai}'(s) \right) ds.
\label{Y6-21EQ-apprx-SOL-simp}
\end{eqnarray}
Clearly, the first term can be neglected. 
Thus for $k>2$,
\begin{eqnarray}
&&Y^{(6)}_{21} \approx e^{\frac{k}{3}(3-k^2) t^{3/2}} \int_{t_A}^t e^{\frac{k}{3}(k^2-3) s^{3/2}} \left(\frac{k^2-2}{2}\mathrm{Ai}(s)+\frac{k}{2 \sqrt{s}}\mathrm{Ai}'(s) \right) ds.
\label{Y6-21EQ-apprx-SOL-simpF}
\end{eqnarray}
Note $t_A$ is a fixed arbitrary real number.

\begin{remark}
As $k \rightarrow \infty$,  (\ref{Y6-21EQ-apprx-SOL-simpF}) is consistent with
$Y^{(6)}_{21} \xlongrightarrow {x \rightarrow \infty} \frac{u(t)}{x}$.
\end{remark}

\subsubsection{$Y^{(6)}_{12}$}

For all $k>0$, (\ref{Y6-12EQ-apprx}) has the solution
\begin{eqnarray}
&&Y^{(6)}_{12}=\Upsilon_2(k) e^{\frac{k}{3}(k^2-3) t^{3/2}} 
+e^{\frac{k}{3}(k^2-3) t^{3/2}} \int_{\infty}^t e^{\frac{k}{3}(3-k^2) s^{3/2}} \left(\frac{k^2-2}{2}\mathrm{Ai}(s)-\frac{k}{2 \sqrt{s}}\mathrm{Ai}'(s) \right) ds.
\label{Y6-12EQ-apprx-SOL}
\end{eqnarray}
Since (\ref{Y6-12EQ-apprx-SOL}) has to approach $0$ as $t \rightarrow \infty$,
we have $\Upsilon_2(k)=0$ for $k \ge \sqrt{3}$.

The solution of (\ref{Y6-12EQ-apprx-X}) is
\begin{eqnarray}
Y^{(6)}_{12}=C_2(t) e^{-t x+\frac{1}{3}x^3} 
+e^{-t x+\frac{1}{3}x^3} \int_{\sqrt{t}}^x e^{t s-\frac{1}{3}s^3} 
\left(s \mathrm{Ai}(s)-\mathrm{Ai}'(s) \right) ds.
\label{Y6-12EQ-apprx-SOL-X}
\end{eqnarray}

By the consistence of (\ref{Y6-12EQ-apprx-SOL}) and (\ref{Y6-12EQ-apprx-SOL-X}) for $x=k \sqrt{t}$,
we get
\begin{eqnarray}
C_2(t)-\Upsilon_2(k) &=&\int_{\infty}^t e^{\frac{k}{3}(3-k^2) s^{3/2}} \left(\frac{k^2-2}{2}\mathrm{Ai}(s)-\frac{k}{2 \sqrt{s}}\mathrm{Ai}'(s) \right) ds \nonumber\\
&&-\int_{\sqrt{t}}^{k\sqrt{t}} e^{t s-\frac{1}{3}s^3} 
\left(s \mathrm{Ai}(s)-\mathrm{Ai}'(s) \right) ds.
\label{C2-Upsilon2-A}
\end{eqnarray}
Unlike the case of $Y^{(6)}_{21}$, we can not recklessly take the $k \rightarrow 1$ limit of (\ref{C2-Upsilon2-A}).

Let us fix $k$, $k>1$. Consider the asymptotics of (\ref{C2-Upsilon2-A}) as $t \rightarrow \infty$.
The first term of the right-side of (\ref{C2-Upsilon2-A})  can be neglected 
since it is exponentially small for $t \rightarrow \infty$.
So we have
\begin{eqnarray}
C_2(t)-\Upsilon_2(k) 
&\approx& -\int_{\sqrt{t}}^{k \sqrt{t}}
e^{t s-\frac{1}{3} s^3} (\mathrm{Ai(t)} s-\mathrm{Ai}'(t)) ds \nonumber \\
&=& -\int_{0}^{k-1}
e^{\frac{2}{3}t^{3/2}-t^{3/2} r^2-\frac{1}{3}t^{3/2} r^3} \left(\mathrm{Ai(t)}\sqrt{t} (1+r)-\mathrm{Ai}'(t) \right) \sqrt{t} dr
\nonumber\\
&\approx& -\int_{0}^{\infty}  e^{-t^{3/2} r^2} \left(1-\frac{1}{3}t^{3/2} r^3+ \cdots \right)
e^{\frac{2}{3}t^{3/2} } \left(\mathrm{Ai(t)}\sqrt{t} (1+r)-\mathrm{Ai}'(t) \right) \sqrt{t} dr \nonumber\\
&=&-\frac{1}{2}-\frac{1}{12 \sqrt{\pi}} t^{-\frac{3}{4}}+\frac{35}{1728 \sqrt{\pi}}t^{-\frac{9}{4}}+\cdots.
\label{C2Up-Inf-1}
\end{eqnarray}
By the condition that $\Upsilon_2(k)=0$ for $k \ge \sqrt{3}$, we get
\begin{eqnarray}
&&\Upsilon_2(k)=0, \quad k>1, \nonumber\\
&&C_2(t)=-\frac{1}{2}-\frac{1}{12 \sqrt{\pi}} t^{-\frac{3}{4}}+\frac{35}{1728 \sqrt{\pi}}t^{-\frac{9}{4}}+\cdots. \label{C2-inf-1A}
\end{eqnarray}

Similarly,  for $0<k<1$, we have
\begin{eqnarray}
C_2(t)-\Upsilon_2(k)=\frac{1}{2}-\frac{1}{12 \sqrt{\pi}} t^{-\frac{3}{4}}+\frac{35}{1728 \sqrt{\pi}}t^{-\frac{9}{4}}+\cdots.
\label{ C2Up-Inf-2 }
\end{eqnarray}

Therefore,
$$\Upsilon_2(k)=-1, \quad 0<k<1. $$

Thus,
\begin{eqnarray}
\Upsilon_2(k)=
\left\{ \begin{array}{ll}
0,& k>1,\\ -1, &0<k<1.
\end{array}
\right. \label{Upsilon2-SOL}
\end{eqnarray}

One should not try to get the expression of $C_2(t)$ by setting $k=1$ in (\ref{C2-Upsilon2-A}), 
since $\Upsilon_2(1)$ is not defined.

We claim that
\begin{eqnarray}
C_2(t)=-\frac{1}{2}-\frac{1}{2}\int_{\infty}^t e^{\frac{2}{3} s^{3/2}} \left( \mathrm{Ai}(s)+\frac{1}{\sqrt{s}} \mathrm{Ai}'(s)\right) ds. \label{C2-inf-SOL}
\end{eqnarray}

\begin{proof}
Let $k_P=1+\epsilon-\frac{\epsilon^2}{6}+\frac{5}{72} \epsilon^3+\cdots$
and $k_N=1-\epsilon-\frac{\epsilon^2}{6}-\frac{5}{72} \epsilon^3+\cdots$
be the two roots of $\frac{1}{3} (3-k^2) k=\frac{2}{3}-\epsilon^2$, $\epsilon>0$.
Then
\begin{eqnarray}
C_2(t)+\frac{1}{2}&=& \frac{1}{2} \left(C_2(t)-\Upsilon_2(k_P) + C_2(t)-\Upsilon_2(k_N)\right) \nonumber\\
&=& \int_{\infty}^t e^{(\frac{2}{3}-\epsilon^2) s^{3/2}} 
\left(-\frac{1}{2} \left( \mathrm{Ai}(s)+\frac{\mathrm{Ai}'(s)}{\sqrt{s}} \right)+
\frac{1}{12} \epsilon^2 \left( 4 \mathrm{Ai}(s)+\frac{\mathrm{Ai}'(s)}{\sqrt{s}}\right)+\cdots \right) ds \nonumber \\
&&-\int_{\sqrt{t}}^{k_P \sqrt{t}}\cdots ds -\int_{\sqrt{t}}^{k_N \sqrt{t}}\cdots ds. \nonumber
\end{eqnarray}
Since $4 \mathrm{Ai}(s)+\frac{\mathrm{Ai}'(s)}{\sqrt{s}}=e^{-\frac{2}{3}s^{3/2}} 
\left(\frac{3}{2 \sqrt{\pi}} s^{-\frac{1}{4}}-\frac{9}{32 \sqrt{\pi}} s^{-\frac{7}{4}}+\cdots\right) $,
we only need to show 
$\lim \limits_{\epsilon \rightarrow 0} \epsilon^2 \int_{\infty}^t e^{-\epsilon^2 s^{3/2}} s^{-\frac{1}{4}}ds =0 $.
But $\epsilon^2 \int_{\infty}^t e^{-\epsilon^2 s^{3/2}} s^{-\frac{1}{4}}ds
=\frac{2}{3} \epsilon \sqrt{\pi} \left(\mathrm{Erf}(\epsilon t^{\frac{3}{4}})-1 \right) $.
Therefore, we set $\epsilon=\epsilon(t)$ smaller than $t^{-n}$ for any $n>\frac{3}{4}$, for example, $\epsilon(t)=e^{-t}$.
Then  the terms $\int_{\sqrt{t}}^{k_P \sqrt{t}} \cdots ds$ , $\int_{\sqrt{t}}^{k_N \sqrt{t}}\cdots ds$
and  $ \int_{\infty}^t \frac{1}{12} \epsilon^2 \left( 4 \mathrm{Ai}(s)+\frac{\mathrm{Ai}'(s)}{\sqrt{s}}\right) ds$
can  all be neglected. So (\ref{C2-inf-SOL}) is obtained.
\end{proof}

\begin{remark} \label{rem-Y6-12}
	For $k \rightarrow \infty$,  (\ref{Y6-12EQ-apprx-SOL}) with (\ref{Upsilon2-SOL}) is consistent 
	with $Y^{(6)}_{12} \approx -\frac{u}{x}$.
\end{remark}

Altogether, (\ref{Y6-12EQ-apprx-SOL}) with (\ref{Upsilon2-SOL}) is convenient for estimating $Y^{(6)}_{12}$ on $x=k \sqrt{t}$,
$k \neq 1$, and (\ref{Y6-12EQ-apprx-SOL-X}) with (\ref{C2-inf-SOL}) is proper for estimating $Y^{(6)}_{12}$
near $x=\sqrt{t}$.

\subsection{The asymptotics of $Y^{(3)}(x,t)$ for $x \rightarrow -\infty$ and $t \rightarrow \infty$  along $x=k \sqrt{t}$}
The behaviour of $Y^{(3)}$ are similar to $Y^{(6)}$ presented in Section \ref{Y6-Asymptotic}.

In this case, $k<0$. 
$x=-\sqrt{t}$ is the dividing line.
By Lemma \ref{Y-RHP}, we know $Y^{(3)}_{11} \rightarrow 1$ and $Y^{(3)}_{22} \rightarrow 1$.
Therefore, the approximate differential equations for $Y^{(3)}_{21}$ and $Y^{(3)}_{12}$ along $x=k \sqrt{t}$ are
\begin{eqnarray}
	&&\frac{d}{dt} Y^{(3)}_{21}= \frac{(3-k^2)k}{2} \sqrt{t}  Y^{(3)}_{21}+\frac{k^2-2}{2}\mathrm{Ai}(t)+\frac{k}{2 \sqrt{t}}\mathrm{Ai}'(t),
	\label{Y3-21EQ-apprx}\\
	&&\frac{d}{dt} Y^{(3)}_{12}=\frac{(k^2-3)k}{2} \sqrt{t}  Y^{(3)}_{12} +\frac{k^2-2}{2}\mathrm{Ai}(t)-\frac{k}{2 \sqrt{t}}\mathrm{Ai}'(t).
	\label{Y3-12EQ-apprx}
\end{eqnarray}

The approximate differential equations for $Y^{(3)}_{21}$ and $Y^{(3)}_{12}$ for fixed but large $t$  are
\begin{eqnarray}
	&&\frac{d}{dx}Y^{(3)}_{21}=(t-x^2)Y^{(3)}_{21}+x \mathrm{Ai}(t)+\mathrm{Ai}'(t),  \label{Y3-21EQ-apprx-X}\\
	&&\frac{d}{dx}Y^{(3)}_{12}=(-t+x^2)Y^{(3)}_{12}+x \mathrm{Ai}(t)-\mathrm{Ai}'(t). \label{Y3-12EQ-apprx-X}
\end{eqnarray}

\subsubsection{$Y^{(3)}_{12}$ }
The behaviour of $Y^{(3)}_{12}$ is similar to $Y^{(6)}_{21}$.

{\bf \noindent Case $-2<k<0$:}

In this case, the solution of (\ref{Y3-12EQ-apprx}) is
\begin{eqnarray}
&&Y^{(3)}_{21}=\Upsilon_3(k) e^{\frac{k}{3}(k^2-3) t^{3/2}} 
+e^{\frac{k}{3}(k^2-3) t^{3/2}} \int_{\infty}^t e^{\frac{k}{3}(3-k^2) s^{3/2}} \left(\frac{k^2-2}{2}\mathrm{Ai}(s)-\frac{k}{2 \sqrt{s}}\mathrm{Ai}'(s) \right) ds.
\label{Y3-12EQ-apprx-SOL}
\end{eqnarray}
The solution of (\ref{Y3-12EQ-apprx-X}) is
\begin{eqnarray}
&&Y^{(3)}_{12}=C_3(t) e^{-t x+\frac{1}{3}x^3} 
+ e^{-t x+\frac{1}{3}x^3} \int_{-\sqrt{t}}^x e^{t s-\frac{1}{3}s^3} \left(s \mathrm{Ai}(t)-\mathrm{Ai}'(t) \right) ds.
\label{Y3-12EQ-apprx-SOL-X}
\end{eqnarray}
So we have
\begin{eqnarray}
C_3(t)-\Upsilon_3(k)&=&\int_{\infty}^t e^{\frac{k}{3}(3-k^2) s^{3/2}} \left(\frac{k^2-2}{2}\mathrm{Ai}(s)-\frac{k}{2 \sqrt{s}}\mathrm{Ai}'(s) \right) ds \nonumber\\
&&-\int_{-\sqrt{t}}^{k \sqrt{t}} e^{t s-\frac{1}{3}s^3} \left(s \mathrm{Ai}(t)-\mathrm{Ai}'(t) \right) ds .\label{Y3-12-AsymptA}
\end{eqnarray}

By $\Upsilon_3( k)=0$ for $k \in [-\sqrt{3},0)$,
we get
\begin{eqnarray}
C_3(t)=\int_{\infty}^t e^{-\frac{2}{3} s^{3/2}} \left(-\frac{1}{2}\mathrm{Ai}(s)+\frac{1}{2 \sqrt{s}}\mathrm{Ai}'(s) \right) ds
.\label{Y3-12-C3-SOL}
\end{eqnarray}
Taking the $t \rightarrow \infty$ limit of (\ref{Y3-12-AsymptA}) and considering (\ref{Y3-12-C3-SOL}), we obtain
\begin{eqnarray}
\Upsilon_3(k)=0,  \quad k\in (-2, 0). \label{Y3-12-Upsilon3-SOL}
\end{eqnarray}
Therefore, the final result for $-2<k<0$ is
\begin{eqnarray}
Y^{(3)}_{12} \approx e^{\frac{k}{3}(k^2-3) t^{3/2}} \int_{\infty}^t e^{\frac{k}{3}(3-k^2) s^{3/2}} \left(\frac{k^2-2}{2}\mathrm{Ai}(s)-\frac{k}{2 \sqrt{s}}\mathrm{Ai}'(s) \right) ds. \nonumber
\end{eqnarray}

{\bf \noindent Case $k=-2$:}

\begin{eqnarray}
Y^{(3)}_{12} \approx e^{-\frac{2}{3} t^{3/2}} \int_{\infty}^t e^{\frac{2}{3} s^{3/2}} \left(\mathrm{Ai}(s)+\frac{1}{\sqrt{s}}\mathrm{Ai}'(s) \right) ds. \nonumber
\end{eqnarray}

{\bf \noindent Case $k<-2$:}

In this case, the solution of (\ref{Y3-12EQ-apprx}) is
\begin{eqnarray}
&&Y^{(3)}_{12}=\Upsilon_3(k) e^{\frac{k}{3}(k^2-3) t^{3/2}} 
+e^{\frac{k}{3}(k^2-3) t^{3/2}} \int_{t_A}^t e^{\frac{k}{3}(3-k^2) s^{3/2}} \left(\frac{k^2-2}{2}\mathrm{Ai}(s)-\frac{k}{2 \sqrt{s}}\mathrm{Ai}'(s) \right) ds.
\label{Y3-12EQ-apprx-SOL-simp}
\end{eqnarray}
Thus, for $k<-2$, $Y^{(3)}_{12}$ is approximated by
\begin{eqnarray}
&&Y^{(3)}_{12} \approx e^{\frac{k}{3}(k^2-3) t^{3/2}} \int_{t_A}^t e^{\frac{k}{3}(3-k^2) s^{3/2}} \left(\frac{k^2-2}{2}\mathrm{Ai}(s)-\frac{k}{2 \sqrt{s}}\mathrm{Ai}'(s) \right) ds.
\label{Y3-12EQ-apprx-SOL-simpF}
\end{eqnarray}

\subsubsection{$Y^{(3)}_{21}$}

For all $k<0$, (\ref{Y3-21EQ-apprx}) has the solution
\begin{eqnarray}
&&Y^{(3)}_{21}=\Upsilon_4(k) e^{\frac{k}{3}(3-k^2) t^{3/2}} 
+e^{\frac{k}{3}(3-k^2) t^{3/2}} \int_{\infty}^t e^{\frac{k}{3}(k^2-3) s^{3/2}} \left(\frac{k^2-2}{2}\mathrm{Ai}(s)+\frac{k}{2 \sqrt{s}}\mathrm{Ai}'(s) \right) ds.
\label{Y3-21EQ-apprx-SOL}
\end{eqnarray}

The solution of (\ref{Y3-21EQ-apprx-X}) is
\begin{eqnarray}
&&Y^{(3)}_{21}=C_4(t) e^{t x-\frac{1}{3}x^3} 
+e^{t x-\frac{1}{3}x^3} \int_{-\sqrt{t}}^x e^{-t s+\frac{1}{3}s^3} 
\left(s \mathrm{Ai}(t)+\mathrm{Ai}'(t) \right) ds.
\label{Y3-21EQ-apprx-SOL-X}
\end{eqnarray}

By the consistence of (\ref{Y3-21EQ-apprx-SOL}) and (\ref{Y3-21EQ-apprx-SOL-X}) for $x=k \sqrt{t}$,
we get
\begin{eqnarray}
C_4(t)-\Upsilon_4(k) &=&\int_{\infty}^t e^{\frac{k}{3}(k^2-3) s^{3/2}} \left(\frac{k^2-2}{2}\mathrm{Ai}(s)+\frac{k}{2 \sqrt{s}}\mathrm{Ai}'(s) \right) ds \nonumber\\
&&-\int_{-\sqrt{t}}^{k\sqrt{t}} e^{-t s+\frac{1}{3}s^3} 
\left(s \mathrm{Ai}(t)+\mathrm{Ai}'(t) \right) ds.
\label{C4-Upsilon4-A}
\end{eqnarray}

Let us fix $k$, $k<-1$.
Near $t=\infty$, the first term of the right-side of (\ref{C4-Upsilon4-A})  can be neglected since it is exponentially small.
Therefore, we obtain
\begin{eqnarray}
C_4(t)-\Upsilon_4(k) 
&\approx& -\int_{-\sqrt{t}}^{k \sqrt{t}}
e^{-t s+\frac{1}{3} s^3} (\mathrm{Ai(t)} s+\mathrm{Ai}'(t)) ds \nonumber \\
&=& -\int_{0}^{k+1}
e^{\frac{2}{3}t^{3/2}-t^{3/2} r^2+\frac{1}{3}t^{3/2} r^3} \left(\mathrm{Ai(t)}\sqrt{t} (-1+r)+\mathrm{Ai}'(t) \right) \sqrt{t} dr
\nonumber\\
&\approx& -\int_{0}^{-\infty}  e^{-t^{3/2} r^2} \left(1+\frac{1}{3}t^{3/2} r^3+ \cdots \right)
e^{\frac{2}{3}t^{3/2} } \left(\mathrm{Ai(t)}\sqrt{t} (r-1)+\mathrm{Ai}'(t) \right) \sqrt{t} dr \nonumber\\
&=&-\frac{1}{2}-\frac{1}{12 \sqrt{\pi}} t^{-\frac{3}{4}}+\frac{35}{1728 \sqrt{\pi}}t^{-\frac{9}{4}}+\cdots.
\label{C4Up-Inf-1}
\end{eqnarray}
By the condition that $\Upsilon_4(k)=0$ for $k \le -\sqrt{3}$, we get
\begin{eqnarray}
&&\Upsilon_4(k)=0, \quad k<-1, \nonumber\\
&&C_4(t)=-\frac{1}{2}-\frac{1}{12 \sqrt{\pi}} t^{-\frac{3}{4}}+\frac{35}{1728 \sqrt{\pi}}t^{-\frac{9}{4}}+\cdots. \label{C4-inf-1A}
\end{eqnarray}

Similarly,  for $-1<k<0$, we have
\begin{eqnarray}
C_4(t)-\Upsilon_4(k)=\frac{1}{2}-\frac{1}{12 \sqrt{\pi}} t^{-\frac{3}{4}}+\frac{35}{1728 \sqrt{\pi}}t^{-\frac{9}{4}}+\cdots.
\label{ C4Up-Inf-2 }
\end{eqnarray}

Therefore,
$$\Upsilon_4(k)=-1, \quad -1<k<0. $$

Thus,
\begin{eqnarray}
\Upsilon_4(k)=
\left\{ \begin{array}{ll}
0,& k<-1,\\ -1, &-1<k<0.
\end{array}
\right. \label{Upsilon4-SOL}
\end{eqnarray}

We claim that
\begin{eqnarray}
C_4(t)=-\frac{1}{2}-\frac{1}{2}\int_{\infty}^t e^{\frac{2}{3} s^{3/2}} \left( \mathrm{Ai}(s)+\frac{1}{\sqrt{s}} \mathrm{Ai}'(s)\right) ds. \label{C4-inf-SOL}
\end{eqnarray}
The proof is similar to the case in Section \ref{Y6-Asymptotic}, and thus we omit it.

\section{Proof of Theorem \ref{theorem-UniqueSolution}}
By (\ref{Rumanov-Conj}), we only need to prove that $\mathcal{F}(x,t)$ 
satisfies the  Bloemendal-Vir\'{a}g boundary (\ref{BV-boundary}) and  
that $\mathcal{F}(x,t)$ is bounded at  $x^2+t^2 =\infty$.

\begin{center}
\begin{minipage}[c]{12 cm}
\begin{center}
\begin{tikzpicture}
\draw[->, thick] (-3.5,0) -- (3.5,0) node[right] {$x$};
\draw[->, thick] (0,-3.5) -- (0,3.5) node[above] {$t$};
\draw[dashed] (0,0) circle (3cm);
\draw[dashed,light-gray]  (1,2.828)--(1,-2.828);
\draw[dashed,light-gray]  (-1,2.828)--(-1,-2.828);
\draw[dashed,light-gray]  (2.828,1)--(-2.828,1);
\draw[dashed,light-gray]  (2.828,-1)--(-2.828,-1);
\node at  (2.828+0.15,1+0.15) {A};\node at  (1+0.15,2.828+0.15) {B}; 
\node at  (-1-0.15, 2.828+0.15) {C};\node at  (-2.828-0.15,1+0.15) {D};
\node at  (-2.828-0.15,-1-0.15) {E};\node at  (-1-0.15, -2.828-0.15) {F};
\node at  (1+0.15,-2.828-0.15) {G}; \node at  (2.828+0.15,-1-0.15) {H};
\end{tikzpicture}
\end{center}
\end{minipage}

\begin{minipage}[c]{12 cm}
Figure 10. Diagram of the boundary $x^2+t^2=\infty$.
$\wideparen{AB}$: $x \rightarrow \infty$ and  $t \rightarrow \infty$;
$\wideparen{BC}$: fixed $x$ and  $t \rightarrow \infty$;
$\wideparen{CD}$: $x \rightarrow -\infty$ and  $t \rightarrow \infty$;
$\wideparen{DE}$: $x \rightarrow -\infty$  and fixed $t$;
$\wideparen{EF}$: $x \rightarrow -\infty$ and  $t \rightarrow -\infty$;
$\wideparen{FG}$: fixed $x$ and  $t \rightarrow -\infty$;
$\wideparen{GH}$: $x \rightarrow \infty$ and  $t \rightarrow -\infty$;
$\wideparen{HA}$: $x \rightarrow \infty$  and fixed $t$.
\end{minipage}
\end{center}

By considering  $\mathcal{F}(x,t)$ on the boundary $\wideparen{AB}$, \cite{GIKM} proved 
$$
\left( \begin{array}{c}
\mathcal{F}_0(x,t)\\ \mathcal{G}_0(x,t)
\end{array} \right)
= \mathrm{i}  \left( \begin{array}{c}
\Psi^{(6)}_{012}(x,t) \\  \Psi^{(6)}_{022}(x,t)
\end{array} \right),
$$
where $\mathcal{F}_0$ and $\mathcal{G}_0$ are defined by (\ref{psi0-DEF}), 
and $\Psi_0^{(6)}$ is a canonical wave solution of Painlev\'{e} II.

By (\ref{GIKM-F}), we get
\begin{eqnarray}
\mathcal{F}(x,t)=\kappa u^\frac{1}{2} 
\left[u^{-1} \left(\frac{1+q_2}{2}x-\alpha \right)Y_{12}^{(6)}(x,t)+Y_{22}^{(6)}(x,t) \right].
\label{GIKM-F-P}
\end{eqnarray}
Formula (\ref{GIKM-F-P}) is proper for $x \ge 0$.

The expression of $\mathcal{F}(x,t)$ for $x \le 0$ has also been given by \cite{GIKM}
\begin{eqnarray}
\mathcal{F}(x,t)=-\kappa u^\frac{1}{2} e^{\frac{x^3}{3}-x t}
\left[u^{-1} \left(\frac{1+q_2}{2}x-\alpha \right)Y_{11}^{(3)}(x,t)+Y_{21}^{(3)}(x,t) \right]. \label{GIKM-F-N}
\end{eqnarray}
Note $a=1$ has been applied to the expression in  \cite{GIKM}.
By (\ref{overlap}) and (\ref{Y-DEF}), (\ref{GIKM-F-P}) and (\ref{GIKM-F-N}) coincide on $x=0$.

\subsection{Boundedness of $\mathcal{F}(x,t)$ to $c_1=0$ and $c_2=0$ \label{c0c0}}
Let us investigate  $\mathcal{F}(x,t)$ at the boundary of 
$\wideparen{CD}$, i.e., $x \rightarrow -\infty$ and $t \rightarrow \infty$ simultaneously.

It is convenient to  study the case along $x=k \sqrt{t}$, $k \in (-\sqrt{3}, 0)$.
In this case,  $e^{\frac{x^3}{3}-x t}$ is very large:
\begin{eqnarray}
e^{\frac{x^3}{3}-x t}=e^{\frac{1}{3}k (k^2-3) t^{3/2}}. \nonumber
\end{eqnarray}
The largest case is $k=-1$, i.e.,
$$ e^{\frac{x^3}{3}-x t}=e^{\frac{2}{3} t^{3/2}}.$$

By (\ref{Y3-21EQ-apprx-SOL}), 
we know $-\kappa u^\frac{1}{2} e^{\frac{x^3}{3}-x t} Y_{21}^{(3)}(x,t)$ only contributes a finite term $-\Upsilon_4(k)$.
Thus we can temporally neglect it.

Also we know  for $t>0$, $Y_{11}^{(3)}(x,t) \xlongrightarrow {x \rightarrow -\infty} 1$.
Then by (\ref{kappa-pinf}), we have
\begin{eqnarray}
\lim_{x=k \sqrt{t}, k \in (-\sqrt{3},0), t \rightarrow \infty} \mathcal{F}(x,t)
=-\Upsilon_4(k)+\lim_{k \in (-\sqrt{3},0), t \rightarrow \infty}  
-e^{\frac{1}{3}k (k^2-3) t^{3/2}} u(t)^{-1} \left(k \sqrt{t} \frac{1+q_2(t)}{2}-\alpha(t) \right) .\nonumber
\end{eqnarray}

Since $\frac{1}{3}k (k^2-3) $ varies in $(0,\frac{2}{3}]$ for $k \in (-\sqrt{3},0)$
and  $u(t)^{-1}$ has order of $e^{\frac{2}{3} t^{3/2}}$,
we have to require  $k \sqrt{t} \frac{1+q_2(t)}{2}-\alpha(t)$ has order of $ e^{-\frac{4}{3} t^{3/2}}$.
By (\ref{PinfSolution-A}) and (\ref{PinfSolution}), we find it is only possible for $c_1=c_2=0$.

\subsection{$c_1=c_2=0$ to boundedness of $F(\beta=6; x,t)$ at $x^2+t^2=\infty$}

\subsubsection{On $\wideparen{HA}$}
In this case,  $t$ is fixed, $x \rightarrow \infty$,
$$\mathcal{F}(x,t)=\kappa u^\frac{1}{2} 
\left[u^{-1} \left(\frac{1+q_2}{2}x-\alpha \right)Y_{12}^{(6)}(x,t)+Y_{22}^{(6)}(x,t) \right].$$
Recall $Y_{12}^{(6)} \rightarrow -\frac{u(t)}{x}$ and $Y_{22}^{(6)} \rightarrow 1$ in this case,
we get 
\begin{eqnarray}
\lim_{x \rightarrow \infty}\mathcal{F}(x,t)= -\kappa(t) u(t)^{\frac{1}{2}} \frac{q_2(t)+1}{2}
+\kappa(t) u(t)^{\frac{1}{2}}=\frac{1}{2}\kappa(t) u(t)^{\frac{1}{2}} (1-q_2(t) ). \label{F-HA}
\end{eqnarray} 
It is straightforward to verify that (\ref{F-HA}) is the same as (\ref{GIKM-F6}).

\subsubsection{On $\wideparen{AB}$}
$$\mathcal{F}(x,t)=\kappa u^\frac{1}{2} 
\left[u^{-1} \left(\frac{1+q_2}{2}x-\alpha \right)Y_{12}^{(6)}(x,t)+Y_{22}^{(6)}(x,t) \right].$$

By $Y^{(6)}_{22}(x,t) \rightarrow 1$ on $\wideparen{AB}$,
we know
\begin{eqnarray}
\lim_{x \rightarrow \infty, t\rightarrow \infty } \kappa(t) u(t)^\frac{1}{2} Y^{(6)}_{22}(x,t) =1.
\label{F-AB-1}
\end{eqnarray}

By $Y^{(6)}_{12}(x,t) \rightarrow 0$ on $\wideparen{AB}$,
we have
\begin{eqnarray}
\lim_{x \rightarrow \infty, t\rightarrow \infty } -\kappa(t) u(t)^{-\frac{1}{2}} \alpha(t) Y^{(6)}_{12}(x,t) =0.
\label{F-AB-2}
\end{eqnarray}

Then we will show
\begin{eqnarray}
\lim_{x \rightarrow \infty, t\rightarrow \infty } \kappa(t) u(t)^{-\frac{1}{2}} \frac{q_2(t)+1}{2} x Y^{(6)}_{12}(x,t) =0.
\label{F-AB-3}
\end{eqnarray}

To prove (\ref{F-AB-3}), we divide the problem into $2$ cases\footnote{The division is at liberty.
For example, for given  $\epsilon>0$,  any
division  of $x \ge  t^{\frac{1}{2}+\epsilon}$ and $x \le  t^{\frac{1}{2}+\epsilon}$ works.}:
(1) $x \ge  t$;
(2) $x \le t $.
In the case (1),  by the Riemann-Hilbert problem of $Y$, 
it is easy to show   $Y^{(6)}_{12} = o(1) \times x^{-1}$ , and thus (\ref{F-AB-3}) is true.
In the case (2),
\begin{eqnarray}
\lim_{x \rightarrow \infty, t\rightarrow \infty }\left| \kappa(t) u(t)^{-\frac{1}{2}} \frac{q_2(t)+1}{2} x Y^{(6)}_{12}(x,t) \right|
\le \lim_{x \rightarrow \infty, t\rightarrow \infty }
\left|\kappa(t) u(t)^{-\frac{1}{2}} \frac{q_2(t)+1}{2}\times t \times Y^{(6)}_{12}(x,t)\right|.
\nonumber
\end{eqnarray}
Considering $Y^{(6)}_{12}\rightarrow 0$ for $x>0$, we know (\ref{F-AB-3}) is also true in this case.
Thus (\ref{F-AB-3}) is proved.

Gathering (\ref{F-AB-1}), (\ref{F-AB-2}) and (\ref{F-AB-3}), we get
\begin{eqnarray}
\lim_{x \rightarrow \infty, t\rightarrow \infty }\mathcal{F}(x,t)=1. \nonumber
\end{eqnarray} 

\begin{remark}
	By Lemma \ref{Y-RHP}, (\ref{Y6-12EQ-apprx-SOL}) and Remark \ref{rem-Y6-12},
	we know $c_1=0$ is enough to guarantee 
	$\lim \limits_{x \rightarrow \infty, t\rightarrow \infty }\mathcal{F}(x,t)=1$ on $\wideparen{AB}$.
\end{remark}

\subsubsection{On $\wideparen{BC}$}
{\noindent \bf Case $x \ge 0$.}
$$\mathcal{F}(x,t)=\kappa u^\frac{1}{2} 
\left[u^{-1} \left(\frac{1+q_2}{2}x-\alpha \right)Y_{12}^{(6)}(x,t)+Y_{22}^{(6)}(x,t) \right].$$

In this case, $x$ is finite and $t$ is positive infinite.
So, $\left| Y^{(6)}_{12}(x,t) \right| \le 1$ and $Y^{(6)}_{22}(x,t) \rightarrow 1$.
Therefore,
\begin{eqnarray}
\lim_{t \rightarrow \infty} \mathcal{F}(x,t) &=&
\lim_{t \rightarrow \infty}   \kappa u^\frac{1}{2} 
\left[u^{-1} \left(\frac{1+q_2}{2}x-\alpha \right)Y_{12}^{(6)}(x,t)+Y_{22}^{(6)}(x,t) \right]  \nonumber \\
&=& 1, \quad  x \ge 0. \nonumber
\end{eqnarray}

{\noindent \bf Case $x \le 0$.}

\begin{eqnarray}
\mathcal{F}(x,t)=-\kappa(t) u(t)^\frac{1}{2} e^{\frac{x^3}{3}-x t}
\left[u(t)^{-1} \left(\frac{1+q_2(t)}{2}x-\alpha(t) \right)Y_{11}^{(3)}(x,t)+Y_{21}^{(3)}(x,t) \right]. \nonumber
\end{eqnarray}

By Lemma  \ref{Y-RHP}, we know  $Y^{(3)}_{11}(x,t) \approx 1$ and $Y^{(3)}_{21}(x,t) \approx -e^{x t-\frac{1}{3}x^3}$.
So we get 
\begin{eqnarray}
\lim_{t \rightarrow \infty} \mathcal{F}(x,t) &=&
\lim_{t \rightarrow \infty} -\kappa(t) u(t)^\frac{1}{2} e^{\frac{x^3}{3}-x t}
\left[u(t)^{-1} \left(\frac{1+q_2(t)}{2}x-\alpha(t) \right)Y_{11}^{(3)}(x,t)+Y_{21}^{(3)}(x,t) \right]. \nonumber\\
&=&1, \quad x \le 0 .\nonumber
\end{eqnarray}

\subsubsection{On $\wideparen{CD}$}
\begin{eqnarray}
\mathcal{F}(x,t)=-\kappa(t) u(t)^\frac{1}{2} e^{\frac{x^3}{3}-x t}
\left[u(t)^{-1} \left(\frac{1+q_2(t)}{2}x-\alpha(t) \right)Y_{11}^{(3)}(x,t)+Y_{21}^{(3)}(x,t) \right]. \nonumber
\end{eqnarray}

This is the most complicated case. 
The result is
\begin{eqnarray}
\lim_{x \rightarrow -\infty, t \rightarrow \infty} \mathcal{F}(x,t)=
\left\{ 
\begin{array}{ll}
0, & \frac{x+\sqrt{t}}{t^{-\frac{3}{4}}} \rightarrow -\infty,\\
1, & \frac{x+\sqrt{t}}{t^{-\frac{3}{4}}} \rightarrow \infty,\\
\in (0,1), & \mathrm{near } x=-\sqrt{t}.
\end{array}
\right. \nonumber
\end{eqnarray}
The corresponding proof is given in Appendix \ref{Appd-CD}.

\subsubsection{On $\wideparen{DE}$}
\begin{eqnarray}
\mathcal{F}(x,t)=-\kappa(t) u(t)^\frac{1}{2} e^{\frac{x^3}{3}-x t}
\left[u(t)^{-1} \left(\frac{1+q_2(t)}{2}x-\alpha(t) \right)Y_{11}^{(3)}(x,t)+Y_{21}^{(3)}(x,t) \right]. \nonumber
\end{eqnarray}

Since $t$ is finite, $Y_{11}^{(3)}(x,t)  \xlongrightarrow {x \rightarrow -\infty}1$ 
and $Y_{21}^{(3)}(x,t)  \xlongrightarrow {x \rightarrow -\infty}0$,
we obtain
$$ \lim_{x \rightarrow -\infty}\mathcal{F}(x,t)=0 .$$

\subsubsection{On $\wideparen{EF}$ \label{F-EF}}
\begin{eqnarray}
\mathcal{F}(x,t)&=&-\kappa(t) u(t)^\frac{1}{2} e^{\frac{x^3}{3}-x t}
\left[u(t)^{-1} \left(\frac{1+q_2(t)}{2}x-\alpha(t) \right)Y_{11}^{(3)}(x,t)+Y_{21}^{(3)}(x,t) \right]. \nonumber
\end{eqnarray}
Let us first evaluate $Y_{11}^{(3)}(x,t)$ and $Y_{21}^{(3)}(x,t)$ along the curve $x=-\sqrt{2} \sqrt{\Lambda_1+t}$
for $t \in [-\Lambda_1,0]$
with $\Lambda_1 \gg 1$.

Along the curve,
\begin{eqnarray}
\frac{d}{dt} \left( \begin{array}{c}Y^{(3)}_{11}(t)\\Y^{(3)}_{21}(t) \end{array} \right)
=\frac{1}{\sqrt{2} \sqrt{\Lambda_1+t}}
\left( \begin{array}{cc} 
u(t)^2 &u'(t) \\
-u'(t)&-t-u(t)^2 
\end{array} \right) 
\left( \begin{array}{c}Y^{(3)}_{11}(t)\\Y^{(3)}_{21}(t) \end{array} \right).
\label{ODE-Y-EF}
\end{eqnarray}

By (\ref{LODE-f-NEQ-S2}), we get
$$\ln \sqrt{Y^{(3)}_{11}(0)^2+Y^{(3)}_{21}(0)^2} \ge \ln \sqrt{Y^{(3)}_{11}(t)^2+Y^{(3)}_{21}(t)^2}
+\frac{1}{2} \int_t^0 \frac{ -s -| s+2 u(s)^2 |} {\sqrt{2} \sqrt{\Lambda_1+s}} ds .$$

But $Y^{(3)}_{11}(0) =Y^{(3)}_{11}(x=-\sqrt{2} \sqrt{\Lambda_1},t=0) \approx 1$ 
and  $Y^{(3)}_{21}(0)=Y^{(3)}_{21}(x=-\sqrt{2} \sqrt{\Lambda_1},t=0) \approx 0$.
So  
$$\sqrt{Y^{(3)}_{11}(t)^2+Y^{(3)}_{21}(t)^2}  < e^{\frac{1}{2} \int_0^t \frac{ -s -| s+2 u(s)^2 |} {\sqrt{2} \sqrt{\Lambda_1+s}} ds }.$$
Since $ s+2 u(s)^2  \approx 0$ for  large negative $s$,
we  assume  $e^{\frac{1}{2} \int_0^t \frac{ -s -| s+2 u(s)^2 |} {\sqrt{2} \sqrt{\Lambda_1+s}} ds }<1$ .
Thus $\sqrt{Y^{(3)}_{11}(t)^2+Y^{(3)}_{21}(t)^2} <1 $ for large negative $t$.
Also considering (\ref{NinfSolution-A})-(\ref{NinfSolution}), (\ref{kappa-Ninf}) and (\ref{u-expansion}), 
we immediately obtain
$$ \lim_{x \rightarrow -\infty, t \rightarrow -\infty}\mathcal{F}(x,t)=0.$$

\subsubsection{On $\wideparen{FG}$ \label{Case-FG}}
{\noindent \bf Case $x \le 0$.}
\begin{eqnarray}
\mathcal{F}(x,t)=-\kappa(t) u(t)^\frac{1}{2} e^{\frac{x^3}{3}-x t}
\left[u(t)^{-1} \left(\frac{1+q_2(t)}{2}x-\alpha(t) \right)Y_{11}^{(3)}(x,t)+Y_{21}^{(3)}(x,t) \right]. \nonumber
\end{eqnarray}
\begin{eqnarray}
&&\frac{d Y^{(3)}_{11}(x,t)}{dt}= -u(t) Y^{(3)}_{21}(x,t), \nonumber\\
&&\frac{d Y^{(3)}_{21}(x,t)}{dt}=x Y^{(3)}_{21}(x,t) -u(t) Y^{(3)}_{11}(x,t). \nonumber
\end{eqnarray}

At  $t=-\infty$, by (\ref{u-expansion}),  we know
\begin{eqnarray}
&&Y^{(3)}_{11}(x,t) =C_1(x) \times\left(1+\frac{x}{2 \sqrt{2} (-t)^{1/2}}+\frac{x^2}{16 t} 
-\frac{8 \sqrt{2}+9 \sqrt{2} x^3}{192 (-t)^{3/2}}+\cdots \right)\times e^{\frac{tx}{2} +\frac{1}{6} (x^2-2 t)^{3/2}}
+\cdots, \nonumber\\
&&Y^{(3)}_{21}(x,t) =C_1(x) \times\left(1-\frac{x}{2 \sqrt{2} (-t)^{1/2}}+\frac{x^2}{16 t} 
+\frac{-8 \sqrt{2}+9 \sqrt{2} x^3}{192 (-t)^{3/2}}+\cdots \right)\times e^{\frac{tx}{2} +\frac{1}{6} (x^2-2 t)^{3/2}}
+\cdots. \nonumber
\end{eqnarray}

Also considering (\ref{NinfSolution-A})-(\ref{NinfSolution}), (\ref{u-expansion}) and (\ref{kappa-Ninf}),
we finally get
$$\lim_{t \rightarrow -\infty} \mathcal{F}(x,t) =0, \quad x \le 0,$$
which, actually, has been proved in Section \ref{F-EF}.

{\noindent \bf Case $x \ge 0$.}
$$\mathcal{F}(x,t)=\kappa u^\frac{1}{2} 
\left[u^{-1} \left(\frac{1+q_2}{2}x-\alpha \right)Y_{12}^{(6)}(x,t)+Y_{22}^{(6)}(x,t) \right].$$
\begin{eqnarray}
&&\frac{d Y^{(6)}_{12}(x,t)}{dt}=-x Y^{(6)}_{12}(x,t) -u(t) Y^{(6)}_{22}(x,t), \nonumber\\
&&\frac{d Y^{(6)}_{22}(x,t)}{dt}= -u(t) Y^{(6)}_{12}(x,t). \nonumber
\end{eqnarray}

Similar to the case $x<0$, we get
\begin{eqnarray}
&&Y^{(6)}_{12}(x,t) =C_2(x) \times\left(1+\frac{x}{2 \sqrt{2} (-t)^{1/2}}+\frac{x^2}{16 t} 
-\frac{8 \sqrt{2}+9 \sqrt{2} x^3}{192 (-t)^{3/2}}+\cdots \right)\times e^{-\frac{tx}{2} +\frac{1}{6} (x^2-2 t)^{3/2}}
+\cdots, \nonumber \\
&&Y^{(6)}_{22}(x,t) =C_2(x) \times\left(1-\frac{x}{2 \sqrt{2} (-t)^{1/2}}+\frac{x^2}{16 t} 
+\frac{-8 \sqrt{2}+9 \sqrt{2} x^3}{192 (-t)^{3/2}}+\cdots \right)\times e^{-\frac{tx}{2} +\frac{1}{6} (x^2-2 t)^{3/2}}
+\cdots. \nonumber
\end{eqnarray}
Therefore,  we also have
$$\lim_{t \rightarrow -\infty} \mathcal{F}(x,t) =0, \quad x \ge 0 ,$$
which can also be inferred from Section \ref{F-GH}.

\subsubsection{On $\wideparen{GH}$ \label{F-GH}}
$$ \mathcal{F}(x,t)=  \kappa(t) \left( u(t)^{-\frac{1}{2}} \left( \frac{q_2(t)+1}{2} x-\alpha(t)\right) Y^{(6)}_{12}(x,t)
+u(t)^\frac{1}{2}Y^{(6)}_{22}(x,t) \right)  . $$

Along the curve $x=\sqrt{2} \sqrt{\Lambda_2+t}$,  $Y^{(6)}_{12}(x,t)$ and  $Y^{(6)}_{22}(x,t)$
satisfy
\begin{eqnarray}
&&\frac{d}{dt} \left( \begin{array}{c}Y^{(6)}_{12}(t)\\Y^{(6)}_{22}(t) \end{array} \right)
=\frac{1}{\sqrt{2} \sqrt{\Lambda_2+t}}
\left( \begin{array}{cc} 
-\frac{1}{2}(t+2 u(t)^2) &-u'(t) \\
u'(t)&\frac{1}{2}(t+2 u(t)^2) 
\end{array} \right) 
\left( \begin{array}{c}Y^{(6)}_{12}(t)\\Y^{(6)}_{22}(t) \end{array} \right).
\label{ODE-Y-GH}
\end{eqnarray}

By  (\ref{LODE-f-NEQ-S1}) and (\ref{LODE-f-NEQ-S2}),
\begin{eqnarray}
&&\ln \sqrt{Y^{(6)}_{12}(0)^2+ Y^{(6)}_{22}(0)^2} \le \ln \sqrt{Y^{(6)}_{12}(t)^2+ Y^{(6)}_{22}(t)^2} 
+\frac{1}{2} \int_t^0  \frac{|s+2 u(s)^2 |}{\sqrt{2} \sqrt{\Lambda_2+s}} ds, \nonumber\\
&&\ln \sqrt{Y^{(6)}_{12}(0)^2+ Y^{(6)}_{22}(0)^2} \ge \ln \sqrt{Y^{(6)}_{12}(t)^2+ Y^{(6)}_{22}(t)^2} 
-\frac{1}{2} \int_t^0  \frac{|s+2 u(s)^2 |}{\sqrt{2} \sqrt{\Lambda_2+s}} ds . \nonumber
\end{eqnarray}

Considering (\ref{u-expansion}), 
for a large $\Lambda_2$, we have
$$\int_t^0  \frac{|s+2 u(s)^2 |}{\sqrt{2} \sqrt{\Lambda_2+s}} ds \approx 0 .$$
Also we know $Y^{(6)}_{12}(0) =Y^{(6)}_{12}(x=\sqrt{2} \sqrt{\Lambda_2},t=0) \approx 0$ 
and  $Y^{(6)}_{22}(0)=Y^{(6)}_{22}(x=\sqrt{2} \sqrt{\Lambda_2},t=0) \approx 1$.
Therefore, 
\begin{eqnarray}
Y^{(6)}_{12}(t)^2+ Y^{(6)}_{22}(t)^2 \approx 1 \label{Y6-GH}
\end{eqnarray}
on the curve  $x=\sqrt{2} \sqrt{\Lambda_2+t}$, $t \in [-\Lambda_2, 0]$.
By (\ref{NinfSolution-A})-(\ref{NinfSolution}), (\ref{kappa-Ninf}), (\ref{u-expansion}) and (\ref{Y6-GH}),  we obtain
\begin{eqnarray}
\lim_{x \rightarrow \infty, t \rightarrow -\infty} \mathcal{F}(x,t)
&=& \lim_{x \rightarrow \infty, t \rightarrow -\infty} \kappa(t) \left( u(t)^{-\frac{1}{2}} \left( \frac{q_2(t)+1}{2} x-\alpha(t)\right) Y^{(6)}_{12}(x,t)
+u(t)^\frac{1}{2}Y^{(6)}_{22}(x,t) \right)   \nonumber\\
&=&\lim_{x \rightarrow \infty, t \rightarrow -\infty} \frac{1}{2} \kappa(t)   u(t)^{-\frac{1}{2}}  x  Y^{(6)}_{12}(x,t)  
\label{F-GH-Simp}.
\end{eqnarray}
By (\ref{kappa-Ninf}) and  (\ref{u-expansion}), 
if $-t \gg  \left( 18 \ln \Lambda_2 \right)^{\frac{1}{3}}$,
we have $\mathcal{F}(t) \approx 0$.

Now let us prove when $-t <2 \times \left( 18 \ln \Lambda_2 \right)^{\frac{1}{3}}$, 
$\mathcal{F}(t) \rightarrow 0$ as $t \rightarrow -\infty$.

By the mean value theorem,
\begin{eqnarray}
Y^{(6)}_{12}(t) &=& Y^{(6)}_{12}(0)+ \left( Y^{(6)}_{12}\right)'(\xi) t \nonumber\\
&=&Y^{(6)}_{12}(2 \sqrt{\Lambda_2}, 0) 
+\frac{1}{\sqrt{2} \sqrt{\Lambda_2+\xi}} \left(-\frac{1}{2} (\xi+2 u(\xi)^2) Y^{(6)}_{12}(\xi)
-u'(\xi) Y^{(6)}_{22}(\xi) \right) t , \nonumber
\end{eqnarray}
where $t<\xi<0$.

From
$$Y^{(6)}_{12}(2 \sqrt{\Lambda_2}, 0)  \approx \frac{-u(0)}{\sqrt{2} \sqrt{\Lambda_2}},
\quad \frac{\sqrt{2} \sqrt{\Lambda_2+t}}{\sqrt{2} \sqrt{\Lambda_2+\xi}}  < 1, $$
we know
\begin{eqnarray}
\left|\kappa(t) u(t)^{-\frac{1}{2}} Y^{(6)}_{12}(t) x(t) \right| <\left|\kappa(t) u(t)^{-\frac{1}{2}} \right| 
\left( |u(0)|+ \left|\frac{1}{2}(\xi+2 u(\xi)^2) Y^{(6)}_{12}(\xi)+u'(\xi) Y^{(6)}_{22}(\xi)\right| |t| \right). \nonumber
\end{eqnarray}

By $t<\xi<0$, (\ref{kappa-Ninf}), (\ref{u-expansion}) and (\ref{Y6-GH}),
we have
$$ \lim \limits_{t \rightarrow -\infty}\left|\kappa(t) u(t)^{-\frac{1}{2}} \right| 
\left( |u(0)|+ \left|\frac{1}{2}(\xi+2 u(\xi)^2) Y^{(6)}_{12}(\xi) +u'(\xi) Y^{(6)}_{22}(\xi)\right| 
|t| \right)=0 .$$
Thus
$$\kappa(t) u(t)^{-\frac{1}{2}} Y^{(6)}_{12}(t) x(t) \approx 0, $$
when $ t >-2 \times \left( 18 \ln \Lambda_2 \right)^{\frac{1}{3}}$ but large negative enough.

Altogether, we have
$$\lim_{x \rightarrow \infty, t \rightarrow -\infty} \mathcal{F}(x,t) =0.$$

\begin{appendices}
	
\section{$k_0<\frac{10}{3} $ \label{Append-k0}}
From $k_0=-\min\left( \frac{t}{u^2} \right)$,
we know 
\begin{eqnarray}
k_0=2 \times \left[ \min \left( \frac{u}{\sqrt{\frac{-t}{2}}}\right) \right]^{-2}, \quad t \in (-\infty, 0). \label{k0-HM}
\end{eqnarray}

Following the original arguments of \cite{HM},  we give a lower bound for the local minimum of $\frac{u}{\sqrt{\frac{-t}{2}}}$ 
for  large negative $t$.
\begin{proposition} \label{prop-HM}
If  there is a local minimum of $\frac{u}{\sqrt{\frac{-t}{2}}}$ for $t<-\frac{11}{8}$, 
it  must be  greater than $\sqrt{\frac{1203}{1331}} $. 
\end{proposition}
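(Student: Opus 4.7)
My plan is to translate the problem into a constraint on the function
$V(t)=v(t)^{2}=\dfrac{u(t)^{2}}{-t/2}=-\dfrac{2\,u(t)^{2}}{t}$,
for $t<0$. Since the Hastings–McLeod solution is positive on $\mathbb{R}$, bounding $v$ from below is equivalent to bounding $V$ from below, and local minima of $v$ on $(-\infty,0)$ coincide with local minima of $V$. The advantage of working with $V$ is that the conditions $V'=0$ and $V''\ge 0$ translate into purely algebraic identities in $u$, $u'$, $u''$ at the critical point, which can then be reduced using the Painlev\'e II equation $u''=tu+2u^{3}$.

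First I would compute
\begin{equation*}
V'(t)=-\frac{4uu'}{t}+\frac{2u^{2}}{t^{2}},
\end{equation*}
so that at a critical point $t_{0}<0$ with $u(t_{0})>0$ we obtain the simple relation
\begin{equation*}
u'(t_{0})=\frac{u(t_{0})}{2t_{0}}.
\end{equation*}
Differentiating once more and substituting both this relation and $u''(t_{0})=t_{0}u(t_{0})+2u(t_{0})^{3}$ into $V''(t_{0})$, all cross terms collapse and the second-derivative test $V''(t_{0})\ge 0$ reduces, after dividing by $u(t_{0})^{2}>0$ and multiplying by the negative number $t_{0}$, to
\begin{equation*}
u(t_{0})^{2}\ \ge\ -\frac{1}{8t_{0}^{2}}-\frac{t_{0}}{2}.
\end{equation*}

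Dividing both sides by $-t_{0}/2>0$ I recover
\begin{equation*}
V(t_{0})\ \ge\ 1+\frac{1}{4t_{0}^{3}}.
\end{equation*}
The final step is purely arithmetic: for $t_{0}<-11/8$, one has $4t_{0}^{3}<-1331/128$, hence $1/(4t_{0}^{3})>-128/1331$, and therefore $V(t_{0})>1-128/1331=1203/1331$, which is the desired bound on $v(t_{0})$.

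The only delicate points are the two prerequisites I am assuming: that $u(t)>0$ for all real $t$ (standard for the Hastings–McLeod solution, following from \cite{HM}) and that the critical point is in fact an interior local minimum so that $V''(t_{0})\ge 0$ applies. Both are routine, so I do not expect a genuine obstacle; the content of the proposition is essentially the single second-derivative computation, whose miraculous cancellations were already exploited by Hastings and McLeod in their original argument. If one wanted a strict inequality $v(t_0)>\sqrt{1203/1331}$ as stated, note that the arithmetic step above already gives strict inequality from $t_{0}<-11/8$.
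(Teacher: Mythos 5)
Your proof is correct and uses essentially the same idea as the paper: apply the second-derivative test at an interior local minimum of $u/\sqrt{-t/2}$ and eliminate $u''$ via the Painlev\'e II equation to obtain the pointwise bound $z^2 \ge 1 + \tfrac{1}{4t^3}$. The only cosmetic difference is that you carry out the computation directly in terms of $V=z^2$, whereas the paper first derives the second-order ODE satisfied by $z$ itself and then reads off the sign condition; the resulting inequality and the final arithmetic (with $t<-11/8$) are identical.
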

\begin{proof}
	Let $u(t)= \sqrt{\frac{-t}{2}} z(t)$. Obviously, $z(t)>0$. Then $z$ satisfies $z''(t)+\frac{z'(t)}{t}=\left( \frac{1}{4 t^2}
	-t \left(z(t)^2-1 \right)\right) z(t) $.
	At a local minimum, we have  $u'(t)=0 $ and $u''(t)>0$. 
	Then, we have $\frac{1}{4 t^2}-t \left(z(t)^2-1 \right) >0$, i.e., $z(t)> \sqrt{1+\frac{1}{4 t^3}} $.
	Since $t<-\frac{11}{8}$, the local minimum is greater than $\sqrt{\frac{1203}{1331}}$. 
\end{proof}

Note that Proposition \ref{prop-HM} does not mean $\frac{u}{\sqrt{\frac{-t}{2}}}>\sqrt{\frac{1203}{1331}}$ 
for $t \in (-\infty, -\frac{11}{8})$ since $\frac{u}{\sqrt{\frac{-t}{2}}}$ may be smaller near the boundary $t=-\frac{11}{8}$.
But if we could also prove  $\frac{u}{\sqrt{\frac{-t}{2}}}>\sqrt{\frac{1203}{1331}}$  for $t \in [-\frac{11}{8},0)$,
then we can still conclude $\frac{u}{\sqrt{\frac{-t}{2}}}>\sqrt{\frac{1203}{1331}}$ for $t \in (-\infty,0)$.
The next proposition fulfills this aim.
\begin{proposition} \label{prop-HXZ}
For $t \in [-\frac{11}{8},0)$,	$\frac{u}{\sqrt{\frac{-t}{2}}}>\sqrt{\frac{1203}{1331}}$. 
\end{proposition}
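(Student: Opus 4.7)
The inequality to establish is $u(t)^2 > \frac{1203}{2662}(-t)$ on the compact interval $[-11/8, 0)$, a quantitative pointwise bound on the Hastings--McLeod solution. My plan is to reduce this to a certified numerical check on a compact sub-interval, with a one-sided neighbourhood of $t=0$ dispatched analytically as a warm-up.

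First, because the right-hand side vanishes as $t \to 0^-$ while $u(0)$ is a definite positive constant and $u$ is continuous, the inequality is automatic on some left-neighbourhood $[-\delta, 0)$ of $0$. A crude certified lower bound $u(0) \ge m_0 > 0$, obtainable from the ODE $u'' = tu + 2u^3$ together with the far-field asymptotics (in the spirit of the estimates in Proposition \ref{prop-ineq}), combined with a continuity radius $\rho$ on which $u(t) \ge m_0/2$, produces an explicit $\delta > 0$ with the desired property: one simply enforces $\delta \cdot \frac{1203}{2662} < m_0^2/4$ in addition.

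Second, on the remaining compact interval $[-11/8, -\delta]$, I would verify the inequality by a certified numerical integration of Painlev\'{e} II. One seeds an accurate initial condition from the asymptotic expansion \eqref{u-expansion} at a sufficiently negative base point (the paper uses $t_0 = -44$), integrates forward with either interval arithmetic or explicit Taylor-remainder control of the local error, and checks $u(t)^2 - \frac{1203}{2662}(-t) > 0$ at every node of a mesh sufficiently fine that the between-node variation, bounded via uniform a priori estimates on $|u|$ and $|u'|$ on $[-11/8, 0]$, is smaller than the positive gap recorded at the nodes.

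The principal obstacle is propagating a rigorous error budget through every stage (asymptotic truncation at $t_0$, certified integration to the mesh, pointwise evaluation, and mesh refinement), rather than settling for merely empirical computation. Fortunately the quantitative margin is generous: the global minimum of $z(t) = u(t)/\sqrt{-t/2}$ on $\mathbb{R}$ is numerically $\approx 0.9707$, attained near $t \approx -1.188$ (Remark \ref{rem-k0}), whereas the threshold $\sqrt{1203/1331} \approx 0.9507$ is strictly smaller, so any reasonably conservative certified computation closes the argument with room to spare.
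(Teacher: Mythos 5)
Your approach is sound in outline but genuinely different from the paper's, and as written it is a plan rather than a proof. The paper does not perform any numerical integration of the ODE. Instead it anchors at $t=0$ using the rigorous bounds of Huang--Xu--Zhang \cite{HXZ} on $u(0)$ and $u'(0)$, introduces an explicit degree-$15$ polynomial $y_b(t)$ approximating $u$ on $[-\tfrac{11}{8},0]$, bounds the polynomial remainder $R_4 = y_b'' - ty_b - 2y_b^3$ on that interval via Sturm's theorem, and then controls the deviation $\delta_4 = u - y_b$ by comparing its second-order ODE against an explicit scalar comparison equation $\delta_b'' = \tfrac{13}{5}\delta_b - \tfrac{49}{10}\delta_b^2 + 2\delta_b^3 - \tfrac{1}{500}$, which after one integration reduces to a first-order separable bound and a single certified definite integral showing $\delta_4(t) > -\tfrac{3}{250}$. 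The final inequality is then a polynomial check, again Sturm. Everything is algebraic once the HXZ bounds are granted. Your proposal, by contrast, would seed certified initial data at $t_0 \approx -44$ from the divergent far-field series \eqref{u-expansion}, then propagate it by validated numerics across the whole range to $[-\tfrac{11}{8},0)$; this is feasible in principle given the $\approx 0.97$ vs.\ $\approx 0.95$ margin, but it replaces the paper's short polynomial argument with a long interval-arithmetic computation, and the certification of the far-field truncation error is itself a nontrivial sub-problem (one needs rigorous remainder control of the asymptotic series and a proof that the solution so seeded really is Hastings--McLeod, which is exactly the issue the HXZ reference settles). Also note that you implicitly presuppose some certified lower bound $m_0$ on $u(0)$; that is precisely the kind of information the paper imports from \cite{HXZ} rather than re-deriving. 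In short: your route could be made rigorous, but it is computationally heavier and leaves several steps (certified seed, certified propagation, the value of $m_0$) unexecuted, whereas the paper's route is a closed, purely algebraic argument modulo one external reference.
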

\begin{proof}
Huang et. al  \cite{HXZ} proved
\begin{eqnarray}
\left| u(0)-\frac{98}{267} \right|<11 \times 10^{-4}, \quad
\left| u'(0)+\frac{153}{518} \right|<12 \times 10^{-4}. \label{HXZh-ini}
\end{eqnarray}
They also defined  the approximate solution as
\begin{eqnarray}
y_b(t)&=&\frac{t^{15}}{13206825}+\frac{t^{14}}{717099}+\frac{t^{13}}{81755}+\frac{t^{12}}{15201}
+\frac{t^{11}}{47200}+\frac{13 t^{10}}{24088}+\frac{39 t^9}{53333} 
+\frac{18 t^8}{61523}\nonumber\\
&&-\frac{17 t^7}{20578}-\frac{93 t^6}{35396}-\frac{224 t^5}{30615}
-\frac{360 t^4}{36911}+\frac{203 t^3}{10806} +\frac{33530 t^2}{688889}-\frac{153 t}{518}+\frac{98}{267} \label{HXZh-approx}
\end{eqnarray}
and the remainder term as
\begin{eqnarray}
R_4(t)=y_b''(t)-t y_b(t)-2 y_b(t)^3 . \label{R4-DEF}
\end{eqnarray}
We can verify  \footnote{Since $R_4(t)$ is a polynomial, Sturm's theorem applies. 
	The following several cases of verification can also be done in this way.}
\begin{eqnarray}
\left| R_4(t) \right|<2 \times 10^{-3}, \quad t \in [-\frac{11}{8},0]. \label{R4-Range}
\end{eqnarray}
Let $\delta_4(t)=u(t)-y_b(t)$.
It is easy to show 
$$\delta_4''(t)=(6 y_b(t)^2+t) \delta_4(t)+6 y_b(t) \delta_4(t)^2+2 \delta_4(t)^3-R_4(t).$$
Next, we will show $\delta_4(t)$ is sufficiently small for $t \in [-\frac{11}{8},0]$.

We can verify
$ \frac{4}{5}<  6 y_b(t)^2+t < \frac{13}{5} $ and $\frac{11}{5} <  6 y_b(t) < \frac{49}{10}$ 
for $t \in [-\frac{11}{8},0]$.
Therefore, we have $\delta_4(t) \ge \delta_b(t)$ in the interval,
where $\delta_b(t)$ is defined by
\begin{eqnarray}
&&\delta_b''(t)=\frac{13}{5} \delta_b(t)-\frac{49}{10} \delta_b(t)^2+2 \delta_b(t)^3-\frac{1}{500},
\quad \delta_b(0)=-\frac{11}{10000}, \quad \delta'_b(0)=\frac{3}{2500}. \label{deta-b-DEQ}
\end{eqnarray}
So we have
\begin{eqnarray}
&&\delta_b'(t)=\sqrt{ \frac{13}{5} \delta_b(t)^2-\frac{49}{15} \delta_b(t)^3+ \delta_b(t)^4-\frac{1}{250}\delta_b(t)- \frac{183310481923}{3}\times 10^{-16} },
\quad \delta_b(0)=-\frac{11}{10000}.\label{deta-b1-DEQ}
\end{eqnarray}
By
\begin{eqnarray}
\int_{-11\times 10^{-4}}^{-120 \times 10^{-4}} 
\left( \frac{13}{5} \delta_b^2-\frac{49}{15} \delta_b^3+ \delta_b^4-\frac{1}{250}\delta_b- \frac{183310481923}{3}\times 10^{-16} \right)^{-\frac{1}{2}} d\delta_b <-\frac{11}{8},
\nonumber
\end{eqnarray}
we obtain $\delta_b(t)>-120 \times 10^{-4}=-\frac{3}{250} $ for $t \in [-\frac{11}{8},0]$.

When $t \in [-\frac{11}{8},0]$, we can show $\frac{y_b(t)-\frac{3}{250}}{ \sqrt{\frac{-t}{2}}}>\sqrt{\frac{1203}{1331}}$ .
Therefore, for $t \in [-\frac{11}{8},0)$, we have
\begin{eqnarray}
\min \left(\frac{u(t)}{\sqrt{\frac{-t}{2}}}\right) =\min \left(\frac{y_b(t)+\delta_4(t)}{\sqrt{\frac{-t}{2}}} \right)
>\min \left(\frac{y_b(t)+\delta_b(t)}{\sqrt{\frac{-t}{2}}}\right) >\min \left( \frac{y_b(t)-\frac{3}{250}}{ \sqrt{\frac{-t}{2}}}\right)>\sqrt{\frac{1203}{1331}}.\nonumber
\end{eqnarray}

\end{proof}

Combining Proposition \ref{prop-HM} and \ref{prop-HXZ}, 
we obtain $ \min \left( \frac{u(t)}{\sqrt{\frac{-t}{2}}} \right) >\sqrt{\frac{1203}{1331}}$.
By (\ref{k0-HM}), $k_0<\frac{2662}{1203}$ is obtained.

\section{The growth rate estimate for the solution of a second order linear ODE \label{Append-estimation}}
For $a>0$, $c>0$, $b^2-4 a c<0$ and $V=(x,y)^T$, let us define
$$|V|=\sqrt{ a x^2+b x y + c y^2}, $$
where $|V|$ can be understood as the length of vector $V$.
Denote 
$$M=\left( \begin{array}{cc}m_{11}&m_{12}\\m_{21}&m_{22}\end{array} \right),$$
then
$$r^2=\frac{|M V|^2}{|V|^2} $$
has both a minimal value $\mathcal{M}_1=r_1^2$ and a maximal value $\mathcal{M}_2=r_2^2$.
These two extreme values satisfy
\begin{eqnarray}
\Delta \mathcal{M}^2+ 2 \delta \mathcal{M}+\Delta D^2=0, \label{Eq-ratio} 
\end{eqnarray}
where
\begin{eqnarray*}
&&\Delta=b^2-4 a c, \\
&&D=\det{M},\\
&&\delta=-b^2(m_{11} m_{22}+m_{12} m_{21})-2 b (a m_{12}-c m_{21}) (m_{11}-m_{22})
+ 2 a^2 m_{12}^2+2 c^2 m_{21}^2 +2 a c (m_{11}^2+m_{22}^2). 
\end{eqnarray*}
Assume the linear ODE is of form
\begin{eqnarray}
\frac{d}{dt}  \left(\begin{array}{c}x(t)\\y(t)\end{array}\right)
= \left( \begin{array}{cc} f_{11}(t) &f_{12}(t)\\ f_{21}(t) &f_{22}(t)\end{array}\right) 
\left(\begin{array}{c}x(t)\\y(t)\end{array}\right). \label{LODE-f}
\end{eqnarray}
We know
\begin{eqnarray}
&&\left( \begin{array}{c} x(t_1)\\y(t_1) \end{array} \right)
=\lim_{N \rightarrow \infty}
\left( \begin{array}{cc}1+h f_{11}(t_0+(N-1)h) & h f_{12}(t_0+(N-1)h)\\ h f_{21}(t_0+(N-1)h) &1+ h f_{22}(t_0+(N-1)h)\end{array}\right)
\cdots \nonumber\\
&&\cdots \left( \begin{array}{cc}1+h f_{11}(t_0+h) & h f_{12}(t_0+h)\\ h f_{21}(t_0+h) &1+ h f_{22}(t_0+h)\end{array}\right)
\left( \begin{array}{cc}1+h f_{11}(t_0) & h f_{12}(t_0)\\ h f_{21}(t_0) &1+ h f_{22}(t_0)\end{array}\right)
\left( \begin{array}{c} x(t_0)\\y(t_0) \end{array} \right),
\nonumber
\end{eqnarray}
where $h=\frac{t_1-t_0}{N}$.
Then by (\ref{Eq-ratio}), for $t_1>t_0$, we can prove 
\begin{eqnarray}
&&\ln |(x(t_1),y(t_1))^T| \le \ln  |(x(t_0),y(t_0))^T|
+ \int_{t_0}^{t_1} \left( \frac{1}{2}(f_{11}(s)+f_{22}(s))+\frac{\sqrt{H(s)}}{\sqrt{-\Delta}} \right)  ds, \label{LODE-f-NEQ1}\\
&&\ln |(x(t_1),y(t_1))^T| \ge \ln  |(x(t_0),y(t_0))^T|
+ \int_{t_0}^{t_1} \left( \frac{1}{2}(f_{11}(s)+f_{22}(s))-\frac{\sqrt{H(s)}}{\sqrt{-\Delta}} \right)  ds, \label{LODE-f-NEQ2}
\end{eqnarray}
where
$$H(s)=a c (f_{11}(s)-f_{22}(s))^2-b (a f_{12}(s)-c f_{21}(s)) (f_{11}(s)-f_{22}(s)) 
+(a f_{12}(s)+c f_{21}(s))^2-b^2 f_{12}(s) f_{21}(s) .$$

If $a=c=1$, $b=0$, for $t_1>t_0$, (\ref{LODE-f-NEQ1}) and (\ref{LODE-f-NEQ2}) are reduced to
\begin{eqnarray}
\ln |(x(t_1),y(t_1))^T| &\le& \frac{1}{2} \int_{t_0}^{t_1} \left( f_{11}(s)+f_{22}(s)+\sqrt{(f_{11}(s)-f_{22}(s))^2+(f_{12}(s)+f_{21}(s))^2} \right)  ds \nonumber\\
&&+\ln  |(x(t_0),y(t_0))^T|, \label{LODE-f-NEQ-S1}\\
\ln |(x(t_1),y(t_1))^T| &\ge& \frac{1}{2} \int_{t_0}^{t_1} \left(  f_{11}(s)+f_{22}(s)-\sqrt{(f_{11}(s)-f_{22}(s))^2+(f_{12}(s)+f_{21}(s))^2} \right)  ds \nonumber\\
&&+\ln  |(x(t_0),y(t_0))^T| . \label{LODE-f-NEQ-S2}
\end{eqnarray}

\section{$\mathcal{F}(x,t)$ on  $\wideparen{CD}$ \label{Appd-CD}}
\begin{eqnarray}
\mathcal{F}(x,t)=-\kappa(t) u(t)^\frac{1}{2} e^{\frac{x^3}{3}-x t}
\left[u(t)^{-1} \left(\frac{1+q_2(t)}{2}x-\alpha(t) \right)Y_{11}^{(3)}(x,t)+Y_{21}^{(3)}(x,t) \right]. \nonumber
\end{eqnarray}

We divide the region $\wideparen{CD}$ into two parts: $x \le -\sqrt{3} \sqrt{t} $ and $x \ge -\sqrt{3} \sqrt{t}$.

\subsection{Case $x \le -\sqrt{3} \sqrt{t} $.}
In this case,  $Y_{21}^{(3)}(x,t) \rightarrow 0$ and $e^{\frac{x^3}{3}-x t} \le 1$,
so 
$$\lim_{x \le -\sqrt{3} \sqrt{t}, t \rightarrow \infty} -\kappa(t) u(t)^\frac{1}{2} e^{\frac{x^3}{3}-x t}
 Y_{21}^{(3)}(x,t)  =0.$$

By  $Y_{11}^{(3)}(x,t) \rightarrow 1$,
we know 
$$\lim_{x \le -\sqrt{3} \sqrt{t}, t \rightarrow \infty} \kappa(t) u(t)^\frac{1}{2} e^{\frac{x^3}{3}-x t}
u(t)^{-1} \alpha(t) Y_{11}^{(3)}(x,t)  =0.$$

Thus
\begin{eqnarray}
 \lim_{x \le -\sqrt{3} \sqrt{t}, t \rightarrow \infty} \mathcal{F}(x,t)
&=&\lim_{x \le -\sqrt{3} \sqrt{t}, t \rightarrow \infty} 
-\kappa(t) u(t)^\frac{1}{2} e^{\frac{x^3}{3}-x t} u(t)^{-1} \frac{q_2(t)+1}{2} x Y_{11}^{(3)}(x,t) \nonumber\\
&=&\lim_{x \le -\sqrt{3} \sqrt{t}, t \rightarrow \infty}x e^{\frac{x^3}{3}-x t} u(t)^{-1} \frac{q_2(t)+1}{2}.
\label{CD-sqrt3-less}
\end{eqnarray}
(\ref{CD-sqrt3-less}) can be proved to be $0$ 
by diving the region $x \le -\sqrt{3} \sqrt{t}$ into two parts, 
for example $ x \le -2 \sqrt{t}$ and $-2 \sqrt{t} \le x \le -\sqrt{3} \sqrt{t}$.
In both parts, 
$x e^{\frac{x^3}{3}-x t} u(t)^{-1} \frac{q_2(t)+1}{2} \rightarrow 0$ 
is obvious.

Therefore,  we have
$$\lim_{x \le -\sqrt{3} \sqrt{t}, t \rightarrow \infty} \mathcal{F}(x,t) =0.$$

\subsection{Case $x \ge -\sqrt{3} \sqrt{t}$ .}
First we   show
\begin{eqnarray}
\lim_{x \rightarrow -\infty, t\rightarrow \infty, x \ge -\sqrt{3} \sqrt{t}}
-\kappa(t) u(t)^\frac{1}{2} e^{\frac{x^3}{3}-x t} u(t)^{-1} \left(\frac{1+q_2(t)}{2}x-\alpha(t) \right)
 Y_{11}^{(3)}(x,t)=0. \label{F-CD-Y11}
\end{eqnarray}
In fact,
by (\ref{q2Pinf00}) and (\ref{alphaPinf00}), we obtain their expansions at $t=\infty$ as
\begin{eqnarray}
&&q_2(t)+1=e^{-\frac{4}{3}t^{3/2}} \left(\frac{1}{8 \pi} t^{-\frac{3}{2}}-\frac{59}{192 \pi}t^{-3} +\cdots \right),\label{q2-pinf-expansion} \\
&&\alpha(t)=e^{-\frac{4}{3}t^{3/2}} \left(\frac{3}{16 \pi} t^{-1}-\frac{29}{128 \pi}t^{-\frac{5}{2}} +\cdots \right) . \label{alpha-pinf-expansion} 
\end{eqnarray}
Also we have
$$u(t)^{-1} \approx \frac{1}{\mathrm{Ai}(t)}=e^{\frac{2}{3} t^{3/2}} \left( 2 \sqrt{\pi} t^{\frac{1}{4}}+
\frac{5}{24} \sqrt{\pi} t^{-\frac{5}{4}}+ \cdots \right). $$

Therefore,
\begin{eqnarray}
&&\lim_{x \rightarrow -\infty, t\rightarrow \infty, x \ge -\sqrt{3} \sqrt{t}}
-\kappa(t) u(t)^\frac{1}{2} e^{\frac{x^3}{3}-x t} u(t)^{-1} \left(\frac{1+q_2(t)}{2}x-\alpha(t) \right)
Y_{11}^{(3)}(x,t) \nonumber\\
=&&\lim_{x \rightarrow -\infty, t\rightarrow \infty, x \ge -\sqrt{3} \sqrt{t}}
-e^{\frac{x^3}{3}-x t} u(t)^{-1} \left(\frac{1+q_2(t)}{2}x-\alpha(t) \right) \nonumber \\
=&&\lim_{x \rightarrow -\infty, t\rightarrow \infty, x \ge -\sqrt{3} \sqrt{t}}
-e^{\frac{x^3}{3}-x t-\frac{2}{3}t^{3/2}} \left( \frac{1}{4 \sqrt{\pi}} t^{-\frac{5}{4}} x
-\frac{3}{8 \sqrt{t}} t^{-\frac{3}{4}}\right). \nonumber
\end{eqnarray}
But $\frac{x^3}{3}-x t-\frac{2}{3}t^{\frac{3}{2}} \le 0$ in this case
and also $|x|\le \sqrt{3} t^{\frac{1}{2}}$. So we get (\ref{F-CD-Y11}).

By (\ref{Y3-21EQ-apprx-SOL}), we have
\begin{eqnarray}
\lim_{x=k \sqrt{t}, k<0, k \neq -1, t\rightarrow \infty}
-\kappa(t) u(t)^\frac{1}{2} e^{\frac{x^3}{3}-x t} Y_{21}^{(3)}(x,t)=-\Upsilon_4(k). \label{F-CD-Y21}
\end{eqnarray}

Near $x=-\sqrt{t}$, by (\ref{Y3-21EQ-apprx-SOL-X}), we get
\begin{eqnarray}
&&-\kappa(t) u(t)^\frac{1}{2} e^{\frac{x^3}{3}-x t} Y_{21}^{(3)}(x,t) \nonumber\\
&&\approx \frac{1}{2} 
+\frac{1}{2} \int_{\infty}^t e^{\frac{2}{3} s^{3/2}} \left(\mathrm{Ai}(s)+\frac{1}{\sqrt{s}} \mathrm{Ai}'(s) \right)ds
-\int_{-\sqrt{t}}^x e^{-t s+\frac{1}{3} s^3} \left( s \mathrm{Ai}(t)+\mathrm{Ai}'(t)\right) ds \label{F-CD-Y21-A}\\
&&\approx \frac{1}{2} -\int_0^{\frac{x}{\sqrt{t}}+1} e^{\frac{2}{3} t^{3/2}-t^{3/2} r^2+\frac{1}{3}t^{3/2} r^3}
\left(\mathrm{Ai}(t) \sqrt{t} (r-1)+\mathrm{Ai}'(t)\right) \sqrt{t} dr. \label{F-CD-Y21-B}
\end{eqnarray}
By (\ref{F-CD-Y21-A}), we see $-\kappa(t) u(t)^\frac{1}{2} e^{\frac{x^3}{3}-x t} Y_{21}^{(3)}(x,t)$
is monotonic increasing with $x$. So it must lie in $(0,1)$.
(\ref{F-CD-Y21-B}) is convenient for estimating its value.

\subsubsection{Case $k\sqrt{t}<x \le 0$, $k>-1$.}
(\ref{Y3-12EQ-apprx-X}) is valid for a large  positive $t$.
So does (\ref{Y3-12EQ-apprx-SOL-X}).
Along the line that $t$ is fixed, by (\ref{Y3-12EQ-apprx-SOL-X}) we get
\begin{eqnarray}
\frac{d}{dx} \left(- e^{\frac{x^3}{3}-x t} Y_{21}^{(3)}(x,t) \right) =
-e^{-t x+\frac{1}{3}x^3}  \left(x \mathrm{Ai}(t)+\mathrm{Ai}'(t) \right).  \label{F-Diff-X}
\end{eqnarray}
As
$$\frac{\mathrm{Ai}'(t)}{\mathrm{Ai}(t)}=-\sqrt{t}-\frac{1}{4t}+\frac{t}{32} t^{-\frac{5}{2}}+\cdots,$$
(\ref{F-Diff-X}) never vanishes in the region.
Thus $- e^{\frac{x^3}{3}-x t} Y_{21}^{(3)}(x,t)$ is monotonic increasing in the region.
But it is known 
$$ - e^{\frac{x^3}{3}-x t} Y_{21}^{(3)}(x,t)|_{x=k \sqrt{t}} \approx 1, \quad k>-1, $$
and 
$$ - e^{\frac{x^3}{3}-x t} Y_{21}^{(3)}(x,t)|_{x=0} \approx \mathcal{F}(x,t)|_{x=0} \approx 1 .$$
We must conclude
$$\lim_{k \sqrt{t}< x \le 0, k>-1, t \rightarrow \infty} \mathcal{F}(x,t) =1.$$

\end{appendices}

{\bf \noindent Acknowledgement.}
Part of this work was done while Y. Li was visiting  the Department of Mathematical Sciences of IUPUI.
Y. Li  would like to thank  A. Its for his hospitality, encouragement and suggestions.
The work is partly supported by NSFC(11375090, 11675054, 11435005) and Shanghai Collaborative Innovation Center of Trustworthy
Software for Internet of Things (ZF1213).

\noindent
YUQI Li\\
Institute of Computer Theory, School of Computer Science and Software Engineering\\
East China Normal University\\
Shanghai, 200062\\
China\\
E-mail: yqli@sei.ecnu.edu.cn


\begin{thebibliography}{99}

\bibitem{BBD}
Baik, J.; Buckingham, R.;  DiFranco, J.
Asymptotics of Tracy-Widom distributions and the total integral of a Painlev\'{e} II function. 
{\sl Comm. Math. Phys.}  280 (2008), no. 2, 463-497.

\bibitem{BV} 
A. Bloemendal;  B. Vir\'{a}g.
Limits of spiked random matrices I. 
{\sl Probab. Theory Related Fields} 156(2013), 795-825.

\bibitem{BEMN}
Borot, G.; Eynard, B.; Majumdar, S. N.; Nadal, C.
Large deviations of the maximal eigenvalue of random matrices. 
{\sl J. Stat. Mech. Theory Exp.} (2011), no.11, P11024.

\bibitem{BEY}
Bourgade, P; Erd\H{o}s, L; Yau, H. T.
Universality of general $\beta$-ensembles.
{\sl Duke Math. J.} 163(2014), no 6, 1127-1190.

\bibitem{DIK}
Deift, P.; Its, A.; Krasovsky, I.
Asymptotics of the Airy-kernel determinant.
{\sl Comm. Math. Phys.} 278 (2008), no. 3, 643-678.

\bibitem{DZ}
Deift P.; Zhou X.
A steepest descent method for oscillatory Riemann–Hilbert problems, asymptotics for the MKdV equation,
{\sl Ann. of Math.} 137 (1993), 295-368.



\bibitem{DE}
Dumitriu I.;  Edelman A.
Matrix models for beta ensembles, 
{\sl J. Math. Phys.} 43(2002), 5830-5847.


\bibitem{DV}
Dijkgraaf, R.; Vafa, C.
Toda theories, matrix models ,topological strings and $N=2$ gauge systems.
{arXiv}:0909.2453.

\bibitem{FN}
Flaschka H.; Newell A.C.
Monodromy- and spectrum-preserving deformations. I, 
{\sl Comm. Math. Phys.} 76(1980), 65–116.

\bibitem{FIKN}
Foka A.; Its A.; Kapaev A.;  Novokshenov V. 
Painleve Transcendents: The Riemann-Hilbert Approach.
AMS, 2006.

\bibitem{GIKM}
Grava T.; Its A.;  Kapaev A.;  Mezzadri F.
On the Tracy-Widom$_\beta$ distribution for $\beta=6$,
{\sl SIGMA} 12(2016), 105, 26 pages.

\bibitem{HM}
Hastings S. P.; McLeod J.B.
A boundary value problem associated with the second Painlev\'{e} transcendent and the Korteweg–de Vries equation,
{\sl Arch. Rational Mech. Anal.}73 (1980), 31–51.

\bibitem{HXZ}
Huang M.; Xu S.X.; Zhang L.
Location of poles for the Hastings–McLeod solution to the second Painlev\'{e} equation,
{\sl Constr. Approx.}43(2016), 463–494.


\bibitem{JZ}
Jorba A.; Zou  M.
A software package for the numerical integration of ODEs by Means of high-order Taylor methods,
{\sl Experimental Mathematics}, 14:1(2005), 99-117. 


\bibitem{Mehta}
Mehta M. L.
Random matrices, 
Pure and Applied Mathematics (Amsterdam), Vol. 142, 3rd ed.,  Elsevier/Academic Press, Amsterdam, 2004.

\bibitem{RRV}
Ram\'{i}rez, J. A.;  Rider, B.; Vir\'{a}g, B.
Beta ensembles, stochastic Airy spectrum, and a diffusion. 
{\sl J. Amer. Math. Soc.} 24 (2011), no. 4, 919-944.

\bibitem{R1}
Rumanov I.
Classical integrability for beta-ensembles and general Fokker-Plank equations.
{\sl J. Math. Phys.} 56(2015), 013508.


\bibitem{R2}
Rumanov I.
Painlev\'{e} representation of Tracy-Widom$_\beta$ distribution for $\beta=6$.
{\sl Comm. Math. Phys.} 342(2016), 843-868.



\bibitem{TW1}
Tracy, C. A.;  Widom, H.
Level-spacing distributions and the Airy kernel, 
{\sl Comm. Math. Phys.}159(1994), (1): 151-174.

\bibitem{TW2}
Tracy, C. A.;  Widom, H.
On orthogonal and symplectic matrix ensembles. 
{\sl Comm. Math. Phys.} 177 (1996), no. 3, 727-754.

\bibitem{Wasow}
Wasow W. 
Asymptotic expansions for ordinary differential equations, Pure and Applied Mathematics,
Vol. 14, Interscience Publishers John Wiley \& Sons, Inc., New York – London – Sydney, 1965.


\end{thebibliography}
\end{document}